\newcommand{\old}[1]{}
\newtheorem{theorem}{Theorem}[section]
\newtheorem{lemma}[theorem]{Lemma}
\newtheorem{definition}[theorem]{Definition}
\newtheorem{remark}[theorem]{Remark}
\newcommand{\ee}{{\rm e}}
\newcommand{\E}{\mathbb{E}}
\renewcommand{\Pr}{\mathbb{P}}
\newcommand{\Var}{\operatorname{Var}}
\newcommand{\Eoutper}{E^{\mathrm{out}}_{\mathrm{per}}}
\newcommand{\Einper}{E^{\mathrm{in}}_{\mathrm{per}}}
\newcommand{\Eper}{E_{\mathrm{per}}}
\newcommand{\ignore}[1]{}
\newcommand{\remove}[1]{}
\newcommand{\reals}{\mathbb{R}}
\newcommand{\sssp}{\mbox{{\tt sssp}}}
\newcommand{\Spira}{\mbox{{\tt Spira}}}
\newcommand{\extractmin}{\mbox{{\tt extract}-{\tt min}}}
\newcommand{\decreasekey}{\mbox{{\tt decrease}-{\tt key}}}
\newcommand{\INSERT}{\mbox{{\tt insert}}}
\newcommand{\extmin}{\mbox{{\tt extract}-{\tt min}}}
\newcommand{\minkey}{\mbox{{\tt min}}}
\newcommand{\forward}{\mbox{{\tt forward}}}
\newcommand{\request}{\mbox{{\tt request}}}
\newcommand{\NEXT}{\mbox{\tt next}}
\newcommand{\nextout}{\mbox{{\tt next}-{\tt out}}}
\newcommand{\append}{\mbox{\tt append}}
\newcommand{\backward}{\mbox{{\tt backward}}}
\newcommand{\current}{\mbox{{\tt current}}}
\newcommand{\ADVANCE}{\mbox{{\tt advance}}}
\newcommand{\RESET}{\mbox{{\tt reset}}}
\newcommand{\FALSE}{{\tt false}}
\newcommand{\TRUE}{{\tt true}}
\newcommand{\NOT}{{\tt not}}
\newcommand{\AND}{{\tt and}}
\newcommand{\OR}{{\tt or}}
\newcommand{\EXP}{\operatorname{\textsc{Exp}}}
\newcommand{\Poi}{\operatorname{\textsc{Poisson}}}
\newcommand{\Geo}{\operatorname{\textsc{Geometric}}}
\newcommand{\NB}{\operatorname{\textsc{NB}}}
\newcommand{\BIN}{\operatorname{\textsc{B}}}
\renewcommand{\G}{{\cal K}}
\newcommand{\R}{\mathbb{R}}
\renewcommand{\bot}{\texttt{\textup{nil}}}
\def\clap#1{\hbox to 0pt{\hss#1\hss}}
\begin{document}

\title{\ \clap{A forward-backward single-source shortest paths algorithm
\footnote{An extended abstract of this article appeared in \textit{Proc.\ IEEE 54th Annual Symposium on Foundations of Computer Science}, pages 707--716, 2013.}
}\ }
\author{
\begin{tabular}{c}
\href{http://dbwilson.com}{David B. Wilson}\\
\small Microsoft Research
\end{tabular}
\and
\begin{tabular}{c}
\href{http://www.cs.tau.ac.il/~zwick/}{Uri Zwick}
\footnote{Work supported by grant no.\ 2012338 of the United States -- Israel Binational Science Foundation (BSF)
and by the Israeli Centers of Research Excellence (I-CORE) program (Center No.\ 4/11).}
\\
\small Tel Aviv University
\end{tabular}
}

\date{}

\maketitle

\begin{abstract}
We describe a new \emph{forward-backward\/} variant of Dijkstra's and
Spira's Single-Source Shortest Paths (SSSP) algorithms. While
essentially all SSSP algorithm only scan edges \emph{forward}, the new
algorithm scans some edges \emph{backward}.  The new algorithm assumes
that edges in the outgoing and incoming adjacency lists of the
vertices appear in non-decreasing order of weight. (Spira's algorithm
makes the same assumption about the outgoing adjacency lists, but
does not use incoming adjacency lists.)  The running time of the
algorithm on a complete directed graph on $n$ vertices with
independent exponential edge weights is $O(n)$, with very high
probability. This improves on the previously best result of $O(n\log n)$,
which is best possible if only forward scans are allowed, exhibiting
an interesting separation between forward-only and forward-backward
SSSP algorithms.  As a consequence, we also get a new all-pairs
shortest paths algorithm.  The expected running time of the algorithm
on complete graphs with independent exponential edge weights is
$O(n^2)$, matching a recent algorithm of Demetrescu and Italiano as
analyzed by Peres \textit{et al}.  Furthermore, the probability that
the new algorithm requires more than $O(n^2)$ time is
\emph{exponentially\/} small, improving on the $O(n^{-1/26})$
probability bound obtained by Peres \textit{et al.}
\end{abstract}

\section{Introduction}\label{S-intro}

\subsection{Shortest Paths}
The \emph{Single-Source Shortest Paths\/} (SSSP) problem, which calls for the computation of
a tree of shortest paths from a given vertex in a
directed or undirected graph with non-negative edge weights,
is one of the most important and most studied algorithmic graph problems. The classical algorithm of Dijkstra \cite{Di59}, implemented with an appropriate priority queue data structure, e.g., Fibonacci heaps (Fredman and Tarjan \cite{FrTa87}), solves the problem in $O(m+n\log n)$ time, where~$m$ is the number of edges and~$n$ is the number of vertices in the graph. For \emph{undirected\/} graphs with non-negative \emph{integer\/} edge weights, Thorup \cite{Thorup99} obtained an algorithm that runs in $O(m+n)$ time.

The running time of Dijkstra's algorithm is almost linear in the size of the input graph, and the running time of Thorup's algorithm is linear. Can we hope for a \emph{sublinear\/} time algorithm for the SSSP problem? In general the answer is of course ``no'', as an SSSP algorithm must examine essentially all the edges of the graph. There are, however, some interesting settings in which the input graph undergoes an initial \emph{preprocessing\/} phase after which it may be possible to solve the SSSP problem in sublinear time. We consider here a particularly simple such preprocessing phase that sorts the edges in the adjacency lists of the vertices of the graph in non-decreasing order of weight.

The \emph{All-Pairs Shortest Paths\/} (APSP) problem calls for the solution of the SSSP problem from every vertex of the input graph. It can clearly be solved in $O(mn+n^2\log n)$ time by running Dijkstra's algorithm from every vertex. Pettie \cite{Pettie04} improved this running time to $O(mn+n^2\log\log n)$. The problem can be solved in $O(mn)$ time, if the graph is undirected and the edge weights are integral, by running Thorup's \cite{Thorup99} algorithm from each vertex.

\subsection{Average case results}\label{sub:average}
Many authors considered the \emph{average case\/} complexity of the SSSP and APSP problems. Perhaps the simplest setting for such studies is the case of a complete directed graph on~$n$ vertices in which the weight of each edge is an independent \emph{exponential\/} random variable.
We denote this probabilistic model by $\G_n(\EXP(1))$.
Hassin and Zemel \cite{HaZe85} and Frieze and Grimmett \cite{FrGr85} gave simple algorithms that solve the APSP problem, when the input graph is drawn from $\G_n(\EXP(1))$, in $O(n^2\log n)$ expected time.

Spira \cite{Spira73} initiated the study of the expected running time of SSSP and APSP algorithms in a much more general probabilistic model, now referred to as the \emph{end-point independent\/} model.
The input graph in this model is a complete directed graph on~$n$ vertices.
Each vertex~$v$ has a (deterministic or stochastic) process that generates $n-1$ non-negative edge weights. These edge weights are randomly permuted and assigned to the outgoing edges of~$v$.
The process associated with each vertex is \emph{arbitrary}; different vertices may have different processes. Spira \cite{Spira73} gave an APSP algorithm whose expected running time in this model is $O(n^2\log^2 n)$. Spira's algorithm first applies the sorting preprocessing step described above and then solves each SSSP problem in $O(n\log^2 n)$ expected time.

Spira's result was improved by several authors. Takaoka and Moffat \cite{TaMo80} improved the running time to $O(n^2\log n\log\log n)$. Bloniarz \cite{Bloniarz83} improved it to $O(n^2\log n\log^*n)$. Finally, Moffat and Takaoka \cite{MoTa87} and Mehlhorn and Priebe \cite{MePr97} (see also recent simplifications by Takaoka and Hashim \cite{MR2783521,Takaoka12}) improved the running time to $O(n^2\log n)$. All these algorithms, like Dijkstra's and Spira's algorithm use only the \emph{outgoing\/} adjacency lists of the graph. They all start by sorting the outgoing adjacency lists and then running an SSSP algorithm from each vertex. The fastest algorithms above solve each SSSP problem in $O(n\log n)$ expected time. Mehlhorn and Priebe \cite{MePr97} showed that for the endpoint independent model, $\Omega(n\log n)$ expected time is best possible for algorithms that can only access the (sorted) outgoing adjacency lists of the graph.

Peres \textit{et al.} \cite{PSSZ} recently revisited the more basic $\G_n(\EXP(1))$ model, in which edge weights are independent exponential random variables.  They showed that Demetrescu and Italiano's dynamic APSP algorithm \cite{DeIt04a,DeIt06}, when used with bucket-based priority queues, has expected running time $O(n^2)$, which is clearly optimal.  They also showed that the running time is $O(n^2)$ except with probability $O(n^{-1/26})$ \cite{PSSZ}.  This algorithm is markedly different from all algorithms mentioned above as it does not sort the edge weights and then solve an SSSP problem from each vertex.  Rather, it finds all distances in the graph ``simultaneously''.

\subsection{A new SSSP algorithm}
Can the SSSP problem be solved in $O(n)$ time, assuming that the edge weights are independent exponential random variables and that the adjacency lists of the graph are given in sorted order?  Adapting the argument of Mehlhorn and Priebe \cite{MePr97} we show that any SSSP algorithm that can only access the (sorted) outgoing adjacency lists of the input graph, which is assumed to be drawn from the directed graph $\G_n(\EXP(1))$, must examine $\Omega(n\log n)$ edges, with high probability.
As it is not clear how incoming adjacency lists could be used to speed up SSSP algorithms, this seems to give a negative answer to the question.

Surprisingly, we show however that the SSSP problem on $\G_n(\EXP(1))$ \emph{can\/} be solved in $O(n)$ time,
with very high probability, beating the above $\Omega(n\log n)$ lower bound, by a \emph{forward-backward\/} algorithm that uses both the outgoing and incoming (sorted) adjacency lists of the input graph.
(If the graph is undirected, then of course it suffices to examine just the outgoing adjacency lists.)
 To the best of our knowledge, incoming adjacency lists were never used before by SSSP algorithms.
Although we only analyze our new algorithm in an ideal probabilistic model, we believe that suitable variants of the new algorithm may be used to speed up SSSP computations in more realistic settings.

We develop the new $O(n)$ time SSSP algorithm in two steps. In the first step, which is by far the more challenging step, and where most of the novelty in this paper lies, we devise and analyze an SSSP algorithm that \emph{scans\/} (or \emph{examines\/}) only $O(n)$ edges of the graph, with very high probability. On average, the algorithm examines a constant number of edges incident to each vertex. As explained, it is crucial here that the algorithm is allowed to examine both the outgoing and incoming adjacency lists of each vertex. In the second, and more standard step, we show that the algorithm can be implemented to run in $O(n)$ time.
On average, the algorithm is only allowed to perform a constant number of operations per edge examined. As essentially all Dijkstra-like SSSP algorithms, our new algorithm uses a \emph{priority-queue\/} data structure. To get an $O(n)$-time implementation we need to perform priority-queue operations in $O(1)$ expected amortized cost. We show that this is possible in our setting using relatively simple \emph{bucket-based\/} priority queues.

\subsection{A new APSP algorithm}
One setting in which the sortedness assumption of the adjacency lists may be justified is that of solving the APSP problem.
In the APSP setting, we can afford to spend $O(n^2)$ time on bucket-sorting the adjacency lists. We can then solve an SSSP problem from each vertex of the graph in $O(n)$ time, getting an $O(n^2)$-time algorithm for solving the APSP problem on $\G_n(\EXP(1))$. This matches the recent result of Peres \textit{et al.} \cite{PSSZ}. The new $O(n^2)$-time APSP algorithm is very different from the algorithm of Peres \textit{et al.} \cite{PSSZ}, which does not simply run an SSSP algorithm from each vertex. Furthermore, while the APSP algorithm of \cite{PSSZ} runs in $O(n^2)$ expected time, it was only shown in \cite{PSSZ} that the probability that it requires more than $O(n^2)$ time is $O(n^{-1/26})$. We show here, on the other hand, that the probability that our new algorithm requires more than $O(n^2)$-time is \emph{exponentially\/} small, specifically it is at most $\exp(-\Theta(n/\log n))$.

\subsection{On the probabilistic model}
For simplicity and concreteness, we stated all our results in the $\G_n(\EXP(1))$ probabilistic model. Exponential edge weights are convenient to work with due to their memoryless property.
However, the assumption that edge weights are drawn from an exponential distribution can be greatly relaxed. It has long been known (see, e.g., \cite{HaZe85} and the discussion in \cite{PSSZ}) that results obtained for the exponential distribution apply to most continuous non-negative distributions with a positive density at~0. In particular, as discussed in Section~\ref{sec:other}, all our results apply immediately also to the \emph{uniform\/} distribution on $[0,1]$ (except that exponential tail bounds are not as tight). We go here one step further and show that our results also apply when edge weights are \emph{powers\/} of exponential random variables, i.e., when they are of the form $\EXP(1)^s$, where $0<s\le 1$. (We conjecture that the same is true also when $s>1$.)
Thus, our probabilistic model is not as specialized as may seem at first sight.

\subsection{Related results}
\emph{Bidirectional\/} algorithms, which perform a forward search from a source vertex and a backward search from a target vertex, can be used to efficiently find a shortest path between a given pair of vertices (see, e.g., Nicholson \cite{Nicholson66} and Pohl \cite{Pohl71}).
Luby and Ragde \cite{LuRa89} used such a bidirectional algorithm to show that a shortest path from a given source to a given target in
$\G_n(\EXP(1))$ can be found in $O(\sqrt{n}\log n)$ expected time. (They again assume, of course, that the adjacency lists are given in sorted order.) However, the bidirectional search technique does not seem to be applicable for the SSSP problem where distances to all vertices are sought. (Where do we start the backward search from?) Our new \emph{forward-backward\/} algorithm is \emph{not\/} bidirectional. It is a Dijkstra-like unidirectional algorithm that uses some backward scans.

Meyer \cite{Meyer03}, Hagerup \cite{Hagerup06} and Goldberg \cite{Goldberg08} obtained SSSP algorithms with an expected running time of $O(m)$.
The $m$-edge directed input graph may be arbitrary but its edge weights are assumed to be chosen at random from a common non-negative probability distribution. When the edge weights are independent, the running time of these algorithms is $O(m)$ with high probability. Our result differs considerably from these results. Our algorithm runs in $O(n)$ time, which is $o(m)$, on a complete graph with $m=\Omega(n^2)$ edges.

\subsection{Organization of paper}

The rest of this paper is organized as follows. In the next section we briefly review the classical algorithms of Dijkstra and Spira which form the basis of our new algorithms. We also discuss several related algorithms. Before presenting our improved algorithm for finding shortest paths, we present in Section~\ref{S-verify} an improved algorithm for \emph{verifying\/} that a given tree is indeed a tree of shortest paths. This helps us explain the ideas behind the improved algorithm for finding shortest paths in the simplest possible setting. In Section~\ref{S-SSSP} we then present our improved forward-backward shortest paths algorithm. The probabilistic analysis of the new algorithm is given in Section~\ref{sec:probabilistic-analysis} when the edge costs are exponential random variables, and in Section~\ref{sec:other} for the uniform and other distributions which have positive finite density at $0$.
In Section~\ref{sec:Weibull} we extend the analysis to the case in which edge weights are powers of exponential variables, also known as \emph{Weibull\/} random variables.
For the all-pairs shortest paths problem, we show in Section~\ref{S-all-pair} that the sorting of edge costs to build the sorted adjacency lists can be done in $\Theta(n^2)$ time, except with probability exponentially small in $\Theta(n^2)$.
In Section~\ref{S-bucket} we describe an efficient bucket-based implementation of the priority queues used by our new algorithm.
In Section~\ref{S-lower} we give the lower bound for forward-only algorithms.
We end in Section~\ref{S-concl} with some concluding remarks and open problems.

\section{The algorithms of Dijkstra and Spira}\label{S-Spira}


\newcommand{\SPIRA}{
\parbox[t]{3in}{

\SetAlgoFuncName{Algorithm}{anautorefname}
\begin{function}[H]
\DontPrintSemicolon
\SetAlgoRefName{}

\BlankLine
$S\gets \{s\}$ ; $P\gets \varnothing$ \;

\BlankLine
\ForEach{$v\in V$}
{
    $d[v]\gets \infty$ \;
    $p[v]\gets \bot$ \;
    $\RESET(Out[v])$ \;
}

\BlankLine
$d[s]\gets 0$ \;
$\forward(s)$ \;

\BlankLine
\While{$S\ne V$ \AND\ $P\ne\varnothing$} {

    \BlankLine
    $(u,v) \gets \extmin(P)$ \;
    $\forward(u)$ \;

    \BlankLine
    \If{$v\notin S$}
    {
        \BlankLine
        \CommentSty{// \rm New distance found}\;
        \BlankLine

        $d[v] \gets d[u]+c[u,v]$ \;
        $p[v] \gets u$ \;
        $S \gets S\cup \{v\}$ \;
        \BlankLine
        $\forward(v)$ \;
    }

}

\caption{\Spira(\mbox{$G=(V,Out,c),s$})}
\end{function}
}}


\newcommand{\FORWARDSPIRA}{
\parbox[t]{3in}{

\begin{function}[H]

\DontPrintSemicolon
\SetAlgoRefName{}

\BlankLine
$v\gets \NEXT(Out[u])$ \;

\If{$v\ne \bot$}
{
    $\INSERT(P,(u,v),d[u]+c[u,v])$ \;
}

\caption{\forward{}($u$)}
\end{function}
}}

In this section we review the classical SSSP algorithms of Dijkstra and Spira and set the stage for the description of our new SSSP algorithm. We also mention some related algorithms.

\subsection{Dijkstra's algorithm}\label{sub:Dijkstra}

We start with a brief review Dijkstra's algorithm \cite{Di59} for finding a tree of shortest paths from a given source vertex~$s$ in a directed graph $G=(V,E)$ with a non-negative weight (or cost) function $c:E\to\reals^+$ defined on its edges.

Dijkstra's algorithm maintains for each vertex~$v$ a \emph{tentative\/} distance $d[v]$, which is the length of the shortest path from~$s$ to~$v$ discovered so far. Initially $d[s]=0$ while $d[v]=\infty$ for every $v\ne s$. It also maintains a set $S\subseteq V$ which contains vertices whose distance from~$s$ was already found. Initially $S=\varnothing$. Finally, it also maintains a \emph{priority queue\/} $P$ that holds all vertices in $V\setminus S$ whose tentative distance is finite. The key of each vertex in~$P$ is its tentative distance. The algorithm starts by inserting $s$ into~$P$.

In each iteration, Dijkstra's algorithm removes from~$P$ a vertex~$u$ with a smallest tentative distance. The tentative distance~$d[u]$ of~$u$ is then guaranteed to be the distance from~$s$ to~$u$ in the graph, so $u$ is added to~$S$. In addition to that, all \emph{outgoing\/} edges of~$u$ are \emph{relaxed}, i.e., for each outgoing edge $(u,v)\in E$, the algorithm checks whether $d[u]+c[u,v]<d[v]$. If so, then a shorter path to~$v$ was found and the tentative distance of~$v$ is changed to $d[u]+c[u,v]$. If~$v$ is not already in~$P$, it is inserted into~$P$, with key $d[v]$. If $v$ is already in~$P$, then its key is decreased to~$d[v]$.

Dijkstra's algorithm examines each edge of the graph at most once. It inserts at most~$n$ vertices into the priority queue~$P$, and performs at most~$m$ \emph{decrease-key\/} operations and at most~$n$ \emph{extract-min\/} operations, where $n=|V|$ and $m=|E|$. With suitable data structures the running time of Dijkstra's algorithm is $O(m+n\log n)$ time.

\subsection{Spira's algorithm}\label{sub:Spira}

Spira's algorithm \cite{Spira73} attempts to improve on Dijkstra's
algorithm when the outgoing edges of each vertex~$u$ in the graph are
given in \emph{non-decreasing\/} order of weight. In such a setting, a
tree of shortest paths may potentially be found without scanning all
the edges of the graph.

When Dijkstra's algorithm finds the distance to a vertex~$u$, i.e.,
when~$u$ is removed from the priority queue~$P$, it immediately scans
(and relaxes) all its outgoing edges. Spira's algorithm adopts a
lazier approach.  It scans the outgoing edges of~$u$ one by one.  The
algorithm scans an outgoing edge $(u,v)$ only after it finds all
vertices whose distance from~$s$ is smaller than $d[u]+c[u,v']$, where
$(u,v')$ is the edge preceding $(u,v)$ in the adjacency list of~$u$.
To achieve that, the priority queue~$P$ used by Spira's algorithm
holds \emph{edges\/} rather than vertices.  The key of an edge $(u,v)$
in~$P$ is $d[u]+c[u,v]$. Also, if $(u,v)$ is in~$P$ then $d[u]$ is
already set to the correct distance from~$s$ to~$u$.

Spira's algorithm again maintains a set~$S\subseteq V$ that contains all vertices whose distance from~$s$ was already determined. Initially $S=\{s\}$. If $v\in S$, then $d[v]$ is the distance from~$s$ to~$v$. If $v\ne S$, then $d[v]=\infty$. If $v\in S\setminus\{s\}$, then $(p[v],v)$ is the last edge on a path of length~$d[v]$ from~$s$ to~$v$. Initially $d[s]=0$ while $d[v]=\infty$ for every $v\ne s$, and $p[v]=\bot$, for every $v\in V$.

Spira's algorithm starts by scanning the \emph{first\/} outgoing edge $(s,v)$ of~$s$ and inserting it into the priority queue~$P$ with key $d[s]+c[s,v]=c[s,v]$. In each iteration the algorithm extracts an edge $(u,v)$ with the smallest key $d[u]+c[u,v]$ from~$P$. If $(u,v)$ is not the last outgoing edge of~$u$, then the edge $(u,v')$ that follows it in the adjacency list of~$u$ is inserted into~$P$ with key $d[u]+c[u,v']$.  Now, if $v\notin S$, then $d[u]+c[u,v]$, the key of $(u,v)$, is guaranteed to be the distance to~$v$. Thus, $d[v]$ is set to $d[u]+c[u,v]$, $p[v]$ is set to~$u$, $v$ is added to~$S$, and the first outgoing edge $(v,w)$ of~$v$, if there is one, is scanned and inserted into~$P$.

Note that Spira's algorithm inserts an edge $(u,v)$ into~$P$ even if $v\in S$ or if $P$ already contains and edge $(u',v)$ with $d[u']+c[u',v]<d[u]+c[u,v]$.  When $(u,v)$ is extracted from~$P$, the algorithm knows that it is time to scan the next outgoing edge of~$u$.

\begin{figure}[t]
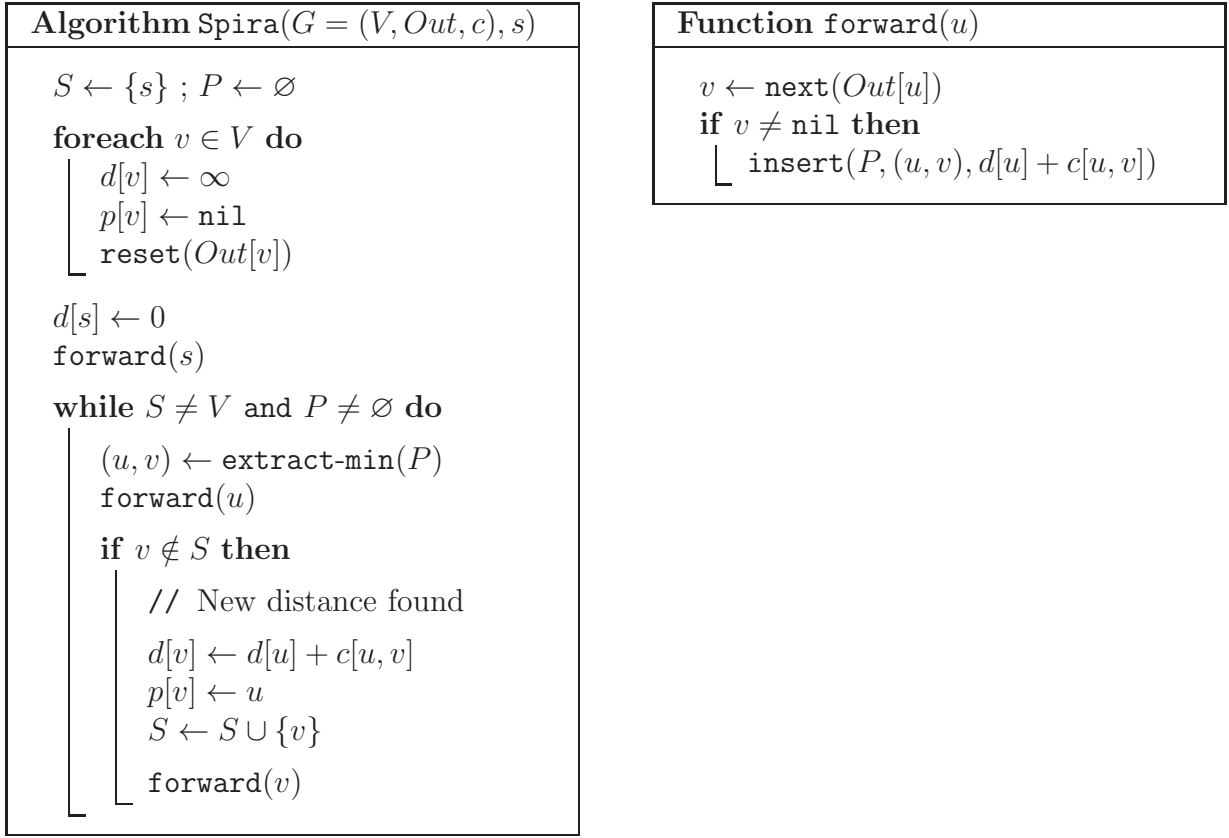

\begin{center}
\SPIRA
\hspace*{20pt}
\FORWARDSPIRA
\end{center}
\vspace*{-10pt} \caption{Spira's algorithm. $\forward(u)$ scans the next outgoing edge of~$u$.} \label{F-Spira}
\end{figure}

Pseudo-code of Spira's algorithm is given in Figure~\ref{F-Spira}. We
discuss it in some detail as most of this code is reused by our
improved algorithm.  Each vertex $u\in V$ has an adjacency list
$Out[u]$ of its outgoing edges, sorted in non-decreasing order of
weight. Although we view $Out[u]$ as a list of edges, each element in
$Out[u]$ is a vertex, the other endpoint of the edge that leaves~$u$.
Each list~$Out[u]$ has a pointer used to sequentially access its
edges. $\RESET(Out[u])$ makes this pointer point to the first edge of
the list, if the list is non-empty.  $\NEXT(Out[u])$ returns the edge
currently pointed to and advances the pointer to the next edge in the
list, or past the end of the list.  If the list is empty, or the
pointer is past the end of the list, then $\NEXT(Out[u])$ returns
$\bot$.

The implementation of Spira's algorithm uses a function $\forward(u)$
that finds the next outgoing edge of~$u$ and inserts it, if it
exists, into~$P$ with key $d[u]+c[u,v]$.  The next outgoing edge is
found by calling $\NEXT(Out[u])$.

Spira \cite{Spira73} analyzed his algorithm in the \emph{end-point independent\/} model mentioned in Section~\ref{sub:average}. Note that $\G_n(\EXP(1))$ is clearly an end-point independent model.

\begin{theorem}[\cite{Spira73}]\label{T-Spira}
Spira's algorithm correctly computes a tree of shortest paths from~$s$
in the input graph $G=(V,E)$.  The expected number of edges scanned by
the algorithm, when edge weights are generated using an end-point
independent process, is at most $(1+o(1))n\log n$.
\end{theorem}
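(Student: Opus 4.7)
\medskip
\noindent\textbf{Proof plan.}

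For correctness, I would maintain two invariants: (i) for every $v\in S$, $d[v]=\mathrm{dist}(s,v)$ and the edges $\{(p[w],w):w\in S\setminus\{s\}\}$ form a tree of shortest paths; and (ii) for every $u\in S$ whose outgoing list has not been exhausted, $P$ contains exactly one outgoing edge of $u$, namely the smallest-weight outgoing edge of $u$ that has not yet been extracted, with key $d[u]+c[u,v]$. Both invariants hold initially, and they are preserved by each iteration provided one can show that when $\extmin$ returns $(u,v)$ with the smallest key in $P$ and $v\notin S$, then $d[u]+c[u,v]=\mathrm{dist}(s,v)$. For this I would argue that any path $\pi$ from $s$ to any vertex $w\notin S$ must leave $S$ through some edge $(u',w')$ with $u'\in S$, $w'\notin S$. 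Because $(u',w')$ has not yet been extracted (otherwise $w'\in S$), the outgoing edge $(u',v')$ of $u'$ currently sitting in $P$ has weight $c[u',v']\le c[u',w']$, so the priority-queue key $d[u']+c[u',v']$ is at most $d[u']+c[u',w']$, which in turn is at most the length of $\pi$. Thus the minimum key $d[u]+c[u,v]$ lower-bounds $\mathrm{dist}(s,w)$ for every $w\notin S$; specializing to $w=v$ gives equality.

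To count edges scanned, let $T$ denote the number of \textbf{while}-loop iterations (equivalently, of $\extmin$ calls). Each iteration calls $\forward$ once, an additional $\forward$ is performed each of the $n-1$ times a new vertex is discovered, and there is one initial $\forward(s)$. Since every $\forward$ call scans at most one edge, the total number of edges scanned is at most $T+n$, so it suffices to establish $E[T]\le(1+o(1))\,n\log n$.

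To bound $E[T]$, I would group the extractions by the value $k=|S|$ at the moment of extraction and let $T_k$ denote the number of extractions performed while $|S|=k$, so that $T=\sum_{k=1}^{n-1}T_k$. The probabilistic input here is end-point independence, which I would invoke via the principle of deferred decisions: the targets of the outgoing edges of $u$ (in sorted weight order) form a uniformly random permutation of $V\setminus\{u\}$, independently across vertices, so I may reveal each individual target only at the instant its edge is scanned. Since every previously revealed target of a vertex $u$ has already been extracted and thus lies in $S$, the pool of as-yet-unrevealed outgoing neighbors of $u$ contains all of $V\setminus S$. If the $j$-th extraction has source $u$ and corresponds to the $i$-th scan from $u$ (with $i\ge 1$), then conditional on the algorithm's history its target lies in $V\setminus S$ with probability exactly $(n-k)/(n-i)\ge (n-k)/(n-1)$. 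Hence during phase $k$ each trial succeeds with probability at least $(n-k)/(n-1)$, and $T_k$ is stochastically dominated by a geometric random variable with that parameter, giving
\[
E[T]\;\le\;\sum_{k=1}^{n-1}\frac{n-1}{n-k}\;=\;(n-1)H_{n-1}\;=\;(1+o(1))\,n\log n.
\]

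The main obstacle is the deferred-decisions step. Because the algorithm's choice of source $u$ and scan index $i$ for the next extraction depends nontrivially on the weights and on previously revealed targets, some care is needed to verify that, conditioned on the entire algorithmic history up to the $i$-th scan from $u$, the target of that scan is indeed uniform over the unrevealed outgoing neighbors of $u$. Once this conditional-uniformity claim is established, the coupon-collector bound above follows from standard stochastic-domination reasoning.
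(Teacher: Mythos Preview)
Your proposal is correct and follows essentially the same route as the paper: a Dijkstra-style crossing-edge argument for correctness, and a stage-by-stage analysis in which each extraction in stage~$k$ hits $V\setminus S$ with probability $(n-k)/(n-i)\ge(n-k)/(n-1)$, giving $\sum_{k=1}^{n-1}(n-1)/(n-k)=(1+o(1))n\log n$. The paper's proof is terser---it simply asserts the probability $(n-k)/(n-i)$ without justifying the conditioning you rightly flag as the main obstacle---so your added care is appropriate; the deferral works cleanly because between the scan of $(u,v)$ and its extraction the algorithm uses only the key $d[u]+c[u,v]$ and never the identity of~$v$, so the target may equivalently be revealed at extraction time.
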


\begin{proof}
  The correctness proof is a simple modification of the correctness
  proof of Dijkstra's algorithm: Whenever an edge $(u,v)$ is extracted
  from the priority queue, unless $v\in S$ already, there is no
  cheaper path to $v$.

  We next bound the expected number of edges
  examined by the algorithm when the edge weights are generated by an end-point independent process.
  We say that
  the algorithm is in stage~$k$ when $|S|=k$. When an edge $(u,v)$ is
  extracted from~$P$ in stage~$k$, the probability that $v\notin S$
  is at least $(n-k)/n$. (More precisely, if $(u,v)$ is the $i$-th
  outgoing edge of~$u$, then this probability is $(n-k)/(n-i)$.) Thus,
  the expected number of edges extracted in stage~$k$ is at most
  $n/(n-k)$, and the expected number of edges extracted during all
  stages is at most $\sum_{k=1}^{n-1} n/(n-k) = (1+o(1)) n\log n$.
  Since the priority queue never has more than~$n$ edges in it,
  the expected number of edge insertions is also at most $(1+o(1)) n \log n$.
\end{proof}

It is not difficult to show, using a slightly more careful analysis, that the expected number of edges scanned by Spira's algorithm, when run on $\G_n(\EXP(1))$, is $(1+o(1))n\log n$.  A $(1+o(1)) n\log n$ lower bound also follows from Theorem~\ref{thm:lower-bound} below.

\subsection{Related algorithms}

The expected number of edges scanned by Spira's algorithm in the end-point independent model is $O(n\log n)$. However, each scanned edge is inserted into the priority queue $P$, and essentially all scanned edges are eventually extracted from~$P$.
Is it possible to lower the number of priority queue operations
without substantially increasing the number of edges scanned?
This is especially interesting if a comparison-based data structure is used to implement~$P$, as the time to extract an edge with the smallest key from~$P$ is then $O(\log n)$ and the total running time of the algorithm becomes $O(n\log^2 n)$.

Dantzig \cite{Dantzig60}, preceding Spira \cite{Spira73} by 13 years, suggested an algorithm that is similar to Spira's algorithm, with one crucial difference. When an edge $(u,v)$ is extracted from~$P$, Spira's algorithm scans the next outgoing edge~$(u,v')$ of~$u$ and inserts it into~$P$ even if $v'\in S$ or $P$ already contains an edge $(u',v')$ with $d[u']+c[u',v']\le d[u]+c[u,v']$. Dantzig's algorithm, on the other hand, starts relaxing the next outgoing edges
$(u,v_1),(u,v_2),\ldots,(u,v_k)$ of~$u$ until it either finds an \emph{improving\/} edge $(u,v_k)$, i.e., an edge satisfying $d[u]+c[u,v_k]<d[v_k]$, or until it exhausts all outgoing edges of~$u$. If an improving edge $(u,v_k)$ is found, then $d[v_k]$ is set to $d[u]+c[u,v_k]$, and only this edge $(u,v_k)$ is inserted into~$P$. Dantzig's algorithm reduces the number of priority queue operations, but greatly increases the number of edges scanned.  The expected number of edges scanned and relaxed by Dantzig's algorithm, when applied to $\G_n(\EXP(1))$ with sorted outgoing adjacency lists, is $\Theta(n^2)$.

Bloniarz \cite{Bloniarz83} obtained a variant of Spira's algorithm that, in the end-point independent model, scans an expected number of $O(n\log n)$ edges and performs an expected number of only $O(n\log^*n)$ priority queue operations. Bloniarz's algorithm uses Dantzig's approach in a very controlled manner. When an edge $(u,v)$ is extracted from~$P$, the algorithm relaxes the next outgoing edges of $u$ until either an improving edge $(u,v')$, i.e., $d[u]+c[u,v']<d[v']$, is found, or until $\log n$ outgoing edges are scanned. If no improving edge is found after scanning $\log n$ edges, the last edge scanned is inserted into~$P$ even if it is not improving.

Moffat and Takaoka \cite{MoTa87} and Mehlhorn and Priebe \cite{MePr97} (see also Takaoka and Hashim \cite{MR2783521,Takaoka12}) obtained further refinements of Bloniarz's algorithm. Their algorithms perform, on average, only $O(n)$ extract-min operations, while still scanning only $O(n\log n)$ edges,
yielding comparison-based algorithms with an expected running time of $O(n\log n)$.

\section{Verifying shortest paths trees}\label{S-verify}

Before considering the problem of \emph{finding\/} a shortest paths tree (SPT), let us consider the easier problem of \emph{verifying\/} that a given tree is indeed a SPT. This will allow us to present the intuition behind the new shortest paths algorithm and explain why the backward scanning of edges is essential for its improved efficiency.

A verification algorithm is an algorithm that receives a weighted directed graph $G=(V,E,c)$, where $c:E\to\reals^+$, and a directed spanning tree~$T$ of~$G$ rooted at a source vertex~$s$. The algorithm should check whether~$T$ is a SPT of~$G$ with source~$s$. The verification algorithm is not allowed to err.

We assume that each vertex~$u$ has an adjacency list $Out[u]$ containing the outgoing edges of~$u$ and an adjacency list $In[u]$ containing the incoming edges of~$u$. Furthermore, we assume that the edges appear in these adjacency lists in non-decreasing order of weight. The tree~$T$ is specified using an array~$p$ of parent pointers. If $s$ is the root of the tree, then $p[s]=\bot$. If $u\ne s$, then $p[u]\ne\bot$ and $(p[u],u)$ is the last edge in the path from $s$ to $u$ within~$T$.

Given a tree~$T$ with root~$s$, let $d[u]$ denote the length of the
path from $s$ to $u$ within the tree.  We have $d[s]=0$, and
$d[u]=d[p[u]]+c[p[u],u]$, for every $u\ne s$, where $c[u,v]$ is the
weight of an edge $(u,v)$.  These formulas lead to an $O(n)$-time
recursive procedure for computing the array $d$ from the array $p$.  By
capping the recursion at depth $n$, we can detect any cycles that might
exist in the graph defined by $p$, and thereby verify that the
array~$p$ specifies a valid tree.

A tree~$T$ is a SPT if and only if $d[u]+c[u,v]\ge d[v]$, or equivalently $c[u,v]\ge d[v]-d[u]$, for every $(u,v)\in E$.

\subsection{A forward-only verification algorithm}

Let $D=\max\{d[u] : u\in V\}$ be the maximal distance in~$T$. The most obvious verification algorithm simply scans the outgoing adjacency list of each vertex~$u$, verifying the condition $c[u,v]\ge d[v]- d[u]$, until it either exhausts the adjacency list of~$u$, or encounters an edge $(u,v)$ for which $c[u,v]\ge D-d[u]$. If $(u,v')$ appears after $(u,v)$ in $Out[u]$, then $c[u,v']\ge c[u,v]\ge D-d[u]\ge d[v']-d[u]$, so $(u,v')$ satisfies the required condition. We refer to this algorithm as the \emph{forward-only\/} verification algorithm.

It is not difficult to verify that the edges examined by this forward-only verification algorithm, when the given tree~$T$ is indeed a tree of shortest paths, are exactly the edges that Spira's algorithm inserts into its priority queue, though not necessarily in the same order. As an immediate corollary of the discussion following Theorem~\ref{T-Spira}, we thus get:

\begin{theorem}\label{T-forward-verify}
The expected number of edges examined by the above forward-only verification algorithm,
when run on a SPT of $\G_n(\EXP(1))$, is $(1+o(1)) n\log n$.
\end{theorem}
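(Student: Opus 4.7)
The plan is to reduce the claim to the $(1+o(1))n\log n$ bound on the number of edges scanned by Spira's algorithm (Theorem~\ref{T-Spira} together with the tightness remark following it) by exhibiting a near-perfect correspondence between the edges examined by the forward-only verification algorithm and the edges inserted into the priority queue by Spira, when both are run on the same graph and the given tree~$T$ is indeed the SPT.

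First I would characterize the verification algorithm at a vertex~$u$: it scans a prefix of $Out[u]$ and stops at the first edge $(u,v)$ with $c[u,v]\ge D-d[u]$, where $D=\max_v d[v]$. Equivalently, it examines exactly the edges of~$u$ whose Spira-key $d[u]+c[u,v]$ is strictly less than $D$, together with the first subsequent edge in $Out[u]$ (if any).

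Next I would analyze the prefix of $Out[u]$ that Spira's algorithm inserts into~$P$. Spira calls $\forward(u)$ (which inserts the next outgoing edge of~$u$) exactly once when $u$ enters~$S$, and once again each time an edge of~$u$ is extracted from~$P$. Since extractions occur in non-decreasing order of key and Spira terminates precisely when $|S|=n$, the largest extracted key equals the maximum tree distance~$D$; moreover, in $\G_n(\EXP(1))$ the edge weights are continuous, so with probability one there are no ties. Hence an edge of~$u$ is extracted if and only if its key is strictly less than~$D$, with a single exception: the unique edge $(u^{*},v_{\mathrm{last}})$ of key exactly $D$ that adds the last-settled vertex $v_{\mathrm{last}}$ to~$S$. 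A short accounting then gives: for every vertex $u\ne u^{*}$, the set of edges Spira inserts from~$u$ coincides exactly with the set the verification algorithm examines at~$u$; for $u=u^{*}$, Spira inserts one additional edge (the next one in $Out[u^{*}]$, whose key strictly exceeds~$D$), inserted at the final iteration.

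Summing over~$u$, the total number of edges examined by the verification algorithm equals the total number of edges inserted by Spira minus exactly one. The discussion immediately following Theorem~\ref{T-Spira} asserts that the expected number of edges Spira scans (equivalently, inserts into~$P$) on $\G_n(\EXP(1))$ is $(1+o(1))n\log n$, and the claim follows. The main delicate point is handling the final iteration of Spira cleanly --- checking that extractions really occur in non-decreasing key order, that the terminal extraction has key exactly~$D$, and that continuity of $\EXP(1)$ eliminates ties --- but once this is in place the correspondence is off by a single edge, which is absorbed into the $o(n\log n)$ error term.
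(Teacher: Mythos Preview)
Your proposal is correct and follows essentially the same route as the paper: the paper's entire argument is the sentence immediately preceding the theorem, which asserts that the edges examined by the forward-only verification algorithm are exactly the edges Spira inserts into its priority queue, and then invokes the discussion after Theorem~\ref{T-Spira}. You are in fact slightly more careful than the paper: the paper claims the two edge sets coincide \emph{exactly}, whereas your accounting correctly shows that at the parent $u^{*}$ of the last-discovered vertex Spira inserts one extra edge (the edge following $(u^{*},v_{\mathrm{last}})$ in $Out[u^{*}]$, inserted by the final call to $\forward(u^{*})$), a discrepancy that is of course swallowed by the $o(n\log n)$ term.
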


In Theorem~\ref{thm:lower-bound} we show that any verification algorithm that only uses the outgoing adjacency lists must inspect an expected number of at least $(1+o(1)) n\log n$ edges when applied to $\G_n(\EXP(1))$. A similar result, for a different randomly weighted graph, was obtained by Mehlhorn and Priebe \cite{MePr97}.

\subsection{A forward-backward verification algorithm}

We next show that by using the incoming adjacency lists as well as the outgoing adjacency lists we can obtain a verification algorithm that runs in $O(n)$ time, with high probability, when given a SPT of $\G_n(\EXP(1))$.
The forward-backward verification algorithm is based on the notion of \emph{pertinent\/} edges.

\begin{definition}[Pertinent edges]
An edge $(u,v)\in E$ is said to be \emph{out-pertinent}, with respect to a given source vertex~$s$ and threshold~$M$, if and only if
\[c[u,v]\leq 2\,(M-d[u])\,.\]
Edge
 $(u,v)$ is said to be \emph{in-pertinent\/} if and only if
\[c[u,v]< 2\,(d[v]-M)\,.\]
(Note that the first inequality is $\leq$ while the second is $<$.)
An edge is said to be \emph{pertinent\/} if it is either out-pertinent or in-pertinent.
We let $\Eoutper$ denote the set of out-pertinent edges, $\Einper$ denote the set of in-pertinent edges, and $\Eper=\Eoutper\cup \Einper$ denote the set of
pertinent edges.
\end{definition}

\begin{remark}
  For any weighted graph and any $M$, every edge in the shortest path tree is either out-pertinent or in-pertinent, and no edge is both.
\end{remark}

The forward-backward verification algorithm sets the threshold~$M$ to be the median distance of the vertices from the source, and then
checks the condition $c[u,v]\ge d[v]-d[u]$ for all pertinent edges, \emph{ignoring\/} all other edges.
The median distance~$M$ can be computed in linear time \cite{find,MR0329916}.
For every vertex~$u$ for which $d[u]\le M$, the
algorithm then checks all outgoing edges of~$u$ of weight at most $2\,(M-d[u])$. This is the \emph{forward\/} scan. For every vertex~$v$ for which $d[v]\ge M$, it then checks all incoming edges of~$v$ of weight less than $2\,(d[v]-M)$. This is the \emph{backward\/} scan.
If all conditions are satisfied, the verification algorithm accepts~$T$. The correctness of the algorithm follows from the following lemma.

\begin{lemma}\label{L-verify}
  If $c[u,v]\ge d[v]-d[u]$ for every pertinent edge $(u,v)$, then $c[u,v]\ge d[v]-d[u]$ for every $(u,v)\in E$.
\end{lemma}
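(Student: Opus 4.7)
The plan is to prove Lemma~\ref{L-verify} by showing the contrapositive-style dichotomy: the hypothesis handles pertinent edges by assumption, so it suffices to check that every \emph{non-pertinent} edge automatically satisfies $c[u,v]\ge d[v]-d[u]$, independently of any property of~$T$.

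First I would unpack the definitions. An edge $(u,v)$ that is not pertinent is neither out-pertinent nor in-pertinent, which, negating the defining inequalities, is exactly
\[
c[u,v] \;>\; 2\,(M-d[u]) \qquad \text{and} \qquad c[u,v] \;\ge\; 2\,(d[v]-M).
\]
(The strictness flips from $\le$ to $>$ in the first and from $<$ to $\ge$ in the second, matching the asymmetry noted in the definition.) These are the only hypotheses on a non-pertinent edge.

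Next I would simply add the two inequalities. The $\pm 2M$ terms cancel, giving
\[
2\,c[u,v] \;>\; 2\,(d[v]-d[u]),
\]
hence $c[u,v] > d[v]-d[u]$, so in particular $c[u,v]\ge d[v]-d[u]$. This is the desired SPT condition, and it holds for every non-pertinent edge regardless of whether~$T$ is actually a shortest paths tree and regardless of whether $d[u]\le M$ or $d[v]\ge M$ (if $d[u]>M$ or $d[v]<M$, one of the two inequalities is trivially automatic, but the sum still gives what we need).

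Combining this with the lemma's hypothesis, which handles all pertinent edges directly, yields $c[u,v]\ge d[v]-d[u]$ for every $(u,v)\in E$. I do not expect any real obstacle here: the definitions of out-pertinent and in-pertinent were chosen precisely so that the two complementary inequalities add to the desired bound, and the only thing to be slightly careful about is that the strict/non-strict flavors line up correctly when negating ``pertinent.''
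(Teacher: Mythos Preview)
Your proof is correct and is essentially identical to the paper's: negate the two pertinence conditions to get $c[u,v]>2(M-d[u])$ and $c[u,v]\ge 2(d[v]-M)$, then add (the paper averages, which is the same thing) to obtain $c[u,v]>d[v]-d[u]$.
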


\begin{proof}
If $(u,v)\notin \Eper$, then
$c[u,v] > 2\,(M-d[u])$ and $c[u,v] \ge 2\,(M-d[v])$. Thus,
\[ c[u,v] \;>\; \frac12 \biggl( 2\,(M-d[u]) + 2\,(d[v]-M) \biggr) \;=\; d[v]-d[u]\,,\]
as required.
\end{proof}

The verification algorithm is correct with any choice of~$M$.  Letting~$M$ be the median distance minimizes the running time in many interesting cases.  For $\G_n(\EXP(1))$ we show in Section~\ref{sec:probabilistic-analysis} that the number of pertinent edges with respect to the median distance is $\Theta(n)$ both in expectation and with high probability.

As a consequence, we get:
\begin{theorem}\label{T-verify}
  The running time of the forward-backward verification algorithm,
  when run on $\G_n(\EXP(1))$ with sorted adjacency lists, is
  $\Theta(n)$, with very high probability.  The probability that the
  running time exceeds $\Theta(n+\Delta)$ decays exponentially in $\Delta$.
\end{theorem}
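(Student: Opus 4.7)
The plan is to separate Theorem~\ref{T-verify} into a deterministic running-time decomposition and a probabilistic tail bound on $|\Eper|$; the latter is the real content and will be established in Section~\ref{sec:probabilistic-analysis}. First, I account for the running time deterministically. Computing $d$ from $p$ takes $O(n)$ by the depth-capped recursive traversal already noted (which also certifies that $p$ encodes a valid tree), and computing the median distance~$M$ takes $O(n)$ by the linear-time selection algorithm cited in the paragraph preceding the lemma. For each $u$ with $d[u]\le M$, the forward scan walks along $Out[u]$ until it encounters the first edge whose weight strictly exceeds $2(M-d[u])$, inspecting every out-pertinent edge emanating from $u$ plus at most one ``stopping'' edge; summing over such $u$ gives at most $|\Eoutper|+n/2$ inspections. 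The backward scan is entirely symmetric and gives at most $|\Einper|+n/2$. Combined with Lemma~\ref{L-verify} for correctness, the verifier runs in time $O(n+|\Eper|)$.

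The theorem therefore reduces to the tail estimate
\[\Pr\bigl[\,|\Eper|>Cn+\Delta\,\bigr]\;\le\;e^{-\Omega(\Delta)}\]
for some absolute constant $C$ and all $\Delta\ge 0$. The remark following Lemma~\ref{L-verify} already announces that such a bound (with $\E|\Eper|=\Theta(n)$) will be proven in Section~\ref{sec:probabilistic-analysis}, so the theorem follows once this tail estimate is in hand.

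The main obstacle is the tail bound itself. The plan is twofold. First, I would establish exponential concentration of the median distance~$M$ around its typical value, using the classical exponential-race description of Dijkstra on $\G_n(\EXP(1))$: by memorylessness, the distance from $s$ to the $k$-th nearest vertex can be expressed as a sum of independent exponentials, giving sharp concentration of any quantile of the distance distribution. Second, I would bound $|\Eper|$ conditionally on the tree and the values $d[u]$, exploiting the fact that the weight of each non-tree edge is, by memorylessness, an independent $\EXP(1)$ conditional on having exceeded the lower bounds imposed by the tree-building process. Then each non-tree edge $(u,v)$ is out-pertinent with probability at most $2(M-d[u])$ and in-pertinent with probability at most $2(d[v]-M)$, essentially independently across edges given the tree. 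Summing these indicators gives $\E|\Eper|=\Theta(n)$, and a Chernoff-type bound yields the required exponential tail. The delicate part will be justifying the conditioning so that the non-tree edges can be treated as genuinely independent exponentials; once that is set up, the remainder reduces to standard concentration inequalities for sums of independent Bernoullis.
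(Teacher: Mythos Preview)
Your deterministic reduction to running time $O(n+|\Eper|)$ is correct and matches the paper exactly; the paper's proof of Theorem~\ref{T-verify} is literally a one-line forward reference to Theorem~\ref{T-pertinent-exp}.

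Your probabilistic plan is in the right spirit but has a small gap and takes a different route from the paper. Concentrating $M$ is not the right target. Once you condition on the tree and all distances $d[u]$, the conditional mean of $|\Eper|$ is essentially the random rate
\[
\Lambda \;=\; (n-1)\sum_{u:d[u]\le M} 2\,(M-d[u]) \;+\; (n-1)\sum_{v:d[v]>M} 2\,(d[v]-M),
\]
and Chernoff on your conditionally independent Bernoullis only gives a tail bound relative to~$\Lambda$, not relative to $n$. To get an unconditional exponential tail you must also prove an exponential tail for $\Lambda$ itself; knowing that $M$ is concentrated does not do this, because $\Lambda$ depends on \emph{all} the distances. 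Your plan is salvageable---the exponential-race description you invoke does express $\Lambda$ as a short sum of scaled independent exponentials---but as written the two steps do not connect.

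The paper's route is cleaner and avoids any separate concentration argument. It dominates the number of non-SPT pertinent edges by $\Poi(\Lambda)$ (associating an independent Poisson point process to each edge), computes $\Lambda$ \emph{exactly} via the Davis--Prieditis/Janson increments as $2(n-1)\sum_{k=\lceil n/2\rceil}^{n-1}\frac{1}{k}\EXP(1)$, crudely bounds each coefficient by~$4$, and then uses the identity $\Poi(\alpha\,\EXP(1))\sim\Geo\!\bigl(\tfrac{1}{1+\alpha}\bigr)$ to reduce $\Poi(\Lambda)$ to a negative binomial $\NB(n,\tfrac15)$, whose tail is a single binomial Chernoff bound. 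This folds the randomness of the distances and the randomness of the non-tree edges into one distributional identity, with no conditioning on a high-probability event and no two-stage argument.
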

\begin{proof}
  This follows from Theorem~\ref{T-pertinent-exp} which we prove in Section~\ref{sec:probabilistic-analysis}.
\end{proof}

\begin{remark}
  For the purposes of verifying a SPT, it suffices to check the edges
  $(u,v)$ for which $(u,v)$ is in the tree, or $c[u,v]<2\,(M-d[u])$, or
  $c[u,v]<2\,(d[v]-M)$.  However, for the purposes of finding the SPT,
  our algorithm will also look at edges for which $c[u,v]=2\,(M-d[u])$,
  and up to $n$ additional edges.
\end{remark}


\newcommand{\SSSP}{
\parbox{3.0in}{
\SetAlgoFuncName{Algorithm}{anautorefname}
\begin{function*}[H]
\DontPrintSemicolon
\SetAlgoRefName{}

\BlankLine
$S\gets \{s\}$ ; $M\gets \infty$ ; $n\gets|V|$ \;
$P\gets \varnothing$ ; $Q\gets \varnothing$ \;

\BlankLine
\ForEach{$v\in V$}
{
    \BlankLine
    $d[v]\gets \infty$ ; $p[v]\gets \bot$ \;
    $out[v]\gets \TRUE$ ;
    $Req[v] \gets \varnothing$ \;
    \BlankLine
    $\RESET(Out[v])$ ;
    $\RESET(Req[v])$ \;
    $\RESET(In[v])$ \;
}

\BlankLine
$d[s]\gets 0$ \;
$\forward(s)$ \;

\BlankLine
\While{$S\ne V$ \AND\ $P\ne\varnothing$} {

    \BlankLine
    $(u,v) \gets \extmin(P)$ \;
    $\forward(u)$ \;

    \BlankLine
    \If{$v\notin S$}
    {
        \BlankLine
        \CommentSty{// \rm New distance found}\;
        \BlankLine
        $d[v] \gets d[u] + c[u,v]$ \;
        $p[v] \gets u$ \;
        $S \gets S\cup \{v\}$ \;
        \BlankLine
        $\forward(v)$ \;

        \BlankLine
        \If{$|S| = \lceil n/2 \rceil$}
        {
            \BlankLine
            $M\gets d[v]$ \;
            \BlankLine
            \ForEach{$w\notin S$}
            {
                $\backward(w)$
            }
        }

    }

        \BlankLine
        \CommentSty{// \rm Find more \emph{in-pertinent\/} edges}\;
        \BlankLine

        \While{$Q\neq\varnothing$ \AND\ $\ \ \ \minkey(Q)< 2\,(\minkey(P){-}M)$}
        {
            \BlankLine
            $(u,v) \gets \extmin(Q)$ \;
            \BlankLine
            \If{$v\notin S$}
            {
                $\backward(v)$ \;
                $\request(u,v)$ \;
            }
        }

}
\caption{\sssp(\mbox{$G=(V,In,Out,c),s$})}
\end{function*}
}}


\newcommand{\REQUEST}{
\parbox{3in}{
\begin{function}[H]
\DontPrintSemicolon
\BlankLine
    $\append(Req[u],v)$ \;
    \If{$u\in S$ \AND\ \NOT\ $active[u]$}{\CommentSty{// \rm urgent request}\;$\forward(u)$}
\caption{\request(\mbox{$u,v$})}
\end{function}
}}


\newcommand{\REQUESTIF}{
\parbox{3in}{
\begin{function}[H]
\DontPrintSemicolon
\BlankLine
\If{$c[u,v] > 2\,(M{-}d[u])$}{
    $\append(Req[u],v)$ \;
    \If{$u\in S$ \AND\ \NOT\ $active[u]$}{\CommentSty{// \rm urgent request}\;$\forward(u)$}
}
\caption{\request{}{}(\mbox{$u,v$})}
\end{function}
}}


\newcommand{\FORWARD}{
\parbox[t]{3in}{

\begin{function}[H]

\DontPrintSemicolon
\SetAlgoRefName{}

\BlankLine
\If{$out[u]$}
{
    \BlankLine
    \CommentSty{// \rm find next \emph{out-pertinent\/} edge}\;
    \BlankLine
    $v\gets \NEXT(Out[u])$ \;
    \If{$v= \bot$ \OR\ $c[u,v]> 2\,(M{-}d[u])$}
    { $out[u]\gets \FALSE$ }
}

\BlankLine
\If{\NOT\ $out[u]$}
{
    \BlankLine
    \CommentSty{// \rm find next \emph{in-pertinent\/} edge}\;
    \BlankLine
    $v\gets \NEXT(Req[u])$ \;
}
\BlankLine
$active[u] \gets v\ne \bot$ \;
\BlankLine
\If{$active[u]$}
{
    $\INSERT(P,(u,v),d[u]+c[u,v])$ \;
}

\caption{\forward($u$)}
\end{function}
}}


\newcommand{\BACKWARD}{
\parbox[t]{3in}{

\begin{function}[H]

\DontPrintSemicolon

\BlankLine
$u\gets \NEXT(In[v])$ \;
\If{$u\ne\bot$}
{
    $\INSERT(Q,(u,v),c[u,v])$ \;
}

\caption{\backward(\mbox{$v$})}
\end{function}
}}


\newcommand{\NEXTOUT}{
\parbox[t]{3in}{

\begin{function}[H]

\DontPrintSemicolon

\BlankLine
$v\gets \current(Out[u])$ \;
\If{$v\ne \bot$ \AND\ $c[u,v]<2\,(M-d[u])$}
{
    $\ADVANCE(Out[u])$ \;
    \Return{$v$}
}

\BlankLine
$v'\gets \current(Req[u])$ \;
\lIf{$v'\ne \bot$}{$\ADVANCE(Req[u])$}
\Return{$v'$}

\caption{\nextout(\mbox{$u$})}
\end{function}
}}


\SetAlgoFuncName{Function}{anautorefname}

\begin{figure*}[t]
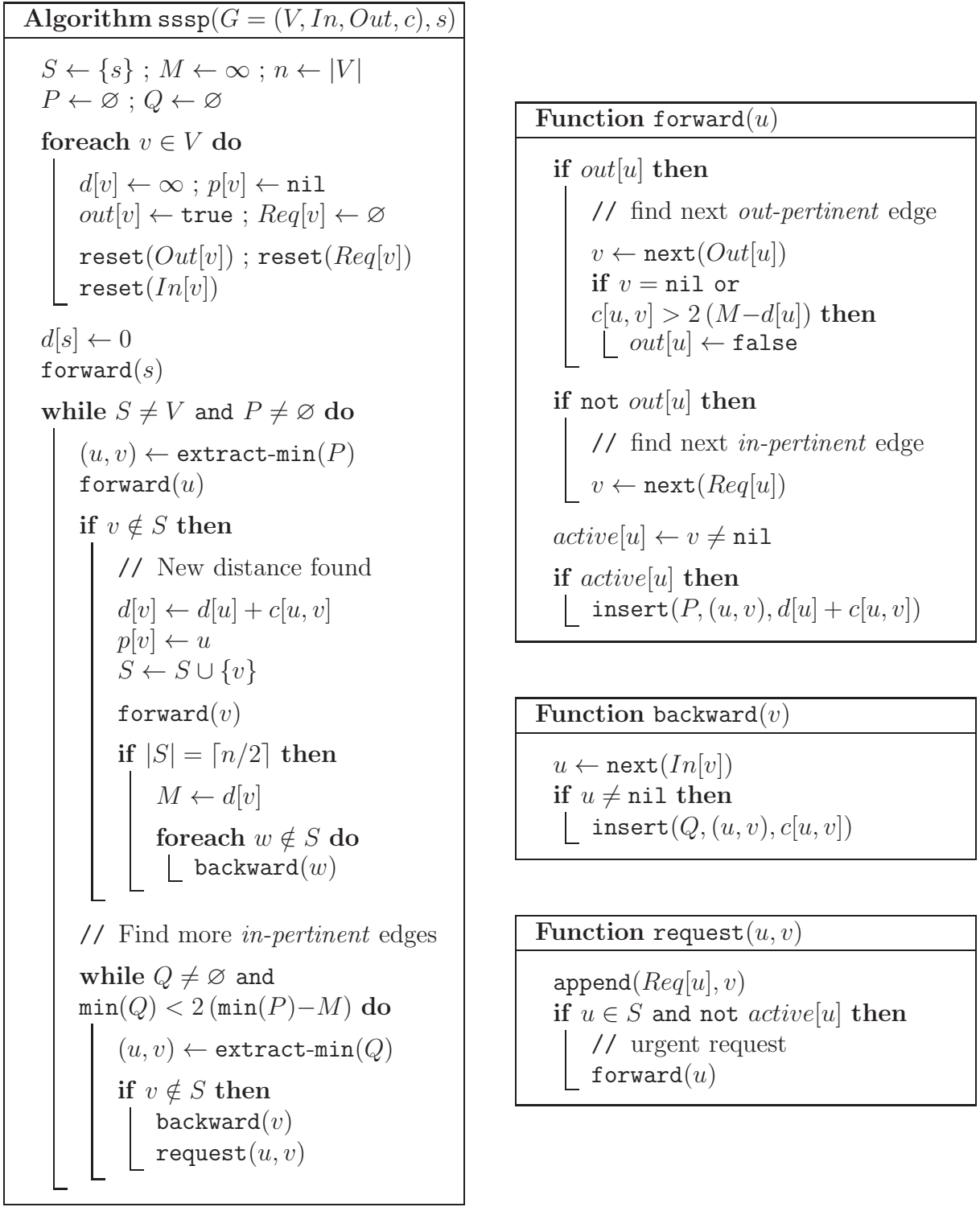

\begin{center}
\SSSP
\hspace*{20pt}
\parbox{3in}{
\FORWARD\\[19pt]
\BACKWARD\\[19pt]
\REQUEST
}
\end{center}
\vspace*{-10pt} \caption{The new forward-backward SSSP algorithm.  The inner \textbf{while} loop is only executed when $|S|\geq\lceil n/2\rceil$.  \forward($u$) scans the next outgoing edge of~$u$, and \backward($v$) scans the next incoming edge of~$v$.
\request($u,v$) requests a forward scan of edge $(u,v)$.
} \label{F-sssp}
\end{figure*}

\section{The forward-backward shortest paths algorithm}\label{S-SSSP}

Our goal in this section is to develop a single-source shortest paths
algorithm that matches the performance of the forward-backward
verification algorithm of the previous section.  To achieve that, a
majority of the edges examined by the algorithm must be
\emph{pertinent}.  Recall that an edge $(u,v)$ is out-pertinent if
$c[u,v]\le 2\,(M-d[u])$, and in-pertinent if $c[u,v]<
2\,(d[v]-M)$. The obvious difficulty is that this definition uses the
distances $d[u]$ and $d[v]$, and the median distance $M$, which are
the quantities we want to compute. Furthermore, we must rule out
almost all non-pertinent edges without even looking at them.

The new algorithm is composed of two stages. In the first stage the algorithm finds distances to the closest $\lceil n/2 \rceil$ vertices, and hence also the median distance $M$. In its first stage, the algorithm behaves exactly like Spira's algorithm described in Section~\ref{sub:Spira}. It is not difficult to check that all edges extracted from priority queue~$P$ during the first stage are out-pertinent. The at
most $\lceil n/2 \rceil$ edges residing in~$P$ at the end of first iteration may not be pertinent. All edges inserted into~$P$ in the second stage would again be pertinent.

The second stage of the algorithm finds the distances to the remaining $\lfloor n/2\rfloor$ vertices. (We assume here, for simplicity, that all vertices in the graph are reachable from~$s$.)
In its second stage, the algorithm mimics the behavior of Spira's algorithm on the graph obtained by removing all non-pertinent edges. The algorithm is correct as we know that non-pertinent edges do not participate in shortest paths. The challenge, of course, is to do it while examining only $O(n)$ non-pertinent edges. Out-pertinent edges are relatively easy to identify, as they appear in the beginning of the sorted outgoing adjacency lists of the vertices. To quickly identify in-pertinent edges, we also need the sorted incoming adjacency lists.

During the second stage, we already know~$M$, the median distance.
An edge $(u,v)$ is scanned forward only after the distance~$d[u]$ to~$u$ is found. Thus, we can easily check whether $(u,v)$ is out-pertinent. If it is not, then all edges following $(u,v)$ in $Out[u]$, the outgoing adjacency list of~$u$, are also not out-pertinent. We thus stop scanning $Out[u]$. However, $u$ may still have outgoing edges that are in-pertinent, and it is essential that we scan them, in the correct order.

Suppose that $(u',v')$ is the last edge extracted so far from~$P$. Let $K=key[u',v']=d[u']+c[u',v']$. As distances are found in non-decreasing order, for every $v\notin S$ we have $d[v]\ge K$. Thus, if $v\notin S$, then all incoming edges of~$v$ of weight less than $2\,(K-M)$ are in-pertinent. As $In[v]$, the incoming adjacency list of~$v$, is sorted, we can easily access these edges. When a new in-pertinent edge $(u,v)$ is discovered, we cannot immediately scan in the forward direction as~$u$ may have cheaper outgoing pertinent edges that were not scanned yet, or more seriously,
a shortest path to~$u$ may not have been found yet, so outgoing edges of~$u$ cannot be scanned yet.
When a new in-pertinent edge $(u,v)$ is discovered, we simply \emph{request\/}~$u$ to forward scan it at the appropriate time in the future. For every vertex~$u$ we maintain a list~$Req[u]$ that contains the requested outgoing edges of~$u$. When a new in-pertinent edge $(u,v)$ is found, it is appended to~$Req[u]$. Edges in $Req[u]$ again appear in non-decreasing order of weight. If~$u$ had exhausted all its outgoing pertinent edges when a new in-pertinent edge~$(u,v)$ is discovered, the request~$(u,v)$ is considered to be \emph{urgent}, and~$(u,v)$ is immediately scanned and added to~$P$.

To make sure that in-pertinent edges are discovered in non-decreasing order of weight, and that the lists $Req[u]$ are sorted, the new algorithm uses a second priority queue~$Q$. Each vertex~$v\notin S$ sends its first incoming edge, which is not yet known to be in-pertinent, to~$Q$. The key of each edge $(u,v)$ in~$Q$ is simply its weight $c[u,v]$. Each iteration of the second stage of the algorithm ends with a loop in which edges of minimum weight are extracted from~$Q$. Each extracted edge is guaranteed to be in-pertinent. When an edge $(u,v)$ is extracted from~$Q$, it is immediately replaced by the edge $(u',v)$ following $(u,v)$ in $In[v]$, if such an edge exists. We refer to this as a \emph{backward scan\/} of the edge~$(u',v)$.

Let $u_k$ be the last vertex added to~$S$, i.e., the vertex with the
largest distance found so far. Let $u_{k+1}$ be the next vertex to be
discovered, i.e., the vertex with the next highest distance from~$s$
(which might be the same distance as $u_k$).  It was mentioned above
that if~$K$ is the key of the last edge extracted from~$P$, then every
edge $(u,v)$ satisfying $v\notin S$ and $c[u,v]<2\,(K-M)$ is
in-pertinent. Note that $d[u_k]\le K\le d[u_{k+1}]$. In fact, the same
claim is true if $K$ is replaced by $d[u_{k+1}]$. The problem, of
course, is that~$u_{k+1}$ and~$d[u_{k+1}]$ are not yet known to the
algorithm. Unfortunately, if we only rely on the condition
$c[u,v]<2\,(K-M)$, the algorithm may fail to realize that the last
edge $(u,u_{k+1})$ on the shortest path to~$u_{k+1}$ is in-pertinent,
and thus fail to correctly identify~$u_{k+1}$.  Instead of using the
condition $c[u,v]<2\,(K-M)$, the algorithm therefore uses the
condition $c[u,v]<2\,(\min(P)-M)$, where $\min(P)$ is the minimal key
in~$P$. Proving that using this condition allows the algorithm to
correctly identify $u_{k+1}$, and that all edges satisfying the
condition are in-pertinent, is one of the subtler points in the
correctness proof of the algorithm.  (Note that $\min(P)$, which may
change as a result of urgent requests, is used here as an ``estimate''
of $d[u_{k+1}]$, even though $\min(P)$ is sometimes larger than
$d[u_{k+1}]$.)

To summarize, the new algorithm mimics the behavior of Spira's algorithm on the subgraph defined by pertinent edges. It does so while examining only $O(n)$ edges of the original graph that are not pertinent. (This is not a probabilistic statement. It holds for any input graph.) The algorithm is composed of two stages. In its first stage, which lasts until distances to the first $\lceil n/2\rceil$ vertices are found, the algorithm behaves exactly like Spira's algorithm. Essentially all edges scanned by the algorithm in the first stage are guaranteed to be out-pertinent. In its second stage, the algorithm explicitly checks the pertinence conditions. It stops scanning outgoing adjacency lists when non out-pertinent edges are reached. To identify in-pertinent edges, the algorithm performs \emph{backward\/} scans from vertices whose distance from~$s$ was not found yet. These backward scans make sure that in-pertinent edges are available in time for forward scans in subsequent iterations of the algorithm.

A full description of the algorithm is given in Section~\ref{SS-detailed}. We prove its correctness in Section~\ref{SS-correctness}, and analyze its performance in Section~\ref{SS-complexity}.

\subsection{Description of the algorithm}\label{SS-detailed}

The input to the algorithm is a weighted directed graph $G=(V,E,c)$, where $c:E\to\reals^+$, and a source $s\in V$.
Each vertex $u\in V$ has a list $Out[u]$ of its outgoing edges and a list $In[u]$ of its incoming edges of~$u$. Both lists are sorted in non-decreasing order of cost. Although we view $Out[u]$ and $In[u]$ as list of edges, each element in them is a vertex, the other endpoint of the edge that leaves or enters~$u$.
Every vertex $u\in V$ also has a second list~$Req[u]$ of outgoing edges. Initially this list is empty. $Req[u]$ contains edges whose scan was specifically \emph{requested}, as explained above. All requested edges are in-pertinent.

Each such $In$, $Out$ or $Req$ adjacency list~$L$ has a \emph{pointer\/} used to sequentially access its edges. $\RESET(L)$ makes this pointer point to the first edge of the list, if the list is non-empty.
$\NEXT(L)$ returns the edge currently pointed to and advances the pointer to the next edge in the list, or past the end of the list. If the list is empty, or the pointer is past the end of the list, then $\NEXT(L)$ returns $\bot$. If an edge is appended to~$L$ when the pointer is past the end of the list, the pointer is set to point to the newly added edge.

For each vertex $u$, the algorithm maintains a bit $out[u]$ which is set if $Out[u]$ may still contain unscanned out-pertinent edges. Initially $out[u]$ is set to $\TRUE$. When the next pertinent outgoing edge of~$u$ is sought, and $out[u]$ is set, the algorithm looks at the next available edge from $Out[u]$. If this edge is out-pertinent, it is used. If it is not out-pertinent, then $out[u]$ is set to $\FALSE$. When $out[u]$ is \FALSE, the next edge from $Req[u]$, if there is one, is used; it is guaranteed to be in-pertinent.

The algorithm maintains a set~$S\subseteq V$ that contains all the vertices $v$ whose distance from~$s$ was already found. Initially $S=\{s\}$. If $v\in S$, then $d[v]$ is the distance from~$s$ to~$v$ in the graph. If $v\in S\setminus\{s\}$, then $(p[v],v)$ is the last edge on a shortest path from~$s$ to~$v$ in the graph. If $v\notin S$, then $d[v]=\infty$ and $p[v]=\bot$. We also have $p[s]=\bot$. The algorithm also maintains the size of the set $S$. (There is actually no need to explicitly maintain $S$. The condition $v\in S$, used below, can be replaced by the condition $d[v]<\infty$, and the condition $S\ne V$ can be replaced by $|S|\ne|V|$. However, it is useful to refer to the set~$S$ by name.)

The algorithm maintains two priority queues $P$ and $Q$. The first priority queue~$P$ is analogous to the priority queue used by Spira's algorithm. The second priority queue~$Q$ is used to identify in-pertinent edges as explained above. At any stage during the operation of the algorithm, each vertex $u$ has at most one outgoing edge~$(u,v)$ in~$P$. Essentially all these edges are pertinent. Similarly, each vertex~$v$ has at most one incoming edge~$(u,v)$ in~$Q$. When $(u,v)$ is inserted into~$Q$ we have $v\notin S$. However, $v$ may be added to~$S$ before~$(u,v)$ is extracted from~$Q$. All edges extracted from~$Q$ are in-pertinent edges.

Pseudo-code of the new forward-backward single-source shortest paths algorithm is given in Figure~\ref{F-sssp}. It starts with straightforward initializations. In particular, $M$ is initialized to~$\infty$. The algorithm uses a function $\forward(u)$ that finds the next pertinent outgoing edge $(u,v)$ of~$u$, if there is one, and inserts it into~$P$. In the first stage of the algorithm all edges are assumed to be pertinent. The algorithm starts by calling $\forward(s)$ to insert the first outgoing edge of~$s$ into~$P$.

$\forward(u)$ works as follows. If $out[u]$ is \TRUE, it uses $\NEXT(Out[u])$ to obtain the next outgoing edge $(u,v)$ from $Out[u]$, if there is one. If~$(u,v)$ exists, it checks whether $c[u,v]\le 2\,(M-d[u])$. If the algorithm is still in its first stage, then $M=\infty$ and the condition is automatically satisfied. If the algorithm is already in its second stage and the condition is satisfied, then $(u,v)$ is out-pertinent. If~$(u,v)$ does not exist, or fails the condition, then $out[u]$ is set to \FALSE, as~$Out[u]$ does not contain additional out-pertinent edges. If~$out[u]$ is \FALSE, $\forward(u)$ uses $\NEXT(Req[u])$ to obtain the next edge~$(u,v)$ from $Req[u]$. If an appropriate edge~$(u,v)$ is found, from either $Out[u]$ or $Req[u]$, then~$u$ is said to be \emph{active}, $active[u]$ is set to \TRUE, and~$(u,v)$ is inserted into~$P$. If no next edge $(u,v)$ is found, then~$u$ is said to be \emph{inactive}, and $active[u]$ is set to \FALSE.

The operation of algorithm is composed of \emph{iterations}.  Each iteration starts by extracting an edge $(u,v)$ of minimum key from~$P$ and by calling $\forward(u)$ to scan the next outgoing edge of~$u$, if any, and add it to~$P$. If $v\notin S$, then, as we shall see, $d[u]+c[u,v]$ is the distance from~$s$ to~$v$, so $d[v]$ is set to $d[u]+c[u,v]$, $p[v]$ is set to~$u$, $v$ is added to~$S$ and its first outgoing edge is scanned by calling $\forward(v)$. This is all that is done in an iteration during the \emph{first stage\/} of the algorithm, i.e., until the $\lceil n/2 \rceil$-th vertex is added to~$S$.  The behavior of the algorithm in the first stage is thus identical to the behavior of Spira's algorithm. If~$v$ is the $\lceil n/2 \rceil$-th vertex is added to~$S$, then $M$ is set to $d[v]$, which is the median distance, and the algorithm enters its second stage which lasts until distances to all vertices reachable from~$s$ are found.

The second stage starts by backward scanning the first incoming edge of each vertex which is not yet in~$S$ and inserting it into the priority queue~$Q$. Backward scans are performed using $\backward(v)$.

Each iteration during the second stage of the algorithm ends with the execution of an inner while loops that identifies new in-pertinent edges. While $Q\ne\varnothing$ and $\min(Q)<2\,(\min(P)-M)$, an edge $(u,v)$ of minimum weight is extracted from~$Q$. If $v\notin S$, then the next incoming edge of~$v$ is scanned and inserted into~$Q$ by calling $\backward(v)$. The forward scan of $(u,v)$, which is guaranteed to be in-pertinent, is then requested by calling $\request(u,v)$. $\min(P)$ and $\min(Q)$ in the condition above are the minimum keys of elements contained in~$P$ and~$Q$, respectively.  (If $P$ is empty, then $\min(P)$ is taken to be $\infty$.) If the request of an edge $(u,v)$ is urgent, then this edge is immediately inserted into~$P$, which may decrease $\min(P)$. The inner while loop is not executed during the first stage, as $Q$ become non-empty only at the end of the first stage.

\begin{wrapfigure}{r}{3in}
\capstart
    \vspace*{-\baselineskip}
    \REQUESTIF
\caption{Alternate \request\ function, with test of non-out-pertinence.}
\label{requestIf}
\end{wrapfigure}
Requesting an edge $(u,v)$ is done by calling $\request(u,v)$. We shall prove that every requested edge is in-pertinent, and therefore not out-pertinent.  To simplify the correctness proof, we assume at first that $\request(u,v)$ explicitly checks that $(u,v)$ is not out-pertinent, as shown at right.  We later prove that this test is unnecessary, as it is always satisfied.  $\request(u,v)$ appends $(u,v)$ to $Req[u]$. If $u\in S$ and $active[u]$ is \FALSE, then request is urgent, and $\forward(u)$ is called immediately to scan $(u,v)$.

\subsection{Correctness of the algorithm}\label{SS-correctness}

We begin with some technical lemmas that play a central role in the correctness proof.  Let~$d_{v}$ be the distance from~$s$ to~$v$ in the input graph.  Our goal is to show that when the algorithm terminates, $d[v]=d_{v}$, for every $v\in V$.  The first (obvious) lemma claims that the algorithm never underestimates distances.

\begin{lemma}\label{L-delta}
  At any stage of the forward-backward algorithm, $d_{v}\le d[v]$, for every $v\in V$.
\end{lemma}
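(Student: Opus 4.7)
The plan is to show the invariant by induction on the sequence of assignments that change the array $d$. Inspecting the pseudo-code, $d[s]$ is set once (to $0$), and for $v\ne s$ the only line that alters $d[v]$ is the assignment $d[v]\gets d[u]+c[u,v]$ executed when $v$ is inserted into $S$. In particular, once $v\in S$, $d[v]$ never changes again, and while $v\notin S$ we have $d[v]=\infty$, for which $d_v\le d[v]$ is automatic. Hence it suffices to prove, for each vertex $v$ added to $S$, that the value $d[u]+c[u,v]$ assigned to $d[v]$ at that moment is at least $d_v$.

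First I would establish the auxiliary invariant that for every $u\in S$, $d[u]$ equals the length of an actual $s$-to-$u$ path in $G$, namely the path traced by the parent pointers $s\rightsquigarrow p[u]\to u$. This is immediate for $u=s$ (the empty path, of length $0$), and for $u\ne s$ it is an easy induction on the order in which vertices are added to $S$: the edge $(u,v)$ that triggers the addition of $v$ was previously inserted into $P$ by a \forward\ call on $u$, which only happens after $u$ itself had been added to $S$ (or $u=s$). So at the moment $v$ is added, $u$ is already in $S$ and, by the inductive hypothesis, $d[u]$ is the length of an actual path from $s$ to $u$; appending the edge $(u,v)$ exhibits $d[u]+c[u,v]$ as the length of an actual path from $s$ to $v$.

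Because $d_u$ is by definition the minimum length of an $s$-to-$u$ path, the auxiliary invariant gives $d[u]\ge d_u$ for every $u\in S$. Concatenating a true shortest $s$-to-$u$ path with the edge $(u,v)$ yields an $s$-to-$v$ path of length $d_u+c[u,v]\ge d_v$. Therefore at the moment $v$ is added to $S$,
\[
d[v]\;=\;d[u]+c[u,v]\;\ge\;d_u+c[u,v]\;\ge\;d_v,
\]
which completes the inductive step and hence the proof. No real obstacle arises: the argument is the familiar ``relaxation never underestimates'' invariant from Dijkstra-style analyses, and the only point requiring a glance at the code is the observation that the algorithm assigns $d[v]$ in exactly one place, so nothing subtle about the second stage, the queue $Q$, or the request mechanism can compromise the bound.
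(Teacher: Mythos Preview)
Your proof is correct and follows essentially the same approach as the paper: induction over the assignments to $d$, noting that the initialization satisfies the bound and that the single update $d[v]\gets d[u]+c[u,v]$ preserves it. The paper's version is terser---it just says the update ``clearly preserves the condition'' without spelling out the auxiliary invariant that $d[u]$ is the length of an actual $s$-to-$u$ path---but the underlying argument is the same.
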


\begin{proof}
  The claim obviously holds when $d[s]$ is set to~$0$ and all other
  $d[v]$'s set to $\infty$. Tentative distances are then only updated
  by the command $d[v]\gets d[u]+c[u,v]$, where $(u,v)\in E$, which
  clearly preserves the condition.
\end{proof}

Let $key[u,v]=d[u]+c[u,v]$ be the key of an edge $(u,v)$ when it is inserted into~$P$. ($d[u]$ does not change after that moment.)
Parts $(i)$ and $(iv)$ of the second (technical) lemma claim that~$P$ and~$Q$ are \emph{monotone\/} priority queues, i.e., the keys of the successive edges extracted from them are monotonically non-decreasing.

\begin{lemma}\label{L-order}
For the forward-backward algorithm,
\begin{enumerate}[label=(\roman*)]
\item \label{Q-extract} If $(u,v)$ is extracted from~$Q$ before $(u',v')$, then $c[u,v]\le c[u',v']$.
\item \label{Req-insert} If $(u,v)$ is inserted into~$Req[u]$ before $(u,v')$, then $c[u,v]\le c[u,v']$.
\item \label{P-Req} If $(u,v)$ is inserted into~$P$ from $Out[u]$ and $(u,v')$ is requested,
 then $c[u,v]\leq c[u,v']$.
\item \label{P-insert} If $(u,v)$ is inserted into~$P$ before $(u,v')$, then $c[u,v]\le c[u,v']$.
\item \label{P-extract} If $(u,v)$ is extracted from~$P$ before $(u',v')$, then $key[u,v]\le key[u',v']$.
\end{enumerate}
\end{lemma}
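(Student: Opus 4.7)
The plan is to prove the five claims together, as a single invariant, by induction on the sequence of operations performed by the algorithm, carrying alongside the Dijkstra-style correctness invariant ``$d[u]=d_u$ for every $u\in S$, and $d_v\ge K$ for every $v\notin S$, where $K$ is the key of the edge most recently extracted from~$P$''.

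I would establish (i) first as the standard monotone priority-queue argument applied to~$Q$: when the minimum $(u,v)$ is extracted, either $v\in S$ and nothing new is pushed, or $\backward(v)$ pushes the next edge of the sorted list $In[v]$, whose cost is $\ge c[u,v]$; together with the fact that every other element of $Q$ already has cost $\ge c[u,v]$, this forces the next extraction to be no smaller. Claim (ii) is then immediate, since appends to $Req[u]$ occur precisely at $Q$-extractions of edges $(u,\cdot)$, so (i) transfers directly.

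For (iii) I would split on the stage in which $(u,v)$ is inserted from $Out[u]$. In stage two, $\forward(u)$ only inserts when $c[u,v]\le 2(M-d[u])$, while the alternate \request{} only requests when $c[u,v']>2(M-d[u])$; combining gives $c[u,v]<c[u,v']$. In stage one, write $v=v_i$ and $v'=v_j$ for their positions inside $Out[u]$ and suppose $j<i$. Since in stage one the read of $v_\ell$ from $Out[u]$ for $\ell\ge 2$ happens inside the outer iteration that just extracted $(u,v_{\ell-1})$ from $P$, a straightforward induction shows that the insertion of $(u,v_i)$ forces every $(u,v_\ell)$ with $\ell<i$ to have been extracted from~$P$ earlier in stage one; that extraction places $v_\ell$ in~$S$ permanently. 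But the inner while loop triggers $\request(u,v_j)$ only when $v_j\notin S$ at that moment, contradicting $v_j\in S$. Hence $j\ge i$ and sortedness of $Out[u]$ yields $c[u,v]\le c[u,v']$.

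Claim (iv) now follows by chaining: for fixed $u$, all $Out[u]$-sourced insertions into $P$ precede all $Req[u]$-sourced ones because $out[u]$ only flips from \TRUE{} to \FALSE; the two pieces are individually monotone by the sortedness of $Out[u]$ and by (ii); and the jump between them is governed by (iii). Finally, for (v), after extracting $(u,v)$ with key $K=d[u]+c[u,v]$ I would verify that every fresh insertion into $P$ during the same outer iteration has key $\ge K$. The replenishment $\forward(u)$ inserts an edge of cost $\ge c[u,v]$ by (iv), giving key $\ge K$. A $\forward(v)$ triggered by $v$ being newly settled has key $d[v]+c[v,w]\ge d[v]=K$. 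An urgent insertion $(u_1,v_1)$ inside the inner while loop has $u_1\in S$ and $v_1\notin S$, so by the Dijkstra-style invariant $d[u_1]=d_{u_1}$ and $d_{v_1}\ge K$, and the triangle inequality gives $d[u_1]+c[u_1,v_1]\ge d_{v_1}\ge K$. The main obstacle is precisely this last case: an urgent insertion is not merely a successor of the edge just extracted, and to pin its key down one needs the ``$d_v\ge K$ for $v\notin S$'' half of the Dijkstra invariant; this is why all five monotonicity claims and the correctness invariant must ride on one shared induction.
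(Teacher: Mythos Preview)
Your treatments of (i), (ii), and (iv) match the paper's. Your (iii) argument for the stage-one case, via $S$-membership of $v_\ell$ for $\ell<i$, is a clean alternative to the paper's version, which instead shows that every edge inserted from $Out[u]$ except possibly the last satisfies $c[u,v_\ell]\le 2(M-d[u])$ (using, via (v), that phase-one extractions have key at most $M$). One small wording slip: extraction of $(u,v_\ell)$ does not necessarily \emph{place} $v_\ell$ in $S$; the correct statement is that $v_\ell\in S$ once the extraction completes, which is all you need.

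The real divergence is in (v), specifically the urgent-insertion case. The paper does not touch correctness here at all. It combines two purely syntactic inequalities: since $(u',v')$ was \emph{not} requested during the previous iteration's inner loop (whose exit state has $\min(P)=key[u,v]$), one gets $c[u',v']\ge 2\,(key[u,v]-M)$; and the alternate \request{} test gives $c[u',v']>2\,(M-d[u'])$. Averaging these yields $c[u',v']>key[u,v]-d[u']$, i.e., $key[u',v']>key[u,v]$. This makes the lemma completely self-contained, and correctness is then proved afterwards as a consequence.

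Your route through the triangle inequality $d[u_1]+c[u_1,v_1]\ge d_{v_1}\ge K$ is valid in principle, but it forces you to carry the full Dijkstra correctness invariant alongside, and you have not said how you would maintain ``$d_v\ge K$ for $v\notin S$'' when $K$ is updated to the new extraction key. That step is nontrivial for this particular algorithm: one must show that the shortest-path edge $(x,y)$ crossing out of $S$ is already represented in $P$ by a no-heavier outgoing edge of $x$, which needs the analogue of Lemma~\ref{L-inP} together with the guarantee that an in-pertinent $(x,y)$ has been requested by the end of the previous inner loop (the paper's Lemma~\ref{L-in-request}). So your plan can be completed, but only by absorbing essentially the whole correctness argument into the same induction. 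The paper's two-inequality trick sidesteps all of this and keeps monotonicity and correctness cleanly decoupled.
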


\begin{proof} We prove the claims by induction on the number of steps performed by the algorithm.

$\ref{Q-extract}$.  When an edge $(u,v)$ is extracted from~$Q$, at most one edge is added to~$Q$ in its stead. This edge, if it exists, is the edge $(u',v)$ following $(u,v)$ in $In[v]$. As $In[v]$ is sorted, we have $c[u,v]\le c[u',v]$. The next edge $(u'',v'')$ extracted from~$Q$ must therefore satisfy $c[u,v]\le c[u'',v'']$.

$\ref{Req-insert}$.  As edges are appended to $Req[u]$ in the order in which they are extracted from~$Q$, claim~$\ref{Req-insert}$ follows immediately from~$\ref{Q-extract}$.

$\ref{P-Req}$.
Let $(u,v_1),(u,v_2),\ldots$ be the outgoing edges of~$u$ in the order in which they appear in~$Out[u]$. $\forward(u)$ uses edges from $Out[u]$, until it either exhausts $Out[u]$, or until it reaches the first edge $(u,v_{k+1})$ which fails the test $c[u,v_{k+1}]\le 2\,(M-d[u])$.
We claim that all edges $(u,v_1),\ldots,(u,v_{k-1})$ satisfy
 $c[u,v_i]\le 2\,(M-d[u])$, for $1\le i<k$, where $M$ is the median distance.
If $(u,v_k)$ is out-pertinent, the claim is clearly satisfied. If not, then $(u,v_k)$ must have been inserted into~$P$ in the first phase (when $M=\infty$), as otherwise it would have failed the test $c[u,v_k]\le 2\,(M-d[u])$. It follows that $(u,v_1),\ldots,(u,v_{k-1})$ must have been extracted from~$P$ during the first phase.  Every edge $(u,v)$ extracted from $P$ during the first phase satisfies $d[u]+c[u,v]\leq M$, and hence $c[u,v]\leq M-d[u]\le 2\,(M-d[u])$, as claimed.

Edges are inserted into $Req[u]$ only in the second phase of the algorithm when~$M$ is already set.
Every edge $(u,v)$ inserted into $Req[u]$ satisfies $c[u,v]> 2\,(M-d[u])$.  (We assume, for now, that $\request(u,v)$ explicitly checks this condition as in Figure~\ref{requestIf}; we will see in Remark~\ref{L-condition-true} that this condition is always true.)  It follows that every such edge cannot be one of $(u,v_1),\ldots,(u,v_{k-1})$.  As $(u,v_k)$ is cheapest among all other edges of $Out[u]$, it follows that $c[u,v_k]\le c[u,v]$, as required.  This establishes $\ref{P-insert}$. (We note that $(u,v_k)$ may actually be inserted into~$P$ twice, once from~$Out[u]$ and once from~$Req[u]$.  It is, however, the only edge for which this may happen. It is not difficult to modify the code so that this would not happen, but there is no real need for that.)

$\ref{P-insert}$. outgoing edges of~$u$ are inserted into~$P$ by $\forward(u)$, which first uses edges from~$Out[u]$, and then uses edges from $Req[u]$. As $Out[u]$ and $Req[u]$ are both sorted, $\ref{P-insert}$ follows from $\ref{P-Req}$.

$\ref{P-extract}$.
The extraction of an edge $(u,v)$ from~$P$ may cause the insertion of several edges into~$P$.
To prove claim~$\ref{P-extract}$, it suffices to show that for every edge $(u',v')$ inserted into~$P$ after the extraction of $(u,v)$ we have $key[u,v]\le key[u',v']$.
This ensures that the next edge $(u'',v'')$ extracted from~$P$ satisfies $key[u,v]\le key[u'',v'']$.

The first edge inserted into~$P$ after the removal of $(u,v)$ is $(u,v')$, the next outgoing or requested edge of~$u$, if it exists. By $\ref{P-insert}$, we have $c[u,v]\le c[u,v']$ and thus $key[u,v]\le key[u,v']$.

The second edge that might be inserted into~$P$ is $(v,w)$, the first outgoing or requested edge of~$v$, if $(u,v)$ is the first extracted edge entering~$v$. In this case we again have $key[u,v] = d[v] \le key[v,w]$.

Finally, more edges may be added to~$P$ as a result of urgent requests. Suppose that $(u',v')$ is added to~$P$ as a result of such an urgent request.  We begin by showing that $c[u',v']\ge 2\,(key[u,v]-M)$. If the current iteration is the first iteration of the second phase, then $key[u,v]=M$, and the claim is obvious. Suppose instead that the current iteration is not the first iteration of the second phase.
At the end of the previous iteration, the edge of minimum key in~$P$ is $(u,v)$, so each edge $(u'',v'')$ for which $c[u'',v'']< 2\,(key[u,v]-M)$ is requested before the previous iteration ends.  As $(u',v')$ is requested in the current iteration, it was not requested in the previous iteration, so $c[u',v']\geq 2\,(key[u,v]-M)$, as claimed.
Also, $(u',v')$ must be a non out-pertinent edge, thus $c[u',v']>2\,(M-d[u'])$. Combining these two inequalities, as in the proof of Lemma~\ref{L-verify}, we get $c[u',v']>key[u,v]-d[u']$, or equivalently $key[u',v']>key[u,v]$.

We have thus shown that all edges $(u',v')$ inserted into~$P$ after the extraction of $(u,v)$ satisfy $key[u,v]\le key[u',v']$.  This establishes $\ref{P-extract}$.
\end{proof}

\begin{lemma}\label{L-inP}
  If at some time $u\in S$ and $(u,v)$ is an out-pertinent or a requested edge that
  has not been extracted from~$P$, then at that time~$P$ must contain an edge
  $(u,v')$ (possibly $v'=v$) with $c[u,v']\le c[u,v]$.
\end{lemma}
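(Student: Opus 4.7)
The plan is to examine the sequence of edges with source $u$ that the algorithm has inserted into $P$ by the current time. By $\forward(u)$, these edges come, in order, first from an initial prefix of $Out[u]$ (the out-pertinent edges, while $out[u]=\TRUE$) and then from $Req[u]$ in appending order; by Lemma~\ref{L-order}\ref{P-insert} their weights form a non-decreasing sequence. I would split into two cases depending on whether $(u,v)$ has ever been inserted into $P$. If $(u,v)$ was inserted at some earlier time, then since it has not been extracted it is still in $P$, and we simply take $v'=v$. The interesting case is when $(u,v)$ has never been inserted into $P$; in this case I claim that $P$ currently contains an edge $w$ with source $u$, and that $c[u,w]\le c[u,v]$.

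To see that $P$ contains some edge with source $u$, I would argue $active[u]=\TRUE$ by contradiction. Let $C$ be the most recent $\forward(u)$ call; its failure means both that by the end of $C$ the pointer of $Out[u]$ had advanced past the entire out-pertinent prefix (forcing $out[u]=\FALSE$) and that $\NEXT(Req[u])$ returned $\bot$, so the $Req[u]$ pointer was past the end. If $(u,v)$ is out-pertinent, then it would have been returned by some earlier $\NEXT(Out[u])$ call and hence inserted into $P$. If $(u,v)$ is requested and was appended to $Req[u]$ before $C$, then it lay before the $Req[u]$ pointer at the time of $C$, so it had already been returned and inserted; and if it was appended after $C$, the call $\request(u,v)$ would have seen $u\in S$ and $active[u]=\FALSE$ and triggered an urgent $\forward(u)$ call more recent than $C$, contradicting the choice of $C$. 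Either way we contradict our case assumption, so $active[u]=\TRUE$ and $w$ exists.

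To prove the weight bound $c[u,w]\le c[u,v]$, I would split on the provenance of $w$. If $w$ came from $Out[u]$, then $out[u]$ is still $\TRUE$ (no later $\forward(u)$ call has occurred, since $w$ is in $P$) and no requested edge has yet been drawn; for $(u,v)$ out-pertinent the edge $(u,v)$ sits strictly later in $Out[u]$ than $w$, so sortedness of $Out[u]$ gives the inequality, and for $(u,v)$ requested the inequality is exactly Lemma~\ref{L-order}\ref{P-Req}. If instead $w$ came from $Req[u]$, then $out[u]=\FALSE$ and the entire out-pertinent prefix of $Out[u]$ has already been inserted, ruling out $(u,v)$ being out-pertinent; hence $(u,v)$ is requested and lies strictly later in $Req[u]$ than $w$, and Lemma~\ref{L-order}\ref{Req-insert} yields the inequality.

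The main obstacle will be the case analysis in the first claim, specifically ruling out the scenario in which $(u,v)$ is appended to $Req[u]$ after a failing $\forward(u)$ call. The saving grace is the urgent-request mechanism inside $\request$: because any such append happens with $u\in S$ and $active[u]=\FALSE$, it triggers its own $\forward(u)$ call that succeeds in inserting $(u,v)$ into $P$, which is exactly what is needed to prevent an orphaned pertinent edge.
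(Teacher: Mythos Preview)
Your proof is correct and follows essentially the same approach as the paper. The paper organizes the argument by edge type (out-pertinent edges handled by an invariant on the sequential scanning of $Out[u]$; requested non-out-pertinent edges handled via the urgent-request mechanism plus Lemma~\ref{L-order}\ref{Req-insert},\ref{P-Req}), while you organize it by whether $(u,v)$ has already been inserted into~$P$, but the underlying ingredients---the one-edge-at-a-time invariant, the urgent-request safeguard, and the weight comparisons via Lemma~\ref{L-order}---are identical.
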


\begin{proof}
When $u$ is added to~$S$, the first edge of $Out[u]$ is added to~$P$. When an edge of $Out[u]$ is extracted from~$P$, it is immediately replaced by the edge following it in $Out[u]$, unless the algorithm is in its second phase and the following edge is not out-pertinent. As the edges in $Out[u]$ appear in sorted order, the claim holds if~$(u,v)$ is out-pertinent.

Suppose now that $(u,v)$ is a requested edge which is not out-pertinent.
If $u$ is inactive at the time $(u,v)$ is requested, the request is urgent and $(u,v)$ is immediately inserted into~$P$.
If edge $(u,v)$ has not been inserted into $P$, then there is some other edge
$(u,v')$ in $P$.
If $(u,v')$ is a requested edge, it was requested earlier than
$(u,v)$, so by Lemma~\ref{L-order}$\ref{Req-insert}$ we have $c[u,v']\leq c[u,v]$.
If $(u,v')$ was not requested, by Lemma~\ref{L-order}$\ref{P-Req}$ we have $c[u,v']\leq c[u,v]$.
\end{proof}

\begin{lemma} \label{L-reachable}
  When the forward-backward algorithm terminates, $S$ is the set of vertices reachable from the source.
\end{lemma}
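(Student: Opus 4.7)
The plan is to prove the two inclusions separately. For $S\subseteq\{v : v\text{ reachable from }s\}$, I would induct on the order in which vertices enter $S$: each vertex $v\ne s$ is added when an edge $(u,v)$ is extracted from $P$, this edge is a genuine edge of the graph (drawn from $Out[u]$ or $Req[u]$), and it was placed in $P$ by a $\forward(u)$ call, which is only invoked after $u$ has entered $S$. Hence $u$ is reachable by the inductive hypothesis and the edge $(u,v)$ makes $v$ reachable.

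For the reverse inclusion I would argue by contradiction. Suppose some reachable vertex lies outside $S$ at termination, and let $v\notin S$ have minimum true distance $d_v$ among such vertices. Its predecessor $u$ on a shortest path satisfies $d_u<d_v$ and is reachable, so by minimality $u\in S$. The edge $(u,v)$ lies on a shortest-paths tree and by the remark following the definition of pertinent edges is therefore either out-pertinent or in-pertinent. Since $v\notin S$ and termination requires $S=V$ or $P=\varnothing$, we must have $P=\varnothing$ at termination. If $(u,v)$ is out-pertinent, every cheaper edge of $Out[u]$ is also out-pertinent (they share the same bound $2(M-d_u)$), so the successive $\forward(u)$ calls triggered by $u$ joining $S$ and by the subsequent extractions of $(u,\cdot)$ edges from $P$ will insert $(u,v)$ into $P$ at some point. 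Lemma~\ref{L-inP} then guarantees $P$ contains some $(u,v')$ until $(u,v)$ is extracted, so with $P=\varnothing$ at termination the edge must have been extracted; but any such extraction places $v$ into $S$, contradicting $v\notin S$.

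In the in-pertinent case I would first show $Q=\varnothing$ at termination: the inner while loop exits either when $Q$ is empty or when $\min(Q)\ge 2(\min(P)-M)$, and the latter possibility forces $\min(P)$ to be finite and hence $P\ne\varnothing$, contradicting termination. Next, because $v$ never enters $S$, the initial $\backward(v)$ at the start of phase~2 together with the $\backward(v)$ calls made after each extraction of an incoming edge of $v$ from $Q$ feed the edges of $In[v]$ into $Q$ one by one; combined with $Q=\varnothing$ at termination, this forces every edge of $In[v]$ to have been extracted. In particular $(u,v)$ has been extracted from $Q$ with $v\notin S$, so $\request(u,v)$ was invoked, appending $(u,v)$ to $Req[u]$. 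Lemma~\ref{L-inP} applied to the requested edge $(u,v)$ with $u\in S$ then yields either an edge $(u,v')$ currently in $P$, contradicting $P=\varnothing$, or that $(u,v)$ has already been extracted from $P$, which again puts $v$ into $S$.

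The main obstacle is the in-pertinent case, specifically pinning down $Q=\varnothing$ at termination and tracking the interaction between the inner while loop and the $\backward$/$\request$ mechanism that delivers the edge $(u,v)$ to be appended to $Req[u]$. This is exactly the machinery by which the new algorithm surpasses Spira's, so verifying that it fires on the in-pertinent SPT edge is the crux of the reachability proof.
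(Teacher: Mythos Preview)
Your in-pertinent case is essentially the right argument and actually works more broadly than you use it, but the out-pertinent case has a circularity problem. When you invoke Lemma~\ref{L-inP} on the SPT edge $(u,v)$ via its ``out-pertinent'' clause, you need $c[u,v]\le 2(M-d[u])$ where $d[u]$ is the \emph{algorithm's} computed distance---this is the quantity $\forward(u)$ tests against, and it is what the proof of Lemma~\ref{L-inP} tracks. All you know from the SPT remark is $c[u,v]\le 2(M-d_u)$ for the \emph{true} distance~$d_u$. Bridging the two requires $d[u]=d_u$, which is Theorem~\ref{T-correct}, whose proof in turn uses Lemma~\ref{L-reachable}. (In phase~1 there is no problem, since $M=\infty$ makes the algorithm's test vacuous; the circularity bites only in phase~2.)

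The repair is already in your hands. Your in-pertinent argument shows---using only that $v$ stays outside $S$ and that $Q=\varnothing$ at termination---that \emph{every} incoming edge of~$v$, pertinent or not, is eventually extracted from~$Q$ and requested. So in phase~2 the edge $(u,v)$ is requested regardless of its pertinence status, and Lemma~\ref{L-inP} applies via the ``requested'' clause with no circularity. Restructured as ``phase~1: Spira's argument; phase~2: every incoming edge of $v\notin S$ is requested,'' your proof becomes equivalent to the paper's, which goes one step further and avoids Lemma~\ref{L-inP} entirely: in the final iteration $P$ stays empty, so the inner loop drains~$Q$ and requests every edge $(u,v)$ with $v\notin S$; none of these requests can insert into~$P$, which (since $active[u]$ would otherwise witness an edge of~$u$ in~$P$) forces $u\notin S$, hence no edge crosses from~$S$ to~$V\setminus S$. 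A smaller point: asserting $d_u<d_v$ for the shortest-path predecessor fails when zero-weight edges are allowed; take $v$ of minimum hop-count among minimizers of~$d_v$.
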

\begin{proof}
  Since $\forward$ is only invoked on vertices in $S$, by induction,
  priority queue $P$ only contains edges reachable from the source
  $s$, and $S$ only contains vertices reachable from $s$.  If the
  algorithm stops with $|S|<\lceil n/2\rceil$, then it stops with
  $M=\infty$, and since $P=\varnothing$, all outgoing edges of each
  vertex in $S$ have been explored, so $S$ is the set of vertices
  reachable from $s$.  If the algorithm stops with $\lceil
  n/2\rceil\leq|S|<n$, then it stops with $P=\varnothing$, so on the
  previous iteration all incoming edges to vertices not in $S$ were
  requested, but nonetheless not enqueued in $P$.  Any such edge
  $(u,v)$ was not already in $P$, so it is not out-pertinent, and
  since it was not enqueued, $u\notin S$.  In this case, again $S$ is
  the set of vertices reachable from $s$.  The final case is that the
  algorithm stops with $|S|=n$, in which case trivially $S$ consists
  of the vertices reachable from $s$.
\end{proof}

\begin{lemma} \label{L-in-request}
  When an edge $(u,v)$ is extracted from priority queue $P$, all incoming in-pertinent edges of~$v$ have already been requested.
\end{lemma}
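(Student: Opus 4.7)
The plan is to track the progress of the backward-scan chain at $v$ through phase~2 of the algorithm. First I would reduce to the interesting case: if $(u',v)$ is in-pertinent then $c[u',v]<2\,(d_v-M)$ forces $d_v>M$, so $v$ is one of the $\lfloor n/2\rfloor$ vertices whose distance exceeds the median $M$; in particular $v\notin S$ at the moment phase~2 is initialized, so the initialization step executes $\backward(v)$ and the cheapest incoming edge $(u'_1,v)$ of $v$ is placed in $Q$.

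Next let $i\le j$ be the iteration at which $v$ enters $S$, with the convention $i=j$ when $v\notin S$ at the start of iteration $j$ (the iteration in which $(u,v)$ is extracted from $P$). Let $K_i$ denote the key of the edge extracted from $P$ at the start of iteration $i$. By Lemma~\ref{L-delta} and the fact that $d[v]$ is assigned the value $K_i$ when $v$ joins $S$, we have $d_v\le K_i$, so every in-pertinent edge $(u',v)$ satisfies
\[
c[u',v] \;<\; 2\,(d_v-M) \;\le\; 2\,(K_i-M).
\]

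The crux of the proof is to show that every incoming edge of $v$ of weight strictly less than $2\,(K_i-M)$ has been extracted from $Q$, and hence requested, by the end of iteration $i-1$. Let $\tau$ denote the iteration at which phase~2 is initialized. When the inner while loop of iteration $i-1$ exits, either $Q=\varnothing$ or $\min(Q)\ge 2\,(\min(P)-M)$, and because no operations occur between iterations, $\min(P)$ at that exit necessarily equals $K_i$. Using Lemma~\ref{L-order}\ref{Q-extract} together with the sortedness of $In[v]$, I would argue by induction on time that the processed portion of $v$'s chain always forms an initial segment $(u'_1,v),\ldots,(u'_{k-1},v)$ of $In[v]$, with at most one edge $(u'_k,v)$ currently residing in $Q$. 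Since $v\notin S$ throughout iterations $\tau,\tau+1,\ldots,i-1$, each such extraction from $Q$ invoked both a fresh $\backward(v)$ to advance the chain and a $\request(u'_\ell,v)$ to record the edge, so every edge in the extracted prefix has indeed been requested.

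Putting these pieces together, an in-pertinent edge $(u',v)=(u'_\ell,v)$ satisfies $c[u'_\ell,v]<2\,(K_i-M)\le c[u'_k,v]$, which forces $\ell<k$, and hence $(u',v)$ belongs to the requested prefix before iteration $j\ge i$. The main obstacle I anticipate is the fluctuation of $\min(P)$ caused by urgent requests during the inner while loop; the key observation to address this is that only the final value of $\min(P)$ at loop exit governs the exit condition, and this value is forced to equal $K_i$ since it reappears unchanged as the key extracted at the start of iteration $i$.
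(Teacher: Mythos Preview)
Your proposal is correct and follows the same core idea as the paper: the inner while loop's exit condition at the end of the preceding iteration forces all sufficiently light incoming edges of $v$ to have been drained from $Q$ and hence requested. The paper's three-line proof anchors at iteration $j-1$ (immediately before the extraction of $(u,v)$), using $d_v\le d[u]+c[u,v]=\min(P)$ there, and asserts without elaboration that $(w,v)$ must then already be requested; you instead anchor at iteration $i-1$ (just before $v$ enters $S$), which is a slightly cleaner choice because $v\notin S$ is then automatic throughout, so every extraction from $Q$ of an edge into $v$ certainly triggers both $\backward(v)$ and $\request$. Your explicit tracking of the initial-segment structure of $In[v]$ fills in precisely the detail the paper leaves to the reader.
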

\begin{proof}
  If edge $(w,v)$ is in-pertinent, then $d[v]\geq d_{v}>M$, so the
  iteration prior to the extraction of edge $(u,v)$ was in the second phase.
  At the time the previous iteration ended $\min(P)=d[u]+c[u,v]$,
  so at that time
  \[c[w,v]<2\,(d_{v}-M)\leq 2\,((d[u]+c[u,v])-M) = 2\,(\min(P)-M),\] so
  by the end of the previous iteration, edge $(w,v)$ has been requested.
\end{proof}

We are now ready for the proof of the following theorem.

\begin{theorem}\label{T-correct}
  The forward-backward single-source shortest paths algorithm
  correctly finds a tree of shortest paths.
\end{theorem}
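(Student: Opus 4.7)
Plan: I proceed by strong induction on the order in which vertices are added to~$S$, establishing at each step that the newly added vertex~$v_k$ satisfies $d[v_k]=d_{v_k}$, where $d_{v_k}$ is the true distance from~$s$ to~$v_k$. Together with Lemma~\ref{L-reachable} this shows that when the algorithm halts, every reachable vertex has its correct distance, and the parent pointers~$p$ form a valid tree of shortest paths. The base case $v_1=s$ is immediate.

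For the inductive step, let $(u,v_k)$ be the edge extracted from~$P$ that causes~$v_k$ to be added to~$S$. Since $\forward$ is only ever called on vertices already in~$S$, we have $u\in S$, so by induction $d[u]=d_u$, and hence $d[u]+c[u,v_k]\ge d_{v_k}$. For the matching upper bound it suffices, by the monotonicity of $P$-extractions (Lemma~\ref{L-order}$\ref{P-extract}$), to exhibit any edge in~$P$ whose key is at most~$d_{v_k}$. I would do so by fixing an arbitrary shortest path $s=w_0,w_1,\dots,w_\ell=v_k$ and letting~$i$ be the smallest index with $w_i\notin S$, so $w_{i-1}\in S$ and $d[w_{i-1}]=d_{w_{i-1}}$. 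The edge $(w_{i-1},w_i)$ lies in some shortest-paths tree, so the Remark following the definition of pertinent edges makes it either out-pertinent or in-pertinent with respect to~$M$ (out-pertinence is automatic in stage~1, where $M=\infty$). Moreover, $(w_{i-1},w_i)$ cannot have been previously extracted from~$P$, for that would have placed~$w_i$ in~$S$. Hence, provided I can verify the hypotheses of Lemma~\ref{L-inP}, that lemma yields an edge $(w_{i-1},v')\in P$ with $c[w_{i-1},v']\le c[w_{i-1},w_i]$, whose key satisfies $key[w_{i-1},v']\le d_{w_{i-1}}+c[w_{i-1},w_i]=d_{w_i}\le d_{v_k}$, as desired.

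The one remaining hypothesis of Lemma~\ref{L-inP} to verify---and the main obstacle---is that in the in-pertinent case $(w_{i-1},w_i)$ has already been requested. I would establish this by extending the proof of Lemma~\ref{L-in-request} from~$v_k$ to the more general vertex~$w_i$. At the end of iteration $k-1$ we have $\min(P)=d[u]+c[u,v_k]\ge d_{v_k}\ge d_{w_i}$, since $(u,v_k)$ is the next edge to be extracted and the monotonicity of $P$-extractions forbids any smaller key from sitting in~$P$. The in-pertinence condition then gives $c[w_{i-1},w_i]<2\,(d_{w_i}-M)\le 2\,(\min(P)-M)$. The inner \textbf{while} loop of iteration $k-1$ does not exit until $\min(Q)\ge 2\,(\min(P)-M)$, so at exit every edge of $In[w_i]$ still in~$Q$ has weight at least $2\,(\min(P)-M)>c[w_{i-1},w_i]$. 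Combining this with the sortedness of $In[w_i]$, the monotonicity of $Q$-extractions (Lemma~\ref{L-order}$\ref{Q-extract}$), and the fact that $w_i$ has never belonged to~$S$ (so every extraction of an $In[w_i]$-edge from~$Q$ triggered a $\backward$ call that advanced the pointer), one deduces that $(w_{i-1},w_i)$ itself must have been extracted from~$Q$ and, since $w_i\notin S$ at the moment of that extraction, requested. This closes the in-pertinent case, so $key[u,v_k]=d_{v_k}$; hence $d[v_k]$ is set correctly and $p[v_k]=u$ completes a valid shortest path by the inductive hypothesis applied to~$u$.
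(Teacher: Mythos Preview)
Your proof is correct and follows essentially the same route as the paper's: both hinge on applying Lemma~\ref{L-inP} to a shortest-path edge $(w_{i-1},w_i)$ with $w_{i-1}\in S$, after using the inner-\textbf{while}-loop exit condition (the content of Lemma~\ref{L-in-request}) to ensure that in the in-pertinent case the edge has already been requested. The only packaging difference is that the paper inducts on shortest-path hop-length and looks at the penultimate vertex~$w_{k-1}$ (handling the degenerate case $w_{k-1}\notin S$ separately), whereas you induct on the order of joining~$S$ and pick the frontier vertex~$w_i$, which forces $w_{i-1}\in S$ and avoids the case split at the cost of re-deriving Lemma~\ref{L-in-request} for~$w_i$ rather than~$v_k$; your derivation of that extension via $\min(P)=key[u,v_k]\ge d_{v_k}\ge d_{w_i}$ is sound.
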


\begin{proof}
  Let $s=w_0,w_1,\dots,w_k=v$ be a shortest path from the source $s$
  to a vertex $v$.  We prove by induction on $k$ that
  $d[w_k]=d_{w_k}$.  As $d[s]=d_s=0$, the claim is true for $k=0$.  Suppose
  $k>0$ and that our induction hypothesis is true for $k-1$.
  Since $w_k$ is reachable from the source, by Lemma~\ref{L-reachable}, it
  is adjoined to $S$ at some iteration, and we let $(u,w_k)$ denote
  the edge that is extracted from $P$ during that iteration.

  Suppose at first that $w_{k-1}\notin S$ when $(u,w_k)$ is extracted from $P$.  Then by Lemma~\ref{L-order}$\ref{P-extract}$,
  \[d[w_k]\leq d[w_{k-1}]=d_{w_{k-1}} = d_{w_k} - c[w_{k-1},w_k],\] which combined with
  Lemma~\ref{L-delta} implies $d[w_k]=d_{w_k}$ (and $c[w_{k-1},w_k]=0$).

  Suppose instead that $w_{k-1}\in S$ when $(u,w_k)$ is extracted from $P$.
  Since $(w_{k-1},w_k)$ is a shortest path edge, it is either
  out-pertinent or in-pertinent.  If it is in-pertinent, by Lemma~\ref{L-in-request} it was requested by the end of the previous iteration.
  In either case, by the end of the previous iteration,
  edge $(w_{k-1},w_k)$ is an out-pertinent or requested edge that has not been extracted from $P$,
  and $w_{k-1}\in S$.
  By Lemma~\ref{L-inP}, there must be some edge
  $(w_{k-1},x)$ in queue $P$ for which
\begin{align*}
  key[w_{k-1},x]=d[w_{k-1}]+c[w_{k-1},x]
 &\leq d[w_{k-1}]+c[w_{k-1},w_k]\\
 &=d_{w_{k-1}}+c[w_{k-1},w_k]=d_{w_k} \\
 &\leq d[u]+c[u,w_k]=key[u,w_k]=d[w_k].
\end{align*}
  Since it was edge $(u,w_k)$ that was extracted from $P$,
  $key[u,w_k]\leq key[w_{k-1},x]$, so the inequalities $\leq$ are
  actually equalities, and $d[w_k]=d_{w_k}$.

  Thus by induction each vertex $v$ reachable from the source satisfies $d[v]=d_v$,
  and so the algorithm computes a shortest path tree.
\end{proof}

\subsection{Complexity of the algorithm}\label{SS-complexity}

The running time of the algorithm is clearly dominated by the priority queue operations.
The following two lemmas show that the number of priority queue operations performed by the algorithm is $O(|\Eper|)$.

\begin{lemma}\label{L-request}
  For the forward-backward algorithm, all requested edges are in-pertinent.
\end{lemma}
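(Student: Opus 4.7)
The plan is to trace through the unique call site of \request. Inspecting the pseudocode of Figure~\ref{F-sssp}, the function \request($u,v$) is invoked only from the body of the inner \textbf{while} loop, immediately after the edge $(u,v)$ has been extracted from $Q$ and the test $v\notin S$ has succeeded. At that moment the loop guard $\minkey(Q)<2\,(\minkey(P)-M)$ holds, and $c[u,v]=\minkey(Q)$. Writing $K$ for the value of $\minkey(P)$ at the instant of the extraction, we therefore get
\[
c[u,v] \;<\; 2\,(K-M).
\]
Hence, to establish in-pertinence it suffices to prove $d_v \ge K$, where $d_v$ denotes the true shortest-path distance from $s$ to $v$, because then $c[u,v] < 2\,(K-M) \le 2\,(d_v-M)$.

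To prove $d_v \ge K$, first dispose of the case where $v$ is not reachable from $s$: then $d_v=\infty$ and the inequality is vacuous. Otherwise, by Lemma~\ref{L-reachable} the vertex $v$ is eventually added to $S$; this happens at some later iteration when an edge $(x,v)$ is extracted from $P$ and the assignment $d[v] \gets key[x,v]$ is performed. Theorem~\ref{T-correct} guarantees that this assignment yields $d[v]=d_v$. By the monotonicity of extractions from $P$ (Lemma~\ref{L-order}$\ref{P-extract}$), every edge extracted from $P$ after the moment of the request, and $(x,v)$ in particular, has key at least $K$. Combining these, $d_v = key[x,v] \ge K$, as required.

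The main substantive step above is the use of Theorem~\ref{T-correct} to pass from the algorithm's own label $d[v]$ (which is directly bounded via monotonicity of $P$) to the true distance $d_v$ appearing in the definition of in-pertinence; Lemma~\ref{L-delta} goes the wrong way for this purpose, so T-correct is genuinely needed. Fortunately, T-correct has already been established without invoking the present lemma. One edge case worth flagging is $K=\infty$, i.e.\ $P$ empty at the check: then the outer loop's guard forces termination after the current inner loop completes, and if $v$ were reachable but outside $S$ at termination this would contradict Lemma~\ref{L-reachable}; hence $v$ must be unreachable and the conclusion again holds vacuously.
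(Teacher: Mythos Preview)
Your reduction to showing $d_v \ge K$ is appealing, but the step where you invoke monotonicity of $P$-extractions to get $key[x,v] \ge K$ does not go through. Lemma~\ref{L-order}\ref{P-extract} only says that the \emph{sequence of extracted keys} is non-decreasing; it compares each extraction to the previous extraction (key $K_0$, say, at the start of the current outer iteration), not to $\minkey(P)$ at some intermediate moment of the inner loop. Crucially, after your request of $(u,v)$ and before the next $P$-extraction, \emph{urgent} requests later in the same inner loop may insert edges into $P$ with keys strictly below $K$ (though still above $K_0$). Such an edge can be the next one extracted, and can be precisely the edge $(x,v)$ that admits $v$ to $S$, giving $d_v < K$.

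Concretely: take $M=10$, $d[u_1]=7$, and suppose the current outer iteration began by extracting an edge of key $K_0=13$ from $u_1$, leaving $u_1$ inactive and $\minkey(P)=15$. Now the inner loop extracts $(u,v)$ from $Q$ with $c[u,v]=6$ (so your $K=15$), then extracts the SPT edge $(u_1,v)$ with $c[u_1,v]=7$; the latter is an urgent request (since $u_1\in S$ is inactive and $7>2(M-d[u_1])=6$) and is inserted into $P$ with key $14<K$. The next outer iteration extracts $(u_1,v)$ and sets $d_v=14<15=K$. The lemma still holds here---$c[u,v]=6<2(d_v-M)=8$---but not via your inequality $d_v\ge K$.

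This is exactly why the paper's proof splits into two cases based on the SPT edge $(u_1,v_1)$ to the \emph{next} discovered vertex $v_1$. When $(u_1,v_1)$ is out-pertinent or already requested, Lemma~\ref{L-inP} puts a witness in $P$ and one does get $\minkey(P)\le d_{v_1}\le d_v$, essentially your argument. But when $(u_1,v_1)$ is in-pertinent and requested only after $(u,v)$---as in the example above---no such witness is in $P$, and one must instead use Lemma~\ref{L-order}\ref{Q-extract} to get $c[u,v]\le c[u_1,v_1]$ and conclude in-pertinence of $(u,v)$ from that of the SPT edge $(u_1,v_1)$. Your proof is missing this second case.
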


\begin{proof}
Edges are requested from within the inner ``{\bf while} $Q\neq\varnothing$ \AND\ $\min(Q)< 2\,(\min(P)-M)$'' loop performed at the end of each iteration of the algorithm. Suppose an edge $(u,v)$ is requested.  At that time, the algorithm is in stage 2, so $M<\infty$.  If vertex $v$ is unreachable from the source~$s$, then $d_v=\infty$, so $c[u,v]<2\,(d_v-M)$ and edge $(u,v)$ is in-pertinent.  If on the other hand $v$ is reachable from the source, by Lemma~\ref{L-reachable} eventually $v$ is adjoined to~$S$, and as $v\notin S$ at the time that $(u,v)$ was requested, there is some vertex $v_1$ which is the next vertex adjoined to~$S$ upon the extraction of some edge $(u_1,v_1)$ from queue $P$.  At the time of $(u_1,v_1)$'s extraction, $u_1\in S$, so in fact $u_1\in S$ at the time edge $(u,v)$ is requested.  Furthermore, by the correctness of the algorithm (Theorem~\ref{T-correct}), $(u_1,v_1)$ is the last edge on a shortest path to~$v_1$.

Suppose at first that $(u_1,v_1)$ is either out-pertinent or was requested earlier than $(u,v)$. As $u_1\in S$ and as $(u_1,v_1)$ was not extracted yet from~$P$, Lemma~\ref{L-inP} says that~$P$ contains an edge $(u_1,v'_1)$ with $c[u_1,v'_1]\le c[u_1,v_1]$.  As $u_1\in S$, $\min(P)\le d_{v_1}$ at the time that $(u,v)$ is requested.  Since $v\notin S$ at the time of the request, we also have $d_{v_1}\leq d_v$.
Thus
\[c[u,v] \;<\; 2\,(\min(P)-M) \;\le\; 2\,(d_{v_1}-M) \;\le\; 2\,(d_{v}-M)\,,\]
so $(u,v)$ is in-pertinent as required.

Suppose now that $(u_1,v_1)$ is not out-pertinent and was not requested earlier than $(u,v)$.
As $(u_1,v_1)$ is an SPT edge that is not out-pertinent, it must be in-pertinent, so by Lemma~\ref{L-in-request}, it is requested by the time $(u_1,v_1)$ is extracted.
 Since edge $(u_1,v_1)$ is requested at the same time as or later than $(u,v)$, it was extracted from $Q$ at the same time as or later than $(u,v)$, so by Lemma~\ref{L-order}$\ref{Q-extract}$ $c[u,v]\le c[u_1,v_1]$.  Since $v\notin S$, by Lemma~\ref{L-order}$\ref{P-extract}$ and Theorem~\ref{T-correct} we have $d_v\geq d_{v_1}$.  So the in-pertinence of $(u_1,v_1)$ implies the in-pertinence of $(u,v)$.
\end{proof}

\begin{remark} \label{L-condition-true}
  Since in-pertinent edges are not out-pertinent, one consequence of
  Lemma~\ref{L-request} is that \request\ does not need to test
    (as in Figure~\ref{requestIf}) that the edge is not out-pertinent,
    so that the version of \request\ given in Figure~\ref{F-sssp} is correct.
\end{remark}

\begin{theorem}\label{T-PQ}
For the forward-backward algorithm,
\begin{itemize}
\item[$(i)$] The edges inserted into~$Q$ are all the in-pertinent edges, together with a lightest incoming non-in-pertinent edge (if any) to each vertex found after the $\lceil n/2\rceil$th vertex.
\item[$(ii)$] All edges inserted into~$P$, except possibly one outgoing edge for each vertex, are pertinent edges.
\end{itemize}
\end{theorem}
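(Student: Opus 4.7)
The plan is to prove the two parts separately by tracing which lists each priority-queue insertion comes from and invoking the lemmas already established.

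\emph{Part (i).} Edges enter $Q$ only through $\backward(v)$, which is called once for each $w\notin S$ at the start of phase~2 (inserting the lightest incoming edge of~$w$), and then in the inner loop each time an edge $(u,v)$ with $v\notin S$ is extracted from~$Q$ (inserting the next edge of $In[v]$). Since $In[v]$ is sorted in non-decreasing order of weight, the in-pertinent incoming edges of any vertex~$v$ form a prefix of $In[v]$. For a vertex~$v$ added to~$S$ during phase~2, Lemma~\ref{L-in-request} implies that every in-pertinent incoming edge of~$v$ has been requested---and therefore extracted from~$Q$---by the end of the iteration preceding~$v$'s entry to~$S$; each such extraction occurred with $v\notin S$ and triggered a fresh $\backward(v)$ that appended the next edge of $In[v]$ to~$Q$. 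So every in-pertinent incoming edge of~$v$ ends up in~$Q$, together with at most one extra edge: the one immediately following the last in-pertinent edge in $In[v]$, which is the lightest non-in-pertinent incoming edge of~$v$. Since $\min(P)$ is non-decreasing and $\min(P)\le d_v$ while $v\notin S$, this extra edge fails the test $\min(Q)<2(\min(P)-M)$ and is never extracted, so no further edges of $In[v]$ enter~$Q$. For a vertex found in phase~1, $d_v\le M$ forces $2(d_v-M)\le 0$, so there are no in-pertinent incoming edges to account for and no $\backward$ call is ever made on such a vertex.

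\emph{Part (ii).} Edges enter $P$ only through $\forward(u)$, drawing either from $Out[u]$ or, once $out[u]$ has been set to \FALSE, from $Req[u]$. Every edge drawn from $Req[u]$ is in-pertinent by Lemma~\ref{L-request}, hence pertinent. In phase~2 the code inside $\forward$ explicitly checks $c[u,v]\le 2(M-d[u])$ before inserting an edge from $Out[u]$, so every such insertion is out-pertinent. The only remaining case is insertions from $Out[u]$ during phase~1, where $M=\infty$ makes the test vacuous. For this I would use monotonicity of extractions (Lemma~\ref{L-order}\ref{P-extract}): every edge extracted from $P$ in phase~1 has key at most~$M$ (namely, at most the key of the edge that adjoins the $\lceil n/2\rceil$-th vertex), so any such extracted $(u,v)$ satisfies $c[u,v]\le M-d[u]\le 2(M-d[u])$ and is out-pertinent. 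For each vertex~$u$, the phase-1 insertions from $Out[u]$ form a contiguous prefix $e_1,\dots,e_j$ of $Out[u]$ in which $e_1,\dots,e_{j-1}$ were extracted during phase~1 and are therefore out-pertinent; only the final phase-1 insertion~$e_j$, inserted but not extracted by the end of phase~1, may fail to be pertinent, yielding at most one exceptional outgoing edge per vertex.

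The main obstacle will be the phase-1 portion of part~(ii): since the algorithm does not check the pertinence condition while $M=\infty$, the argument must be global, appealing to the monotonicity of $P$ to rule out all but the last phase-1 insertion per vertex. A related subtlety in part~(i) is showing that the single extra non-in-pertinent edge per vertex cannot itself trigger further insertions, which again follows from the non-decreasingness of $\min(P)$ and the bound $\min(P)\le d_v$ while $v\notin S$.
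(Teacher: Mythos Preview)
Your proof of part~$(ii)$ is correct and matches the paper's argument essentially line for line: phase-2 insertions from $Out[u]$ pass the explicit test, insertions from $Req[u]$ are in-pertinent by Lemma~\ref{L-request}, and phase-1 extractions have key at most~$M$ by Lemma~\ref{L-order}\ref{P-extract}, leaving at most one unextracted phase-1 insertion per vertex as the possible exception.

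Part~$(i)$ has the right structure and is more detailed than the paper's terse proof, but your argument for why the single extra non-in-pertinent edge cannot trigger a further $\backward(v)$ call is flawed. You assert that $\min(P)\le d_v$ while $v\notin S$, but this is not established and can fail: during the inner loop, before the in-pertinent SPT edge leading to~$v$ has been requested and (urgently) inserted into~$P$, $\min(P)$ may well exceed~$d_v$. The paper itself notes that ``$\min(P)$ is sometimes larger than $d[u_{k+1}]$''. The clean fix is to argue by contradiction using the lemma you already have: a further $\backward(v)$ call would require the extra edge to be extracted from~$Q$ with $v\notin S$, hence requested; but Lemma~\ref{L-request} says every requested edge is in-pertinent, contradicting the choice of the extra edge. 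This is precisely how the paper closes the loop, observing that Lemmas~\ref{L-in-request} and~\ref{L-request} together make the requested edges \emph{exactly} the in-pertinent edges, so the insertions into~$Q$ are the initial ones plus one per in-pertinent edge.
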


\begin{proof}
$(i)$.
By Lemmas~\ref{L-in-request} and~\ref{L-request}, the requested edges are precisely the in-pertinent edges.  For each such requested $(u,v)$, the next incoming edge to $v$ is inserted into $Q$.
The lightest incoming edge to those vertices found after the median are also inserted into $Q$.

  $(ii)$.  By Lemmas~\ref{L-delta} and~\ref{L-order}$\ref{P-extract}$ and Theorem~\ref{T-correct}, every edge $(u,v)$ extracted from~$P$ during the first stage of the algorithm satisfies $d_{u}+c[u,v] = key[u,v]\le M$.  Thus $c[u,v]\le M-d_{u}\leq 2\,(M-d_{u})$, so $(u,v)$ is out-pertinent.  Thus, only edges inserted, but not extracted, from~$P$ during the first stage may be non out-pertinent edges.  There can be at most one such outgoing edge for each vertex.
 During the second stage of the algorithm, edges inserted into~$P$ are either out-pertinent edges from~$Out[u]$, or requested edges, which by Lemma~\ref{L-request} are in-pertinent.
\end{proof}

\begin{theorem}\label{T-sssp} The running time of the forward-backward single-source shortest paths algorithm, when run on $\G_n(\EXP(1))$ with sorted adjacency lists, is $O(n)$, with very high probability.
\end{theorem}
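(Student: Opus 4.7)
The plan is to combine three ingredients: the structural bound from Theorem~\ref{T-PQ} on how many edges can ever enter the priority queues $P$ and $Q$, the probabilistic bound on $|\Eper|$ for $\G_n(\EXP(1))$ promised in Section~\ref{sec:probabilistic-analysis}, and the constant-amortized-cost bucket-based implementation of the priority queues described in Section~\ref{S-bucket}.

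First I would observe that outside of the priority queues every atomic step of the algorithm---initialization of a single vertex, a call to \forward, \backward, or \request, a pointer advance in an adjacency list, a single comparison or arithmetic operation---takes $O(1)$ time. The initialization loop over $V$ costs $O(n)$, the median computation costs $O(n)$ in the worst case, and every remaining action is charged to either an insertion into, or an extraction from, $P$ or $Q$. So the total running time reduces to $O(n)$ plus $O(1)$ times the total number of priority-queue events, provided each such event takes $O(1)$ amortized time.

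Next I would bound the number of priority-queue events. Theorem~\ref{T-PQ}$(i)$ says that $Q$ receives exactly the in-pertinent edges together with at most one additional incoming edge per vertex found after the median, giving at most $|\Einper| + n$ insertions. Theorem~\ref{T-PQ}$(ii)$ says that the edges entering $P$ are all pertinent except for at most one outgoing edge per vertex, so at most $|\Eper| + n$ insertions. Since each extraction is paid for by a prior insertion, the total number of priority-queue operations is at most $2|\Eper| + O(n)$. Theorem~\ref{T-verify}, which rests on Theorem~\ref{T-pertinent-exp} proved in Section~\ref{sec:probabilistic-analysis}, shows that $|\Eper| = O(n)$ except with probability exponentially small in $n$. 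Combining, the total number of priority-queue events is $O(n)$ with very high probability.

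It remains to pay $O(1)$ amortized time per priority-queue operation, which is where I would invoke the bucket-based structure of Section~\ref{S-bucket}. The two enabling facts are that $P$ and $Q$ are \emph{monotone} queues---by Lemma~\ref{L-order}\ref{P-extract} the keys extracted from $P$ are non-decreasing, and by Lemma~\ref{L-order}\ref{Q-extract} so are those extracted from $Q$---and that their keys are sums of already-computed distances with independent $\EXP(1)$ edge weights, a distribution tame enough to keep expected bucket occupancies $O(1)$. Granting this, each operation costs $O(1)$ amortized expected time, and multiplying by the $O(n)$ total number of operations yields an overall running time of $O(n)$ with very high probability. The main obstacle is precisely this last ingredient: the counting in Theorem~\ref{T-PQ} and the tail bound on $|\Eper|$ are already in hand, so the substantive work that remains is verifying that monotone bucket-based priority queues achieve $O(1)$ amortized cost on the specific exponential-type key distributions that arise, which is the content of Section~\ref{S-bucket}.
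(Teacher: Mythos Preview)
Your proposal is correct and follows essentially the same three-step decomposition as the paper's own proof: Theorem~\ref{T-PQ} bounds the number of priority-queue operations by $O(|\Eper|+n)$, Theorem~\ref{T-pertinent-exp} gives $|\Eper|=O(n)$ with very high probability, and Section~\ref{S-bucket} supplies the $O(1)$ amortized cost per operation. The only cosmetic slip is routing the $|\Eper|$ bound through Theorem~\ref{T-verify} (which concerns the verification algorithm) rather than citing Theorem~\ref{T-pertinent-exp} directly, and mentioning a separate ``median computation''---in the SSSP algorithm $M$ is simply read off when $|S|=\lceil n/2\rceil$---but neither affects the argument.
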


\begin{proof}
 By Theorem~\ref{T-PQ}, the algorithm performs $O(|\Eper|+n)$ priority queue operations. By Theorem~\ref{T-pertinent-exp}, $|\Eper|$ is $\Theta(n)$, with very high probability, for  $\G_n(\EXP(1))$. We show in Section~\ref{S-bucket} that the priority queue operations performed by the algorithm can be implemented in constant amortized time per operation, again with very high probability.
\end{proof}

\section{Performance with exponential edge costs} \label{sec:probabilistic-analysis}

\subsection{Shortest path tree for randomly weighted graphs}\label{S-analysis}

For the complete graph with i.i.d.\ $\EXP(1)$ edge weights, Davis and Prieditis \cite{DaPr93} and Janson
\cite{Janson99} gave an elegant characterization of the set of all
distances from a given source vertex $s$, which we now recall.  Let
$v_1,\dots,v_n$ denote the vertices arranged in increasing order of
distance from the source $s$ (in particular $v_1=s$).  Let $d_u$ denote
the distance to vertex $u$ from the source.  For $k=2,\dots,n$, let
$p_k$ denote the index of $v_k$'s parent in the shortest-path tree,
i.e., $(v_{p_k},v_k)$ is an edge of the shortest path tree.
Because of the memoryless property of the exponential distribution,
and because there are $k(n-k)$ edges from $v_1,\dots,v_k$ to the
remaining $n-k$ vertices, it follows that $d_{v_{k+1}}-d_{v_k}$ is an
exponential random variable with mean $1/(k(n-k))$, independent of the
previous distances, $v_{k+1}$ is a uniformly random vertex from the
remaining vertices, and $p_{k+1}$ is a uniformly random choice from
$1,\dots,k$.  The quantities $v_{k+1}$, $p_{k+1}$, and
$d_{v_{k+1}}-d_{v_k}$ are mutually independent, and the only
dependence that they have upon the values of $v_1,\dots,v_k$,
$p_2,\dots,p_k$, and $d_{v_1},\dots,d_{v_k}$ is that $v_{k+1}$ is
distinct from $v_1,\dots,v_k$ and that $p_{k+1}$ is one of $v_1,\ldots,v_k$.

Let's write
\begin{align*}
X_k &\sim \EXP[1/(k(n-k))]\quad\quad\text{(independent of one another)}\\
d_{v_k} &= \sum_{i=1}^{k-1} X_i\,.
\end{align*}
The weight of any edge $(v_{p_k},v_k)$ of the shortest-path tree is
\[c(v_{p_k},v_k)=d_{v_k}-d_{v_{p_k}}\,.\]
For any edge $(v_j,v_k)$ not in the shortest path tree, we have
\[c(v_j,v_k) = \begin{cases}\EXP(1) & k<j \\ d_{v_k}-d_{v_j}+\EXP(1) & j<k\,,\end{cases}\]
where all these exponential random variables are mutually independent and also
independent of $v_1,\dots,v_n$, $d_{v_1},\dots,d_{v_n}$, and $p_2,\dots,p_n$.
(Here there is a slight difference between the cases of directed graphs
and undirected graphs.  In the case of undirected graphs, the formula
is $|d_{v_k}-d_{v_j}|+\EXP(1)$ regardless of the order of $j$ and $k$.
Otherwise, the characterizations of
$v_1,\dots,v_n$, $d_{v_1},\dots,d_{v_n}$, $p_2,\dots,p_n$ and $c(u,v)$
are the same for directed and undirected graphs.)

\subsection{Comparison of pertinent edges to a Poisson process}

The above characterization of the shortest path tree is particularly convenient for the purposes of comparing the distance to a vertex to the median distance,
allowing us to estimate the number of pertinent edges.

\begin{theorem} \label{thm:Poisson-domination}
Let $\Lambda$ denote the random variable
\[ \Lambda =
2\,(n-1) \sum_{k=\lceil n/2\rceil}^{n-1} \frac{1}{k} \EXP(1)\,,
\]
where the $\EXP(1)$'s are i.i.d.  Then for the directed
graph $\G_n(\EXP(1))$, the number of out-pertinent edges
that are not shortest path tree edges
is stochastically dominated by $\Poi(\Lambda)$, and similarly for the
number of in-pertinent edges that are not SPT edges.
\end{theorem}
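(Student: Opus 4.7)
My plan is to condition on the distances $d_{v_1},\ldots,d_{v_n}$ (equivalently, on $X_1,\ldots,X_{n-1}$) and on the parent indices $p_2,\ldots,p_n$, and then use the Davis--Prieditis/Janson characterization from Section~\ref{S-analysis}: every non-SPT edge $(v_j,v_k)$ has weight $\EXP(1)$ (when $j>k$) or $d_{v_k}-d_{v_j}+\EXP(1)$ (when $j<k$ and $j\neq p_k$), and these ``extra'' exponentials are jointly independent of everything just conditioned on. For such an edge with $j\le\lceil n/2\rceil$, out-pertinence amounts to its extra $\EXP(1)$ lying below a threshold $t_{j,k}$: namely $2\,(M-d_{v_j})$ when $k<j$, and $2M-d_{v_j}-d_{v_k}$ when $k>j$. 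In both cases $t_{j,k}\le 2\,(M-d_{v_j})$ (using $d_{v_k}\ge d_{v_j}$ for $k>j$), and for $j>\lceil n/2\rceil$ the threshold is negative so no edge from $v_j$ can be out-pertinent.

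I then apply a standard Poisson-process coupling. Realize each i.i.d.\ extra $\EXP(1)$ random variable $U_{(j,k)}$ as the first arrival of an independent rate-$1$ Poisson process $N_{(j,k)}$ on $[0,\infty)$; then $\mathbf{1}[U_{(j,k)}\le t_{j,k}]\le N_{(j,k)}([0,t_{j,k}])$, so conditional on the $t_{j,k}$'s the number of out-pertinent non-SPT edges is stochastically dominated by a Poisson random variable with mean $\sum_{j,k} t_{j,k}$, which is at most $2\,(n-1)\sum_{j=1}^{\lceil n/2\rceil}(M-d_{v_j})$ since at most $n-1$ non-SPT edges leave each vertex. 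Writing $M-d_{v_j}=\sum_{i=j}^{\lceil n/2\rceil-1} X_i$ and interchanging the order of summation gives $\sum_{j=1}^{\lceil n/2\rceil}(M-d_{v_j})=\sum_{i=1}^{\lceil n/2\rceil-1} iX_i$. Since $iX_i\sim \EXP(1)/(n-i)$, the substitution $k=n-i$ rewrites the bound as $2\,(n-1)$ times a sum of i.i.d.\ $\EXP(1)/k$ over a range of $k$ contained in $\{\lceil n/2\rceil,\ldots,n-1\}$. Stochastic monotonicity of $\Poi(\cdot)$ in its mean then upgrades this to domination by $\Poi(\Lambda)$.

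The in-pertinent half is symmetric: only $k>\lceil n/2\rceil$ contributes, and for an incoming non-SPT edge to $v_k$ the threshold is $2\,(d_{v_k}-M)$ when $j>k$ and $d_{v_j}+d_{v_k}-2M$ when $j<k$, both bounded by $2\,(d_{v_k}-M)$. An analogous interchange gives a threshold sum at most $2\,(n-1)\sum_{k=\lceil n/2\rceil+1}^n (d_{v_k}-M)=2\,(n-1)\sum_{i=\lceil n/2\rceil}^{n-1}(n-i)X_i$, and using $(n-i)X_i\sim\EXP(1)/i$ this matches the distribution of $\Lambda$ exactly. The main technical content is the Poisson-process coupling in step two; once Davis--Prieditis guarantees that the ``extra'' exponentials are independent of the distances and parent structure, that coupling cleanly converts a random-threshold Bernoulli sum into a Poisson upper bound, and the remainder is an interchange of summations and a re-indexing of $\sum iX_i$ into the form defining $\Lambda$.
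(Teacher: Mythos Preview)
Your argument is correct and follows essentially the same route as the paper: couple each relevant exponential to a rate-$1$ Poisson process so that the indicator of falling below a (random) threshold is bounded by a Poisson count, compute the aggregate threshold as $(n-1)$ times $\sum_j 2\,(M-d_{v_j})$ (resp.\ $\sum_k 2\,(d_{v_k}-M)$), and then perform the same interchange of summation and the same identification $iX_i\sim\EXP(1)/(n-i)$, $(n-i)X_i\sim\EXP(1)/i$ to arrive at the distribution of~$\Lambda$. The only notable difference is that you explicitly condition on the SPT data and couple the ``extra'' exponentials $U_{j,k}$ from the Davis--Prieditis/Janson characterization, whereas the paper couples the original edge weights directly; your version makes the independence structure clearer and avoids having to argue why the Poisson counts remain Poisson after conditioning on the SPT, but the two computations coincide line-for-line once one unwinds the conditioning.
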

\begin{proof}
For each edge we can associate an independent Poisson point process on
$\R^+$, and that edge's associated exponential random variable in its
weight is then just the first point in the point process.  The
indicator variable for the edge being lighter than a certain threshold
is dominated by the number of points of the point process in that
interval.  Thus the number of out-pertinent edges which are not
shortest-path tree edges is dominated by a Poisson random variable
with a random rate $\Lambda_\text{out}$, and similarly for the in-pertinent edges
which are not SPT edges.  The variable $\Lambda_\text{in}$ for in-pertinent
edges is given by
\begin{align*}
 \Lambda_\text{in} &= (n-1) \sum_{j=\lceil n/2\rceil+1}^{n} 2\,(d_{v_j}-M)\\
         &= 2\,(n-1) \sum_{j=\lceil n/2\rceil+1}^{n} \sum_{k=\lceil n/2\rceil}^{j-1} X_k \\
         &= 2\,(n-1) \sum_{k=\lceil n/2\rceil}^{n-1} (n-k) X_k \\
         &= 2\,(n-1) \sum_{k=\lceil n/2\rceil}^{n-1} \frac{1}{k} \EXP(1)\,,
\end{align*}
where the $\EXP(1)$'s are i.i.d., and similarly for out-pertinent edges the rate is
\begin{align*}
 \Lambda_\text{out} &= (n-1) \sum_{j=1}^{\lceil n/2\rceil-1} 2\,(M-d_{v_j})\\
         &= 2\,(n-1) \sum_{j=1}^{\lceil n/2\rceil-1} \sum_{k=j}^{\lceil n/2\rceil-1} X_k \\
         &= 2\,(n-1) \sum_{k=1}^{\lceil n/2\rceil-1} k X_k \\
         &= 2\,(n-1) \sum_{k=1}^{\lceil n/2\rceil-1} \frac{1}{n-k} \EXP(1)
         = 2\,(n-1) \sum_{k=\lfloor n/2\rfloor+1}^{n-1} \frac{1}{k} \EXP(1)\,.
\end{align*}
When $n$ is odd, $\Lambda_\text{out}$ and $\Lambda_\text{in}$ are equal in distribution, while
if $n$ is even, $\Lambda_\text{in}$ has one more term.  In either case, the theorem follows.
\end{proof}

\subsection{First moment estimate}

Using the comparison with a Poisson process we get an easy $O(n)$ bound on the expected number of pertinent edges.
We obtain the correct constant later in Section~\ref{expected-pertinent-constant}.

\begin{theorem} \label{T-EXP-Exp}
For $\Lambda$ defined in Theorem~\ref{thm:Poisson-domination}, we have
\[\E[\Poi(\Lambda)] < (\log 4)n+1\,.\]
\end{theorem}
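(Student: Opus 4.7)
The plan is to reduce the expected value of the Poisson to the expected value of its random rate, and then bound a harmonic tail sum.

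First, I would condition on $\Lambda$. Since the Poisson distribution with parameter $\Lambda$ has mean $\Lambda$, the tower property gives $\E[\Poi(\Lambda)] = \E[\E[\Poi(\Lambda)\mid\Lambda]] = \E[\Lambda]$. Since the $\EXP(1)$ summands each have mean $1$, linearity of expectation gives
\[ \E[\Lambda] \;=\; 2\,(n-1)\sum_{k=\lceil n/2\rceil}^{n-1}\frac{1}{k}. \]
So the problem reduces to showing that this harmonic tail is at most $\log 2 + 1/(2(n-1))$ in the worst case.

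The naive integral bound $\sum_{k=m}^{n-1}1/k\le \log((n-1)/(m-1))$ gives approximately $\log 2$, but its $O(1/n)$ correction produces a residual constant of $3$ or $4$ after multiplication by $2(n-1)$, which is too loose for the claimed $+1$ slack. To get the right constant, I would instead use the midpoint-rule bound: since $x\mapsto 1/x$ is convex,
\[ \frac{1}{k} \;\le\; \int_{k-1/2}^{k+1/2}\frac{dx}{x}, \]
which telescopes to
\[ \sum_{k=\lceil n/2\rceil}^{n-1}\frac{1}{k} \;\le\; \log\frac{2n-1}{2\lceil n/2\rceil-1}. \]

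I would then split by parity. When $n$ is odd, the right-hand side is $\log((2n-1)/n)<\log 2$, so $\E[\Lambda]<2(n-1)\log 2=(n-1)\log 4<n\log 4+1$. When $n$ is even, $2\lceil n/2\rceil-1=n-1$, so the right-hand side is $\log((2n-1)/(n-1))=\log 2+\log(1+1/(2(n-1)))\le \log 2+1/(2(n-1))$, using $\log(1+x)\le x$. Multiplying by $2(n-1)$ then gives
\[ \E[\Lambda] \;\le\; (n-1)\log 4 + 1 \;<\; n\log 4 + 1. \]
Either way the stated bound holds. There is no real obstacle; the only subtle point is selecting the symmetric midpoint-rule bound (as opposed to a one-sided rectangle or trapezoid bound) so that the leftover $O(1/n)$ term is small enough to fit into the claimed additive constant of~$1$.
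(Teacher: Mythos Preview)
Your proof is correct and follows essentially the same route as the paper: reduce $\E[\Poi(\Lambda)]$ to $\E[\Lambda]$, express this as $2(n-1)$ times a harmonic tail, bound the tail by an integral, and split on the parity of~$n$. The only difference is the flavor of integral comparison: the paper uses the left-endpoint bound $1/k\le\int_{k-1}^{k}dx/x$ (peeling off the term $1/(n-1)$ in the even case to land on $\log 2+1/(n-1)$), whereas you use the midpoint bound $1/k\le\int_{k-1/2}^{k+1/2}dx/x$, which gives a slightly tighter additive constant ($(n-1)\log 4+1$ versus the paper's $(n-1)\log 4+2$) but is otherwise the same argument.
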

\begin{proof}
It is straightforward that $\E[\Poi(\Lambda)]=\E[\Lambda]$.
We have
\[ \E[\Lambda] = 2\,(n-1) \sum_{k=\lceil n/2\rceil}^{n-1} \frac{1}{k}\,.
\]
For odd $n$ the sum is at most
\[\int_{(n-1)/2}^{n-1}\frac{dk}{k}=\log 2.\]
For even $n$ the sum is at most
\[\frac{1}{n-1}+\int_{(n-2)/2}^{n-2}\frac{dk}{k}=\frac{1}{n-1}+\log 2\,.\]
Thus for $n$ even, $\E[\Lambda]\leq(\log 4)(n-1)+2<(\log 4)n+1$.
\end{proof}
Thus the expected number of in-pertinent edges not in the shortest path tree is less than $(\log 4)n+1$, and similarly for out-pertinent edges not in the shortest path tree.
In particular, the expected number of pertinent edges is less than $(1+4\log 2)n + 1<3.7726n+1$.

\subsection{Large deviations}
\begin{theorem}\label{T-pertinent-exp}
  Let $|\Eper|$ be the number of pertinent edges with respect to a tree
  of shortest paths of (either directed or undirected) $\G_n(\EXP(1))$, where $M$ is taken to be the
  median distance.
  Then $\Pr[ |E_{per}|\ge (5+c)n ] < {\rm e}^{-n \frac{c^2}{10(5+c)}}$.
\end{theorem}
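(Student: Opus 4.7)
The plan is to reduce the claim, via Theorem~\ref{thm:Poisson-domination}, to a tail bound on a compound Poisson random variable, and then to establish that bound by a direct Chernoff argument.

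First, I would write $|\Eper| \le (n-1) + N_{\mathrm{out}} + N_{\mathrm{in}}$, where $N_{\mathrm{out}}$ and $N_{\mathrm{in}}$ count the non-SPT out- and in-pertinent edges, respectively. The SPT contributes at most $n-1$ edges, each being exactly one of the two types (by the Remark following the definition of pertinent edges). No non-tree edge can be both, since adding the two defining strict inequalities would give $c[u,v] < d[v]-d[u]$, contradicting $d[v] \le d[u]+c[u,v]$. Now, from the proof of Theorem~\ref{thm:Poisson-domination}, $N_{\mathrm{out}}$ and $N_{\mathrm{in}}$ are stochastically dominated by $\Poi(\Lambda_{\mathrm{out}})$ and $\Poi(\Lambda_{\mathrm{in}})$, respectively. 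Crucially, $\Lambda_{\mathrm{out}}$ and $\Lambda_{\mathrm{in}}$ depend on disjoint subsets of the spacings $X_k$ and are therefore independent, and conditional on these rates the two Poisson counts are independent (being counts from disjoint edges' independent point processes). Hence $N_{\mathrm{out}}+N_{\mathrm{in}} \preceq \Poi(\Lambda^*)$, where
\[ \Lambda^* \;=\; \Lambda_{\mathrm{out}} + \Lambda_{\mathrm{in}} \;=\; \sum_j W_j\, Y_j, \]
with $Y_j$ i.i.d.\ $\EXP(1)$ and weights $W_j = 2(n-1)/k$ for $k \ge \lfloor n/2\rfloor+1$, satisfying $W_{\max}<4$ and $\sum_j W_j = \E[\Lambda^*] \le 2((\log 4)n+1)$ by Theorem~\ref{T-EXP-Exp}. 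The event $\{|\Eper|\ge(5+c)n\}$ thus forces $\Poi(\Lambda^*) \ge (4+c)n+1$.

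For the tail bound I would apply the standard Chernoff argument: for $u=e^s-1$ with $uW_{\max}<1$,
\[ \Pr[\Poi(\Lambda^*)\ge t] \;\le\; e^{-st}\,\E\!\left[e^{u\Lambda^*}\right] \;=\; e^{-st}\prod_j \frac{1}{1-uW_j}. \]
Comparing Taylor series term by term using $W_j^k \le W_j\,W_{\max}^{k-1}$ gives the inequality
\[ -\log(1-uW_j) \;\le\; \frac{W_j}{W_{\max}}\bigl(-\log(1-uW_{\max})\bigr), \]
so summing over $j$ yields
\[ \log\Pr[\Poi(\Lambda^*)\ge t] \;\le\; -s\,t - \frac{\E[\Lambda^*]}{W_{\max}}\,\log\!\bigl(1-(e^s-1)W_{\max}\bigr), \]
an explicit function of $s$ alone given the known bounds on $W_{\max}$ and $\E[\Lambda^*]$.

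The remaining — and main — obstacle is to set $t=(4+c)n+1$ and choose $s$ to push this exponent down to $-nc^2/(10(5+c))$. A natural scaling is $(e^s-1)W_{\max}$ of order $c/(5+c)$, which keeps it bounded away from $1$ uniformly in $c\ge 0$. Under this choice, Taylor-expanding $\log(1+u)$ and $-\log(1-uW_{\max})$ to second order makes the linear-in-$c$ parts partially cancel against the $\E[\Lambda^*]$ contribution, leaving a quadratic surplus of size $\Theta(nc^2/(5+c))$. The fiddly work is verifying that the constants line up both in the moderate-deviation regime (where the bound behaves like $\exp(-c^2 n/50)$) and in the large-deviation regime (where it behaves like $\exp(-cn/10)$), and in particular that the constant $10$ in the denominator can be achieved uniformly; this is elementary calculus but needs care to avoid losing a factor of $2$.
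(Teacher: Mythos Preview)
Your approach is correct and shares the paper's setup (split off the $n-1$ SPT edges, then dominate the non-SPT pertinent count by $\Poi(\Lambda_{\mathrm{out}}+\Lambda_{\mathrm{in}})$ via Theorem~\ref{thm:Poisson-domination}), but the endgame differs. Where you keep the weights $W_j=2(n-1)/k$ and run a direct Chernoff argument on the compound Poisson, the paper first coarsens every weight to~$4$ (since $2(n-1)/k<4$ for $k\ge\lceil n/2\rceil$), giving $\Lambda\preceq\Lambda'=\sum_{i=1}^n 4\,\EXP(1)$. The payoff is the identity $\Poi(\alpha\,\EXP(1))=\Geo(1/(1+\alpha))$: with $\alpha=4$ this makes $\Poi(\Lambda')$ exactly $\NB(n,\tfrac15)$, and then $\Pr[\NB(n,\tfrac15)\ge(4+c)n]\le\Pr[\BIN((5+c)n,\tfrac15)\le n]$, to which the standard lower-tail bound $\Pr[X<(1-\delta)\mu]\le\ee^{-\mu\delta^2/2}$ applies with $\mu=(5+c)n/5$ and $\delta=c/(5+c)$, yielding the exponent $nc^2/(10(5+c))$ with no optimization at all. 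Your weight-comparison inequality, once you plug in $W_{\max}=4$ and $\sum_j W_j\le 4n$, collapses to the MGF of $\NB(n,\tfrac15)$ anyway, so the two routes converge; the paper's detour through the negative binomial simply sidesteps the ``fiddly calculus'' you flagged.

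One small quibble: the dominating Poisson counts for $N_{\mathrm{out}}$ and $N_{\mathrm{in}}$ are \emph{not} built from disjoint edges' point processes---each edge's process is interrogated by both thresholds---so your independence justification is off. The conclusion $N_{\mathrm{out}}+N_{\mathrm{in}}\preceq\Poi(\Lambda_{\mathrm{out}}+\Lambda_{\mathrm{in}})$ is still correct: conditional on the distances, the sum over all edges of both counts is Poisson with the summed rate, since the per-edge processes are independent.
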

\begin{proof}
Let $\Lambda_{in} = 2\,(n-1) \sum_{k=\lceil n/2\rceil}^{n-1} \frac{1}{k} \EXP(1)$,
$\Lambda_{out} = 2\,(n-1) \sum_{k=\lfloor n/2\rfloor+1}^{n-1} \frac{1}{k} \EXP(1)$ and $\Lambda = \Lambda_{in} + \Lambda_{out}$. We have shown above that the number of pertinent edges that are not shortest paths edges is stochastically dominated by the random variable $\Poi(\Lambda)$. (Note that all $\EXP(1)$ random variables in the definition of $\Lambda$ are still independent.)
Define $\Lambda'=\sum_{i=1}^n 4\EXP(1)$. As $\frac{2(n-1)}{k}<4$, for $\lceil n/2\rceil\le k<n$, and as $\Lambda$ is the sum of at most $n-1$ terms, we get that~$\Lambda$ is stochastically dominated by $\Lambda'$.

It is a simple exercise to check that $\Poi(\alpha \EXP(1))$ is just $\Geo(\frac{1}{1+\alpha})$, where $\Geo(p)$ is a \emph{geometric} random variable which counts the number of \emph{failures} until the first success in a sequence of independent Bernoulli trials, each with a success probability of~$p$. (Note that according to this definition, the success ending the experiment is not counted. Thus, $\Geo(p)$ may attain the value~$0$, and $\E[\Geo(p)]=\frac{1}{p}-1$.)

Thus, $\Poi(\Lambda)$ is stochastically dominated by $\Poi(\Lambda')$ which has the same distribution as the sum of $n$ independent copies of $\Geo(\frac{1}{5})$. This, in turn, is just $\NB(n,\frac{1}{5})$, where $\NB(r,p)$ is a \emph{Negative Binomial} random variable which counts the number of \emph{failures} until achieving $r$ successes in a sequence of independent Bernoulli trials, each with a success probability of $p$.

Now, $\Pr[ \NB(r,p) \ge k ] \le \Pr[ \BIN(r+k,p)\le r]$, where $\BIN(n,p)$ is a \emph{Binomial} random variable. If $X\sim \BIN(n,p)$, so that $\mu=\E[X]=np$, then by Chernoff's bound we have $\Pr[X<(1-\delta)\mu] \le {\rm e}^{-\mu\delta^2/2}$.

Thus, for $k=(4+c)n$, where $c>0$, we have
\[ \textstyle \Pr[ |E_{per}|\ge (5+c)n ] \;\le\; \Pr[ \Poi(\Lambda') \ge k ] \;=\;
\Pr[ \NB(n,\frac{1}{5}) \ge k ] \;\le\; \Pr[ \BIN(n+k,\frac{1}{5}) \le n ]\;.
\]
Let $X\sim \BIN(n+k,\frac{1}{5})$. We have $\mu = \frac{1}{5}(n+k) = \frac{5+c}{5}n$, and $n=(1-\delta)\mu$, where $\delta=\frac{c}{5+c}$. Thus,
\[ \textstyle \Pr[ |E_{per}|\ge (5+c)n ] \;<\; {\rm e}^{-\mu\delta^2/2}
\;=\; {\rm e}^{-\frac{5+c}{5}n\, \left(\frac{c}{5+c}\right)^2/2} \;=\;
{\rm e}^{-n \frac{c^2}{10(5+c)}}
\;. \qedhere\]
\end{proof}

\subsection{Expected number of pertinent edges} \label{expected-pertinent-constant}

Since \[\E[X_k] = \frac{1}{k(n-k)} = \frac{1}{n}\left(\frac{1}{k} + \frac{1}{n-k}\right) \,,\]
by induction we have
\[
 \E\big[d_{v_k}\big] = \frac{H_{k-1} - H_{n-k} + H_{n-1}}{n}\,,
\]
where $H_k=1/1+\cdots+1/k = \log k + \gamma + 1/(2k) + \cdots$ is the $k$-th harmonic number
and $\gamma=0.577\dots$ is Euler's constant.
This is approximately
\[
 \E\big[d_{v_k}\big] = M + \frac{1}{n}\log\frac{k}{n+1-k} + O(1/(kn)) + O(1/((n+1-k)n))\,,
\]
where $M=d_{v_{\lceil n/2\rceil}}$ is the median distance.

For integers $i$ and $j$ in the range from $1$ to $n$ we let \[Y_{i,j}=2M-d_{v_i}-d_{v_j}\,.\]

\begin{lemma}
For $\G_n(\EXP(1))$, in either the directed or undirected setting, for $1\leq i<j\leq n$,
\begin{align*}
\Pr[\text{edge $(v_i,v_j)$ is non-SPT out-pertinent}] &= \tfrac{j-2}{j-1}\,\E[\max(0,1-\exp(-Y_{i,j}))] \\
\Pr[\text{edge $(v_i,v_j)$ is non-SPT in-pertinent}] &= \tfrac{j-2}{j-1}\,\E[\max(0,1-\exp(+Y_{i,j}))]\,.
\end{align*}
\end{lemma}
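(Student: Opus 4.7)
The plan is to apply the Davis--Prieditis/Janson characterization of Section~\ref{S-analysis} directly. For $i<j$, the edge $(v_i,v_j)$ is an SPT edge iff $p_j=i$; since $p_j$ is uniform on $\{1,\dots,j-1\}$ independently of the ordering $v_1,\dots,v_n$, the distances $d_{v_1},\dots,d_{v_n}$, and the other parents, the probability that $(v_i,v_j)$ is \emph{not} an SPT edge equals $(j-2)/(j-1)$. Moreover, conditional on $p_j\ne i$, the characterization gives $c(v_i,v_j)=d_{v_j}-d_{v_i}+E$, where $E\sim\EXP(1)$ is independent of $V=(v_1,\dots,v_n)$, $D=(d_{v_1},\dots,d_{v_n})$, and of the full parent sequence $P=(p_2,\dots,p_n)$.

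Next I will translate the pertinence conditions into conditions on $E$. Substituting $c(v_i,v_j)=d_{v_j}-d_{v_i}+E$, out-pertinence $c(v_i,v_j)\le 2(M-d_{v_i})$ rearranges to $E\le 2M-d_{v_i}-d_{v_j}=Y_{i,j}$, while in-pertinence $c(v_i,v_j)<2(d_{v_j}-M)$ rearranges to $E<d_{v_i}+d_{v_j}-2M=-Y_{i,j}$. Because $E$ has a continuous distribution and is independent of $V$, $D$, and $P$, for any real $y$ we have $\Pr[E\le y\mid V,D,P]=\Pr[E<y\mid V,D,P]=\max(0,1-e^{-y})$ (with the convention that the expression is $0$ when $y<0$).

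Assembling these ingredients, I would condition on $(V,D,P)$ and write
\[\Pr\bigl[(v_i,v_j)\text{ is non-SPT and out-pertinent}\mid V,D,P\bigr]\;=\;\mathbf{1}[p_j\ne i]\cdot\max\!\bigl(0,\,1-e^{-Y_{i,j}}\bigr),\]
with the analogous identity in the in-pertinent case using $\max(0,1-e^{Y_{i,j}})$ in place of the right factor. Taking the outer expectation and using that $p_j$ is independent of $(V,D)$, and hence of the $(V,D)$-measurable random variable $Y_{i,j}$, lets me factor $\Pr[p_j\ne i]=(j-2)/(j-1)$ out of the expectation, yielding both displayed formulas.

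The hard part is really just the bookkeeping of independences. The nontrivial probabilistic ingredient---that the $\EXP(1)$ increment appearing in a non-SPT edge's weight is \emph{jointly} independent of $V$, of $D$, and of every $p_k$---is exactly what the characterization in Section~\ref{S-analysis} supplies, so no additional input is needed. The directed versus undirected distinction is also immaterial here, since for $i<j$ the characterization produces the same joint law of $V$, $D$, $P$, and $c(v_i,v_j)$ in both settings.
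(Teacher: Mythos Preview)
Your proof is correct and follows essentially the same approach as the paper: both use the Davis--Prieditis/Janson characterization to get $\Pr[p_j\ne i]=(j-2)/(j-1)$, write $c(v_i,v_j)=d_{v_j}-d_{v_i}+E$ with $E\sim\EXP(1)$ independent of the distances on the non-SPT event, and translate the pertinence inequalities into $E\le Y_{i,j}$ and $E<-Y_{i,j}$. Your version is slightly more explicit about the conditioning and independence (and avoids the paper's preliminary case split on $i\ge\lceil n/2\rceil$, which is subsumed by the $\max(0,\cdot)$), but the argument is the same.
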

\begin{proof}
If $i\geq \lceil n/2\rceil$ then a.s.\ edge $(v_i,v_j)$ is not
out-pertinent, and if $j>i\geq \lceil n/2\rceil$ then
$Y_{i,j}\leq0$ so $\E[\max(0,1-\exp(-Y_{i,j}))]=0$. Suppose now that
$i<\lceil n/2\rceil$ and $i<j$.  With probability $1/(j-1)$ edge
$(v_i,v_j)$ is an SPT edge, and otherwise $c(v_i,v_j)\sim
d_{v_j}-d_{v_i}+\EXP(1)$. Conditional on edge $(v_i,v_j)$ not being
an SPT edge, it is out-pertinent when
\begin{align*}
d_{v_j}-d_{v_i}+\EXP(1) &\leq 2\,(M-d_{v_i}) \\
\EXP(1) &\leq 2M-d_{v_i}-d_{v_j} = Y_{i,j}\\
\Pr[\text{$(v_i,v_j)$ out-pertinent} \,|\, \text{distances, not SPT edge}] &= \max(0,1-\exp(-Y_{i,j}))\\
\Pr[\text{$(v_i,v_j)$ out-pertinent}\,|\, \text{not SPT edge}] &= \E[\max(0,1-\exp(-Y_{i,j}))]\,.
\end{align*}

The in-pertinent edges are similar.  We need only consider $j>\lceil n/2\rceil$, and we are checking if $c(v_i,v_j)<2\,(d_v-M)$.
For $i<j$ we are testing if
\begin{align*}
d_v-d_u+\EXP(1) &< 2\,(d_v-M) \\
\EXP(1) &< d_v+d_u - 2M = -Y_{i,j}\\
\Pr[\text{$(v_i,v_j)$ in-pertinent}\,|\, \text{not SPT edge}] &= \E[\max(0,1-\exp(+Y_{i,j}))]\,. \qedhere
\end{align*}
\end{proof}

\begin{lemma}
For any real-valued random variable $X$
\[
\max(0,\E[X]) \leq \E[\max(0,X)] \leq \max(0,\E[X])
 + \Var[X]^{1/2}\,.
\]
\end{lemma}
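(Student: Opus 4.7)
The plan is to prove the two inequalities separately, both of which follow from elementary facts about the function $f(x) = \max(0, x)$.

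For the lower bound, I would observe that $\max(0, X) \geq X$ pointwise and $\max(0, X) \geq 0$ pointwise. Taking expectations gives $\E[\max(0, X)] \geq \E[X]$ and $\E[\max(0, X)] \geq 0$, and combining these yields $\E[\max(0, X)] \geq \max(0, \E[X])$. (Equivalently, this is Jensen's inequality applied to the convex function $x \mapsto \max(0, x)$.)

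For the upper bound, the key observation is that $f(x) = \max(0, x)$ is $1$-Lipschitz, i.e., $|f(x) - f(y)| \leq |x - y|$ for all real $x, y$ (which one verifies by a short case analysis on the signs of $x$ and $y$). Letting $\mu = \E[X]$, this gives the pointwise bound $f(X) - f(\mu) \leq |X - \mu|$, and taking expectations yields
\[
\E[\max(0, X)] - \max(0, \E[X]) \;\leq\; \E\bigl[\,|X - \mu|\,\bigr].
\]
Then by the Cauchy--Schwarz inequality (applied to $|X - \mu|$ and the constant $1$), or equivalently by Jensen's inequality for the concave square root,
\[
\E\bigl[\,|X - \mu|\,\bigr] \;\leq\; \sqrt{\E[(X - \mu)^2]} \;=\; \Var[X]^{1/2},
\]
which completes the upper bound.

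Since both halves of the statement reduce to standard one-line facts (pointwise domination plus Lipschitzness plus Cauchy--Schwarz), there is no real obstacle here; the only thing to be mindful of is splitting the single absolute-value bound from the Lipschitz estimate from the one-sided bound actually needed, so that the chain of inequalities lines up cleanly without an unnecessary factor of $2$.
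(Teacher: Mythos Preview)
Your proof is correct. The lower bound is exactly the paper's argument (Jensen/convexity). For the upper bound the paper takes a different route: it bounds $\E[\max(0,X)]\le\E[|X|]\le\E[X^2]^{1/2}=(\E[X]^2+\Var[X])^{1/2}$, then uses $(a+b)^{1/2}\le a^{1/2}+b^{1/2}$ and splits into the cases $\E[X]\ge 0$ and $\E[X]\le 0$ (the latter via the identity $\max(0,X)=X+\max(0,-X)$). Your Lipschitz argument centered at $\mu=\E[X]$ is a bit cleaner, since it handles both signs of $\E[X]$ uniformly and goes straight to $\E|X-\mu|\le\Var[X]^{1/2}$ without a case split; the paper's version, on the other hand, makes the dependence on $\E[X^2]$ explicit, which is what it actually uses downstream when it combines this lemma with a separate $\E[X^2]$ bound.
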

\begin{proof}
Since $\max(0,\cdot)$ is convex,
\[\E[\max(0,X)]\geq\max(0,\E[X])\,.\]
In the reverse direction,
\[\E[\max(0,X)] \leq \E[|X|] \leq \E[X^2]^{1/2} = (\E[X]^2+\Var[X])^{1/2} \,.\]
Observe that $(a+b)^{1/2} \leq a^{1/2} + b^{1/2}$.
For $\E[X]\geq 0$ we have
\[ \E[\max(0,X)] \leq \E[X] + \Var[X]^{1/2} \,,\]
and for $\E[X]\leq 0$ we have $\E[\max(0,X)] = \E[X] + \E[\max(0,-X)] \leq \Var[X]^{1/2}$.  Combining these inequalities gives the lemma.
\end{proof}

\begin{theorem} \label{T-expected-pertinent}
For the undirected graph $\G_n(\EXP(1))$,
\begin{align*}
\E[\text{\# out-pertinent non-SPT edges}] &= (\tfrac12 + o(1)) n\\
\E[\text{\# in-pertinent non-SPT edges}] &= (\tfrac12 + o(1)) n\,,
\end{align*}
while for the directed graph $\G_n(\EXP(1))$,
\begin{align*}
\E[\text{\# out-pertinent non-SPT edges}] &= (\log 2 + o(1)) n\\
\E[\text{\# in-pertinent non-SPT edges}] &= (\log 2 + o(1)) n\,.
\end{align*}
\end{theorem}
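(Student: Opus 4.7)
The plan is to compute each expected count by summing per-pair probabilities from the preceding lemma and converting the resulting Riemann sum to an integral. For the undirected case, the expected number of out-pertinent non-SPT edges is
\[\sum_{1\le i<j\le n}\frac{j-2}{j-1}\,\E[\max(0,1-\exp(-Y_{i,j}))],\]
with an analogous expression using $+Y_{i,j}$ for the in-pertinent count. For the directed case, there are additional contributions from the edges $(v_i,v_j)$ with $j<i$: these are never SPT edges, their weights are i.i.d.\ $\EXP(1)$ independent of the distances, and a calculation parallel to the lemma yields $\Pr[(v_i,v_j)\text{ is out-pertinent}]=\E[\max(0,1-\exp(-2(M-d_{v_i})))]$ for $j<i$, with a symmetric formula for in-pertinence in terms of $d_{v_j}$.

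The second step replaces each per-pair probability by $\max(0,\E[Y_{i,j}])$ (or its backward analogue). From $X_\ell\sim\EXP[1/(\ell(n-\ell))]$ one gets $\Var[d_{v_k}]=\sum_{\ell<k}1/(\ell(n-\ell))^2=O(1/n^2)$, so $\sqrt{\Var[Y_{i,j}]}=O(1/n)$; combined with the Taylor expansion $1-\exp(-y)=y+O(y^2)$ this lets me replace $1-\exp(\pm Y_{i,j})$ with $\pm Y_{i,j}$ up to a lower-order term. The crude bound from the preceding lemma is $O(1/n)$ per pair, which naively sums to $O(n)$ over $n^2$ pairs, so I will refine it by splitting into regimes: pairs with $|\E[Y_{i,j}]|\gg 1/n$ contribute errors that are exponentially small by concentration, while pairs with $|\E[Y_{i,j}]|=O(1/n)$ form a ``boundary'' region whose total error is $o(n)$.

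The third step converts the sum to an integral via $i=\alpha n$, $j=\beta n$, using the expansion $\E[d_{v_k}]=M+(1/n)\log(k/(n+1-k))+O(1/(kn)+1/((n+1-k)n))$ derived above. For the undirected out-pertinent count this yields
\[n\iint_{0<\alpha<\beta,\ \alpha+\beta<1}\log\frac{(1-\alpha)(1-\beta)}{\alpha\beta}\,d\alpha\,d\beta+o(n).\]
Using $\int_0^1(1-\alpha)\log(1-\alpha)\,d\alpha=-1/4$ and $\int_0^1(1-\alpha)\log\alpha\,d\alpha=-3/4$, the integral over $\{\alpha,\beta>0:\alpha+\beta<1\}$ equals $1$, and by symmetry in $\alpha,\beta$ its restriction to $\alpha<\beta$ equals $1/2$. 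Thus the undirected out-pertinent count is $(1/2+o(1))n$, and the undirected in-pertinent count follows by the symmetry $\alpha\leftrightarrow 1-\alpha$, $\beta\leftrightarrow 1-\beta$.

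For the directed case the $i<j$ contribution is identical to the undirected one, and the extra $j<i$ (backward) contribution is
\[\sum_{i<\lceil n/2\rceil}(i-1)\,\max(0,2(M-\E[d_{v_i}]))+o(n)\ \sim\ 2n\int_0^{1/2}\alpha\log\frac{1-\alpha}{\alpha}\,d\alpha=(\log 2-\tfrac12)n,\]
giving a total of $(\log 2+o(1))n$. The directed in-pertinent count is analogous. The main obstacle will be the concentration argument in the second step: the preceding variance lemma is not tight enough by itself, and one needs a tail-based argument showing that pairs with $|\E[Y_{i,j}]|$ much larger than $\sqrt{\Var[Y_{i,j}]}$ have essentially no contribution to the error, so that the $O(1/n)$ per-pair slack does not accumulate to a $\Theta(n)$ error.
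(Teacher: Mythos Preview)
Your outline is essentially the paper's approach: sum the per-pair probabilities from the lemma, replace $\E[\max(0,1-\exp(\mp Y_{i,j}))]$ by $\max(0,\pm\E[Y_{i,j}])$ via the variance lemma, and pass to an integral. Your integrals are correct (the paper reduces to the single integral $\int_0^{1/2}(1-2x)\log\frac{1-x}{x}\,dx=\tfrac12$ via the symmetry $\sum_{j=i+1}^{n-i}\log\frac{n+1-j}{j}=0$, but your double integral gives the same value), and your directed $j<i$ contribution $(\log 2-\tfrac12)n$ matches the paper.

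The gap is in your error control, and you have misdiagnosed the obstacle. You write $\sqrt{\Var[Y_{i,j}]}=O(1/n)$ uniformly, note that this sums to $O(n)$, and then propose to rescue the argument with a regime split and tail bounds. That detour is unnecessary. The point you are missing is that the variance bound can be sharpened: for $i\le\lceil n/2\rceil\le j$ one has $Y_{i,j}=\sum_{k=i}^{\lceil n/2\rceil-1}X_k-\sum_{k=\lceil n/2\rceil}^{j-1}X_k$, so
\[
\Var[Y_{i,j}]=\sum_{k=i}^{j-1}\frac{1}{k^2(n-k)^2}=O\!\left(\frac{1/i+1/(n{+}1{-}j)}{n^2}\right),
\]
whence $\sqrt{\Var[Y_{i,j}]}=O\bigl((i^{-1/2}+(n{+}1{-}j)^{-1/2})/n\bigr)$. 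Summing this over all pairs $i<j$ gives $O(\sqrt n)$, not $O(n)$, so the variance lemma by itself already yields an $o(n)$ error term. No concentration or regime splitting is needed; the ``preceding variance lemma'' is tight enough once you stop at the sharper per-pair variance estimate rather than the uniform $O(1/n^2)$ bound. The second-order error from $1-\ee^{-y}=y+O(y^2)$ is handled the same way, since $\E[Y_{i,j}^2]=\Var[Y_{i,j}]+\E[Y_{i,j}]^2=O((\log^2 n)/n^2)$ sums to $O(\log^2 n)$.
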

\begin{proof}
For real $x$,
\[
\max(0,x)-x^2/2 \leq \max(0,x-x^2/2) \leq \max(0,1-\exp(-x)) \leq \max(0,x)\,.
\]
Thus
\[
\max(0,\E[X]) - \tfrac12 \E[X^2] \leq \E[\max(0,1-\exp(-X))] \leq \max(0,\E[X]) + \Var[X]^{1/2}\,.
\]
We apply these bounds to the random variables $Y_{i,j}$ and $-Y_{i,j}$.

For $i\leq\lceil n/2\rceil\leq j$ we have
\begin{align*}
Y_{i,j}=2M-d_{v_i}-d_{v_j} &= X_i+\cdots +X_{\lceil n/2\rceil-1} - X_{\lceil n/2\rceil} - \cdots - X_{j-1}\\
\Var[Y_{i,j}] &= \sum_{k=i}^{j-1} \frac{1}{k^2 (n-k)^2} \\
\Var[Y_{i,j}] &\leq O\left(\frac{1/i+1/(n+1-j)}{n^2}\right)
\end{align*}
For $i,j\leq\lceil n/2\rceil$ or $\lceil n/2\rceil \leq i,j$ the exact formula
for $\Var[Y_{i,j}]$ differs, but the bound in the last equation still holds.

\begin{align*}
\Var[Y_{i,j}]^{1/2} &\leq O\left(\frac{1/i^{1/2}+1/(n+1-j)^{1/2}}{n}\right)\\
\sum_{i,j} \Var[Y_{i,j}]^{1/2} &\leq O(\sqrt{n})\,.
\end{align*}

We also have $\E[Y_{i,j}]\leq O((\log n) / n)$, so
\begin{align*}
\E[Y_{i,j}^2] &= \Var[Y_{i,j}] + \E[Y_{i,j}]^2
\leq O((\log^2 n)/n^2)\,.
\end{align*}

So we deduce that the expected number of non-SPT out-pertinent
edges $(v_i,v_j)$ with $i<j$ is
\[\sum_{i=1}^{\lceil n/2\rceil-1} \sum_{j=i+1}^{n} \tfrac{j-2}{j-1}\,\max(0,\E[Y_{i,j}]) + O(\sqrt{n})\,.
\]
and that the expected number of non-SPT in-pertinent edges $(v_i,v_j)$ with $i<j$ is
\[\sum_{j=\lceil n/2\rceil+1}^{n} \sum_{i=1}^{j-1} \tfrac{j-2}{j-1}\,\max(0,-\E[Y_{i,j}]) + O(\sqrt{n})\,.
\]

But
\[
\E[Y_{i,j}] = \frac{1}{n} \log\frac{n+1-i}{i} + \frac{1}{n} \log\frac{n+1-j}{j} + O(1/(n\min(i,j,n+1-i,n+1-j)))\,,
\]
and the error term adds up to at most $O(\log n)$, so the non-SPT out-pertinent sum is
\[ \frac{1}{n} \sum_{i=1}^{\lceil n/2\rceil-1} \sum_{j=i+1}^{n-i} \frac{j-2}{j-1}\, \left(\log\frac{n+1-i}{i} + \log\frac{n+1-j}{j}\right) + O(\sqrt{n})\,.
\]
Since
\[ \frac{1}{n} \sum_{1\leq i < j \leq n} \frac{1}{j-1} \; 2\log n \leq O(\log^2 n)\,,
\]
we can drop the factor of $(j-2)/(j-1)$, and since
\[ \sum_{j=i+1}^{n-i} \log\frac{n+1-j}{j} = 0\,,
\]
the above sum simplifies to
\[ \frac{1}{n} \sum_{i=1}^{\lceil n/2\rceil-1} (n-2i) \log\frac{n+1-i}{i} + O(\sqrt{n})\,.
\]
The summand is monotone, so we may replace the sum with
\[
 \int_{0}^{1/2} (1-2x) \log\frac{1-x}{x}\times n\, dx + O(\log n)= \frac{n}{2} + O(\log n)\,.
\]
Up to negligible error terms, the calculations for the non-SPT in-pertinent edges are symmetric
to those for the non-SPT out-pertinent edges, and also give $n/2+O(\sqrt{n})$.

In the directed setting, we also need to consider edges $(v_i,v_j)$ for which $i>j$.
If $j<i$ then $c(v_i,v_j)\sim\EXP(1)$ which is independent of $M-d_{v_i}$,
so for $j<i$
\begin{align*}
\Pr[\text{edge $(v_i,v_j)$ is non-SPT out-pertinent}] &= \E[\max(0,1-\exp(-Y_{i,i}))] \\
\Pr[\text{edge $(v_i,v_j)$ is non-SPT in-pertinent}] &= \E[\max(0,1-\exp(+Y_{j,j}))]\,.
\end{align*}
Using the same estimates as above, we find that the expected number of such out-pertinent
edges is
\[
\frac{1}{n} \sum_{i=1}^{\lceil n/2\rceil-1} \sum_{j=1}^{i-1} 2\log\frac{n+1-i}{i} + O(\sqrt{n}) = (\log 2 -\tfrac12) n + O(\sqrt{n})
\]
and similarly for the expected number of such in-pertinent edges.
\end{proof}

\begin{theorem}
For either the directed or undirected graph $\G_n(\EXP(1))$,
\begin{align*}
\E[\text{\# out-pertinent SPT edges}] &= (\log 2 + o(1)) n\\
\E[\text{\# in-pertinent SPT edges}] &= (1-\log 2 + o(1)) n\,.
\end{align*}
\end{theorem}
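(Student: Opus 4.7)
The plan is to compute $\E[\#\text{out-pertinent SPT edges}]$ directly and then obtain the in-pertinent count via the identity $\E[\#\text{out-per SPT}] + \E[\#\text{in-per SPT}] = n-1$, which follows from the remark that every SPT edge is in exactly one of the two classes (the boundary case $d_{v_k}+d_{v_{p_k}}=2M$ has probability zero). An SPT edge $(v_{p_k},v_k)$ has weight $c = d_{v_k} - d_{v_{p_k}}$, so by substitution it is out-pertinent precisely when $d_{v_k} + d_{v_{p_k}} \le 2M$. For $k\le m:=\lceil n/2\rceil$ this holds automatically since $d_{v_{p_k}} \le d_{v_k} \le M$, contributing $m-1 = n/2 + O(1)$ to the expected count. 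For $k>m$, using that $p_k$ is uniform on $\{1,\dots,k-1\}$ independently of the distances (and that the event is impossible when $j\ge m$, since then both $d_{v_j},d_{v_k}\ge M$), the remaining contribution is
\[
 T \;:=\; \sum_{k=m+1}^{n} \frac{1}{k-1} \sum_{j=1}^{m-1} \Pr[d_{v_j} + d_{v_k} \le 2M].
\]

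For $j<m<k$, the Janson--Davis--Prieditis representation from Section~\ref{S-analysis} gives $d_{v_j}+d_{v_k}-2M = \sum_{i=m}^{k-1} X_i - \sum_{i=j}^{m-1} X_i$ with independent $X_i\sim\EXP(1/(i(n-i)))$, whose mean $\mu_{j,k} = \tfrac{1}{n}\log\tfrac{jk}{(n-j)(n-k)} + O(\tfrac{1}{nj}+\tfrac{1}{n(n-k)})$ satisfies $\mu_{j,k}<0$ iff $j+k<n$, and whose variance $\sigma_{j,k}^2 = \sum_{i=j}^{k-1} 1/(i(n-i))^2 = O((1/j+1/(n-k))/n^2)$ is at most $O(n^{-3})$. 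I would then argue that $\Pr[d_{v_j}+d_{v_k}\le 2M]$ differs from the indicator $\mathbf{1}[j+k<n]$ by an amount whose total contribution to $T$ is $o(n)$. Split the pairs into a \emph{transition zone} $B = \{(j,k) : |j+k-n| \le \sqrt n\,\log^5 n\}$ and its complement. The transition zone has $O(n^{3/2}\log^5 n)$ pairs, and $\sum_{(j,k)\in B} 1/(k-1) = O(\sqrt n\,\log^6 n) = o(n)$, so $B$'s contribution to $T$ is $o(n)$ even if we bound each probability crudely by $1$. Outside $B$, Bernstein's inequality for signed sums of independent exponentials---applied in the sub-Gaussian regime when $|\mu_{j,k}|$ is small relative to $\sigma_{j,k}^2/b_{j,k}$ with $b_{j,k}=2/\min_i i(n-i)$, and in the large-deviation regime otherwise---yields $|\Pr[\cdot]-\mathbf{1}[j+k<n]| = n^{-\omega(1)}$ per pair, so the total error outside $B$ is $o(1)$. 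The main obstacle will be calibrating the boundary width so that Chernoff control outside and the trivial bound inside are simultaneously strong enough.

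Hence $T = \sum_{k=m+1}^{n-1} (n-k)/(k-1) + o(n)$ (using that for $k>m$ the condition $1\le j\le m-1$, $j+k<n$ reduces to $j\le n-k$, and $n-k\le m-1$). Writing $k=m+\ell$, this is a Riemann sum for
\[
 n\int_0^{1/2}\frac{1/2-t}{1/2+t}\,dt \;=\; n\int_0^{1/2}\!\Bigl(\frac{1}{1/2+t}-1\Bigr)dt \;=\; n\bigl(\log 2 - \tfrac12\bigr).
\]
Combining with the $m-1 = n/2 + O(1)$ contribution from $k\le m$ gives $\E[\#\text{out-pertinent SPT edges}] = (\log 2 + o(1))n$, and therefore $\E[\#\text{in-pertinent SPT edges}] = (n-1) - (\log 2 + o(1))n = (1-\log 2 + o(1))n$. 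Since the joint law of $(v_1,\dots,v_n)$, $(d_{v_1},\dots,d_{v_n})$, and $(p_2,\dots,p_n)$ is the same for directed and undirected $\G_n(\EXP(1))$ (Section~\ref{S-analysis}), the same argument yields the stated result in both settings.
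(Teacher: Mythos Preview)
Your proposal is correct and follows essentially the same route as the paper: both reduce the SPT pertinence count to the sign of $Y_{j,k}=2M-d_{v_j}-d_{v_k}$, use that $p_k$ is uniform on $\{1,\dots,k-1\}$ independently of the distances, invoke concentration to replace $\Pr[Y_{j,k}\ge 0]$ by the indicator $\mathbf{1}[j+k<n]$ (up to a total error $o(n)$ after weighting by $1/(k-1)$), and then evaluate the resulting sum as a Riemann integral. The only cosmetic differences are that the paper uses Chebychev's inequality (with a terse ``for most pairs $\Var=\Theta(1/n^3)$ while $|\E|=\Theta(1/n)$'' argument) rather than your more explicit Bernstein-plus-transition-zone estimate, and that the paper computes the in-pertinent count directly via $\int_{1/2}^1 \frac{2x-1}{x}\,dx=1-\log 2$ whereas you compute the out-pertinent count and use complementarity.
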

\begin{proof}
There are $n-1$ SPT edges, each of which is either out-pertinent or in-pertinent but not both.
For $i<j$, edge $(v_i,v_j)$ is an SPT edge with probability $1/(j-1)$, and if so,
it is out-pertinent when $Y_{i,j}\geq 0$ and in-pertinent when $Y_{i,j}<0$.
When $i,j\leq\lceil n/2\rceil$ or $i,j\geq\lceil n/2\rceil$, the sign of $Y_{i,j}$ is almost
surely positive, or negative, respectively, so we are left with the case $i<\lceil n/2\rceil < j$.
We have already computed $\E[Y_{i,j}]$ and $\Var[Y_{i,j}]$ in this case,
so we can use Chebychev's inequality to estimate $\Pr[Y_{i,j}\geq 0]$.
For ``most'' pairs $i<j$ we have $\Var[Y_{i,j}]=\Theta(1/n^3)$ while $|\E[Y_{i,j}]|=\Theta(1/n)$,
so in expectation a negligible fraction of the pairs $i<j$ are such that the sign of $Y_{i,j}$
differs from the sign of $n+1-i-j$.  The expected number of in-pertinent edges is thus
\[
\sum_{j=\lceil n/2\rceil+1}^n\; \sum_{i=n+2-j}^{j-1} \frac{1}{j-1} + o(n) = \int_{1/2}^1 \frac{2x-1}{x}\times n\,dx + o(n)\,.\qedhere
\]
\end{proof}

\section{Performance with uniform and other edge costs}\label{sec:other}

In this section we show how results similar to the ones obtained in Sections~\ref{sec:probabilistic-analysis}
for the \emph{exponential\/} distribution can be obtained for similar distributions, e.g., the \emph{uniform\/} distribution.

Let $X\sim \EXP(1)$ and let $Y=f(X)$, where $f:[0,\infty)\to[0,\infty)$ is monotone increasing. For example, if $f_U(x)=1-{\rm e}^{-x}$, then $f_U(X)\sim U[0,1]$, i.e., a random variable uniformly distributed in $[0,1]$. Note that $f_U(x)=x+O(x^2)$, as $x\to 0$.

Let $\G_n(Y)$ be the probabilistic model in which independent random edge weights, each identical in distribution to~$Y$, are assigned to the edges of a complete graph on $n$ vertices.

\begin{theorem}
  If $Y=f(\EXP(1))$, where $f(x)=x+O(x^2)$ as $x\to0$, then for
  $\G_n(Y)$ we have $\E[|\Eper|]=O(n)$, and for some $c>1$ and
  $\alpha>0$ we have $\Pr[|\Eper|>cn+\Delta]< \ee^{-\alpha\Delta} +
  O(n^3 \ee^{-n^{1/2}})$.
\end{theorem}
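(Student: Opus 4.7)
The plan is to couple $\G_n(Y)$ with $\G_n(\EXP(1))$ via $Y_e = f(X_e)$, where the $X_e\sim\EXP(1)$ are i.i.d., and then argue that with very high probability the pertinent-edge structure of $\G_n(Y)$ is essentially the same as that of $\G_n(\EXP(1))$, up to a $(1+o(1))$ inflation of thresholds, so that Theorem~\ref{T-pertinent-exp} can be reapplied.

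First I would use the hypothesis $f(x)=x+O(x^2)$ to fix a constant $K$ such that $|f(x)-x|\le Kx^2$ for $x$ in some fixed neighborhood of $0$. Then for $n$ large enough and $X_e\le n^{-1/2}$, the coupling yields
\[ Y_e = X_e\bigl(1 + O(n^{-1/2})\bigr). \]
Monotonicity of $f$ handles edges with $X_e>n^{-1/2}$, since there $Y_e\ge f(n^{-1/2})$, which eventually exceeds a fixed positive constant $c_0>0$.

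Next I would define a good event $G$ which collects: (a) that no edge weight $X_e$ exceeds $n^{1/2}$; (b) that the maximum SPT distance in $\G_n(\EXP(1))$ is at most $C\log n/n$ for a suitable constant $C$; and (c) that the median distance in $\G_n(\EXP(1))$ is $\Theta(\log n/n)$. Part (a) fails with probability at most $\binom{n}{2}\ee^{-n^{1/2}}$. Parts (b)--(c) follow from standard Chernoff tail bounds applied to the Davis--Prieditis/Janson representation $d_{v_k}=\sum_{j<k} X_j$ with $X_j\sim\EXP(1/(j(n-j)))$, and can be made to contribute at most $O(n^3\ee^{-n^{1/2}})$ by a union bound over vertices.

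Conditional on $G$, I would then establish three items: (i) the maximum SPT distance in $\G_n(Y)$ is also $O(\log n/n)$, because the $Y$-length of the $X$-SPT path to any vertex is at most $(1+O(n^{-1/2}))$ times its $X$-length (all such edges have $X_e\ll n^{-1/2}$), while any alternative path that uses an edge with $X_e>n^{-1/2}$ has $Y$-length at least $c_0$, which is far larger than $\log n/n$; (ii) consequently $M^Y=(1+o(1))M^X$ and $d_v^Y=(1+o(1))d_v^X$ for every $v$ that lies close to the median; (iii) therefore every $Y$-pertinent edge $(u,v)$ satisfies
\[ X_e \;\le\; 2\,(1+o(1))\,\max\bigl(M^X-d_u^X,\;d_v^X-M^X\bigr), \]
so the set of $Y$-pertinent edges is contained in the set of edges satisfying the $\EXP(1)$-pertinence inequalities with the factor $2$ replaced by $2(1+o(1))$.

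Finally I would rerun the Poisson-domination computation of Theorem~\ref{thm:Poisson-domination} and the Chernoff argument of Theorem~\ref{T-pertinent-exp} with this inflated factor. The Poisson rate $\Lambda$ is scaled by $(1+o(1))$, which only affects the constants $c$ and $\alpha$, and the resulting bound is
\[ \Pr\bigl[|\Eper|>cn+\Delta,\ G\bigr] \;<\; \ee^{-\alpha\Delta}. \]
Adding $\Pr[\overline{G}]=O(n^3\ee^{-n^{1/2}})$ gives the theorem. The main obstacle is step~(i): controlling the $Y$-SPT without re-deriving the Davis--Prieditis/Janson characterization for $Y$-weights. This is where monotonicity of $f$ and the quantitative gap between $\log n/n$ (the scale of SPT distances) and $n^{-1/2}$ (the scale on which $f$ acts as $(1+o(1))\cdot\mathrm{id}$) do the work, ensuring that paths through ``large'' edges can never be competitive.
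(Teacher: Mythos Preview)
Your overall strategy---couple via $Y_e=f(X_e)$, show distances in the two graphs are close on a high-probability event, and deduce that $Y$-pertinent edges are ``almost'' $X$-pertinent so that Theorem~\ref{T-pertinent-exp} applies---is exactly the paper's approach. But step~(iii) as written has a real gap.

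From $d_v^Y=(1+o(1))d_v^X$ and $M^Y=(1+o(1))M^X$ you \emph{cannot} conclude $M^Y-d_u^Y=(1+o(1))(M^X-d_u^X)$: differences of multiplicatively close quantities are not multiplicatively close when the difference is small. For a vertex~$u$ with $d_u^X$ within $o(\log n/n)$ of $M^X$, the additive error in $M^Y-d_u^Y$ (which is of order $o(\log n/n)$ in your framing) can dominate $M^X-d_u^X$ itself. So the claimed inequality $X_e\le 2(1+o(1))(M^X-d_u^X)$ fails precisely for vertices near the median, and hence ``scaling $\Lambda$ by $(1+o(1))$'' is not what actually happens.

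The fix, which is what the paper does, is to track \emph{additive} rather than multiplicative error. On your event~(b) every $X$-distance is $\ll n^{-1/2}$, so along any shortest path $\sum_e|Y_e-X_e|\le\beta\sum_e X_e^2\le\beta\bigl(\sum_e X_e\bigr)^2\le\beta/n$ by convexity; the symmetric argument for the $Y$-shortest path gives $|d_v^Y-d_v^X|\le(1+o(1))\beta/n$ for every~$v$, and hence $|M^Y-M^X|\le(1+o(1))\beta/n$. A $Y$-out-pertinent edge then satisfies $X_e\le 2(M^X-d_u^X)+\delta$ with $\delta=O(1/n)$, i.e., it is $\delta$-out-pertinent in $\G_n(\EXP(1))$ with an \emph{additive} slack. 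The number of edges that are $\delta$-pertinent but not pertinent is stochastically dominated by $\Poi(\delta n^2)=\Poi(O(n))$, which is $O(n)$ with exponentially small failure probability; adding this to the bound of Theorem~\ref{T-pertinent-exp} yields the result. So you do not rescale $\Lambda$; you bound a separate Poisson count of ``extra'' edges.

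One minor point: your good event~(a), ``no $X_e$ exceeds $n^{1/2}$'', is either a typo or irrelevant. What is needed is that all \emph{distances} are at most $n^{-1/2}$, which already follows from your~(b) for large~$n$. The paper obtains this directly from Janson's tail bound $\Pr[\max_v d_v>a\log n/n]\le O(n^{2-a}\log^2 n)$ with $a=n^{1/2}/\log n$, which is exactly the source of the $O(n^3\ee^{-n^{1/2}})$ term.
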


Note that the main loss compared to Theorem~\ref{T-pertinent-exp} is the additive term $O(n^3\ee^{-n^{1/2}})$ in the probability bound.
We conjecture that this additive term can be removed.

\begin{proof}
Let $c[u,v]$ and $d_v$ be the edge weights and distances in $\G_n(\EXP(1))$.  Let $c'[u,v]=f(c[u,v])$ be the weight of edge $(u,v)$ in
$\G_n(Y)$, and let $d'_v$ be the distance from the source to $v$ in $\G_n(Y)$.

For $\G_n(\EXP(1))$, Janson \cite{Janson99} proved that for any $a>2$,
\[\Pr\bigg[\max_v d_v> a\log n/n\bigg] \leq O(n^{2-a} \log^2 n)\,.\]
We use this result with $a=n^{1/2}/\log n$. It follows that the probability that $d_v>n^{-1/2}$ for some vertex~$v$ is at most $O(n^{2-a}\log^2 n) < O(n^3 \ee^{-n^{1/2}})$. We thus assume in the sequel that $d_v\le n^{-1/2}$ for every vertex~$v$.

Suppose now that $|f(x)-x|\le \beta\min\{x^2,f(x)^2\}$, for some fixed $\beta>0$, for every $x$ sufficiently close to~$0$. (The existence of~$\beta$ follows from our assumption that $f(x)=x+O(x^2)$.)

Let $s=v_0,v_1,\ldots,v_\ell=v$ be a shortest path from~$s$ to~$v$ in $\G_n(\EXP(1))$.
Let $c_1,c_2,\ldots,c_\ell$ be the weights of its edges in $\G_n(\EXP(1))$, and let $c'_1,c'_2,\ldots,c'_\ell$ be the corresponding edge weights in $\G_n(Y)$. Using the fact that $d_v=\sum_{i=1}^\ell c_i$ and the convexity of the function $x^2$ we get
\[ d'_v \;\le\; \sum_{i=1}^\ell c'_i \;\le\;
   \sum_{i=1}^\ell c_i + \beta \sum_{i=1}^\ell c_i^2 \;\le\; d_v + \beta d_v^2 \;\le\; d_v+\beta n^{-1}\,.
\]
Similarly, by looking at a shortest path from~$s$ to~$v$ in $\G_n(Y)$, we get, for sufficiently large~$n$, that
\[ d_v \;\le\; d'_v + \beta {d'_v}^2 \;\le\; d'_v + \beta(d_v+\beta n^{-1})^2 \;\le\; d'_v+(1+o(1))\beta n^{-1} \,, \]
where the $o(1)$ term goes to $0$ as $n\to\infty$.
Combining these two inequalities, we get that
\[ |d'_v-d_v| \le (1+o(1))\beta n^{-1}\,. \]
Let $M$ and $M'$ be the median distances in $\G_n(\EXP(1))$ and $\G_n(Y)$, respectively.
As $d_v\le M$ for $n/2$ vertices, we have $d'_v\le M+\beta n^{-1}$ for these $n/2$ vertices,
and thus $M'\le M + \beta n^{-1}$. In a similar way we get that $M\le M' + (1+o(1))\beta n^{-1}$ and thus
\[|M'-M|\le (1+o(1))\beta n^{-1}\,.\]

Now suppose $(u,v)$ is an out-pertinent edge in $\G_n(Y)$, i.e., $c'[u,v]\le 2\,(M'-d'[u])$.  Then $c'[u,v]\leq 2n^{-1/2} + O(n^{-1})$, so $c[u,v]\leq c'[u,v] + (4+o(1))\beta n^{-1}$, and hence
 $c[u,v] \le 2\,(M-d[u]) + (8+o(1))\beta n^{-1}$. In other words, $(u,v)$ is \emph{almost\/} out-pertinent in $\G_n(\EXP(1))$. More formally, we say that and edge $(u,v)$ is $\delta$-out-pertinent, if and only if $c[u,v]\le 2\,(M-d[u])+\delta$. If follows that if $(u,v)$ is out-pertinent in $\G_n(Y)$, then it is $(8+o(1))\beta n^{-1}$-out-pertinent in $\G_n(\EXP(1))$. The same observations and definitions apply to in-pertinent edges.

It is not difficult to see that for any $\delta>0$, the number of $\delta$-out-pertinent edges that are not out-pertinent is stochastically dominated by a Poisson process with rate $\delta n^2$.  The number of non-SPT edges which are $\delta$-in-pertinent but not in-pertinent is also dominated by a Poisson process with rate $\delta n^2$.  We have $\delta n^2=(8+o(1))\beta n$, so the probability that there are $16\beta n$ such edges is exponentially small in~$n$. This small probability is absorbed in the $O(n^3\ee^{-n^{1/2}})$ term.
\end{proof}

\section{Performance with Weibull edge costs}\label{sec:Weibull}

In our analysis of the performance of the algorithm for randomly
weighted graphs we considered edge costs that are given by i.i.d.\
exponential random variables.  We could consider other distributions
for the edge costs, but still have i.i.d.\ edge weights.  Since it is
only the very light edges that are involved in the shortest path tree,
it is conventional wisdom that only the probability density function
at $0$ is really important, and as long as this is positive, the
shortest path tree and distances will be just a perturbation of the
case when the edge costs are exponential random variables.

Bhamidi and van der Hofstad \cite{MR2932542} studied shortest paths on
the complete graph with i.i.d.\ edge weights with a more general class
of edge weight random variables, where the probability density
function at $0$ is either infinite or zero.  In these cases, the
shortest path tree is genuinely different from, i.e., not merely a
perturbation of, the tree with exponential edge costs.  More
precisely, Bhamidi and van der Hofstad considered edge costs given by
$\EXP(1)^s$, where $s>0$ is a parameter.  $\EXP(1)^s$ is a \textit{Weibull\/}
random variable with shape parameter $1/s$.  Near $0$, the
probability density function scales like $\frac{1}{s} x^{1/s-1}$.
Bhamidi and van der Hofstad studied the cost and hopcount of the
shortest path connecting typical vertices \cite{MR2932542}.  Here we
consider this same edge cost model, and study the performance of the
algorithm on it.

\begin{lemma}
  For fixed $s>0$, with i.i.d.\ Weibull $\EXP(1)^s$ edge costs on
  $\G_n$, the expected number of out-pertinent edges is $O(n)$.
\end{lemma}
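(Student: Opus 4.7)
My plan is to adapt the two-step approach used in the exponential case (Theorems~\ref{thm:Poisson-domination} and~\ref{T-EXP-Exp}) to the Weibull setting.

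The first step is Poisson domination. For each edge~$e$ I associate an independent rate-one Poisson process on $[0,\infty)$, letting $X_e\sim\EXP(1)$ be its first point so that $c_e=X_e^s$. An edge $(u,v)$ with $d_u\le M$ is out-pertinent exactly when $X_e\le(2(M-d_u))^{1/s}$, which is dominated by the indicator that the Poisson process of edge~$e$ has at least one point in $[0,(2(M-d_u))^{1/s}]$. Repeating the proof of Theorem~\ref{thm:Poisson-domination} verbatim, the number of non-SPT out-pertinent edges is stochastically dominated by $\Poi(\Lambda_\mathrm{out})$ with
\[
\Lambda_\mathrm{out}\;\le\;(n-1)\sum_{u:\,d_u\le M}\bigl(2(M-d_u)\bigr)^{1/s}.
\]
Adding the at most $n-1$ SPT out-pertinent edges,
\[
\E[|\Eoutper|]\;\le\;n-1+2^{1/s}(n-1)\,\E\Bigl[\sum_{u:\,d_u\le M}(M-d_u)^{1/s}\Bigr],
\]
so it suffices to show that the remaining expectation is $O(1)$.

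To bound this expectation, I appeal to the Dijkstra-like exploration of the Weibull graph. Let $T_1\le\dots\le T_K=M$ (with $K=\lceil n/2\rceil$) be the Weibull distances to the $K$ closest vertices, and $\Delta_j=T_{j+1}-T_j$. By the memoryless property applied to the underlying exponentials, at step~$j$ the conditional tail of $\Delta_j$ has the explicit form
\[
\Pr\bigl[\Delta_j>t\bigm|T_1,\dots,T_j\bigr]\;=\;\exp\Bigl(-(n-j)\sum_{u\in\{v_1,\dots,v_j\}}\bigl((T_j-d_u+t)^{1/s}-(T_j-d_u)^{1/s}\bigr)\Bigr),
\]
which at $s=1$ collapses to $\Delta_j\sim\EXP(j(n-j))$ and recovers Janson's characterization. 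Using this together with the integral representation
\[
\sum_{k=1}^K(M-T_k)^{1/s}\;=\;\frac{1}{s}\int_0^M t^{1/s-1}\,N(M-t)\,dt\qquad(N(\tau)=|\{k:T_k\le\tau\}|),
\]
the problem reduces to controlling the growth profile $\E[N(\tau)]$ of the exploration. The bound naturally splits into cases: for $s\ge1$ the subadditivity $(\sum\Delta_j)^{1/s}\le\sum\Delta_j^{1/s}$ reduces matters to estimating $\sum_j j\,\E[\Delta_j^{1/s}]$, while for $0<s<1$ the inequality reverses and the integral representation above is used directly.

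The main obstacle is precisely the lack of a Janson-style characterization in the Weibull setting: the distribution of $\Delta_j$ depends on the entire configuration $(T_j-d_u)^{1/s}$ of previously explored distances, not only on $j$ and $n$. The crux is therefore a uniform estimate of the exponent $\sum_u\bigl((T_j-d_u+t)^{1/s}-(T_j-d_u)^{1/s}\bigr)$ that is robust across all realizations of the exploration. For $s\le1$ the concavity of $x\mapsto x^{1/s}$ yields a convenient pointwise estimate; for $s>1$ the most recently added vertex $v_j$ (with offset zero) contributes a singular term $t^{1/s}$ to the sum, and one must combine it with the collective contribution of the other $j-1$ explored vertices to close the argument and arrive at the $O(1)$ bound.
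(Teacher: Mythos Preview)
Your approach is quite different from the paper's. The paper does not attempt a step-by-step analysis of the Weibull exploration; it instead invokes the branching-process coupling of Bhamidi and van~der~Hofstad, which directly yields that the expected number of vertices~$u$ with $n^s(M-d_u)>t$ is $O(n/\lambda_s^{t})$ for $\lambda_s=\Gamma(1+1/s)^s$. This exponential decay, together with the FKG inequality (to decouple the count of light outgoing edges from the value of~$M$), gives $O(1)$ expected out-pertinent edges per vertex and hence $O(n)$ in total. No conditional law of the gaps~$\Delta_j$ is ever written down.

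Your proposal is a plan rather than a proof: the crucial step, bounding $\E\bigl[\sum_{u:d_u\le M}(M-d_u)^{1/s}\bigr]=O(1)$, is never carried out. You correctly derive the conditional tail of~$\Delta_j$ and correctly observe that it depends on the full configuration of prior distances, but you do not produce the ``uniform estimate'' you say is required. There is also a sign error: for $s\le1$ the map $x\mapsto x^{1/s}$ is \emph{convex}, not concave. More seriously, the only pointwise lower bound on the exponent that is available without already knowing the growth profile is the contribution $t^{1/s}$ from the single vertex $v_j$ with offset zero, giving $\Pr[\Delta_j>t\mid\cdots]\le\exp(-(n-j)t^{1/s})$; plugging this into your scheme (e.g.\ via subadditivity when $s\ge1$) yields $\E\bigl[\sum_k(M-T_k)^{1/s}\bigr]=O(n)$ rather than $O(1)$, and hence only $O(n^2)$ out-pertinent edges. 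To recover the missing factor of~$n$ one needs precisely the exponential growth of the exploration that the Bhamidi--van~der~Hofstad analysis supplies; absent that input or a genuine substitute, the argument does not close.
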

\begin{proof}
  We use the analysis of Bhamidi and van der Hofstad \cite{MR2932542}.
  They showed how to couple an out-exploration process (scaled by
  $n^s$) from one vertex and an in-exploration process (also scaled by
  $n^s$) from another vertex to two independent branching processes,
  so that the branching processes stochastically dominate the
  exploration processes, and are usually the same, until $O(\sqrt{n})$
  vertices are in the exploration processes.  At some point the two
  exploration processes merge, at which point the shortest path
  connecting the vertices is determined.  Almost surely the branching
  processes have asymptotically $W_s\times \lambda_s^x$ vertices
  within distance $x$ of the root, as $x\to\infty$, where
  $\lambda_s=\Gamma(1+1/s)^s$, and $W_s$ is a certain random variable
  \cite{MR2932542}.  Bhamidi and van der Hofstad showed that the
  typical distance between pairs of vertices is (in the limit of large
  $n$) \[\frac{\log n + Y}{\lambda_s n^s}\,,\] where $Y$ is a certain
  random variable (independent of $n$).

  The same analysis gives a similar expression for the median distance
  $M$ from a source vertex $s$, but with a different random variable.
  Furthermore, the expected number of vertices~$u$ for which
  $n^s(M-d[u])>t$ is $O(n/\lambda_s^t)$.  The expected number of
  outgoing edges of weight $\leq 1/x^s$ from a vertex is
  $(n-1)(1-\ee^{-x})$.  The number of light outgoing edges from a vertex
  is an increasing function of the edge weights,
  and the value of the median distance is a decreasing function of the edge weights,
  so we can use the FKG inequality \cite{FKG}
   to bound the expected number of
  out-pertinent edges from a vertex by $O(1)$.
\end{proof}

\begin{lemma}
  For fixed $s$ such that $0<s\leq 1$, with i.i.d.\ Weibull
  $\EXP(1)^s$ edge costs on $\G_n$, the expected number of
  in-pertinent edges is $O(n)$.
\end{lemma}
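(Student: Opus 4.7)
The plan is to mirror the proof of the preceding lemma, swapping the in-direction for the out-direction throughout. Fix $v$ with $d[v] > M$; an incoming edge $(u,v)$ is in-pertinent precisely when $c[u,v] < 2\,(d[v]-M)$. Since $\Pr[\EXP(1)^s < \tau] = 1 - \exp(-\tau^{1/s})$, the expected number of incoming edges of weight at most a fixed $\tau$ is $(n-1)\bigl(1-\exp(-\tau^{1/s})\bigr) \leq \min\bigl(n\tau^{1/s},\,n-1\bigr)$.

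First, I would invoke the Bhamidi--van der Hofstad branching-process coupling~\cite{MR2932542}, applied symmetrically to the upper half of the distance order statistics about the median, to obtain the tail estimate: the expected number of vertices $v$ with $n^s(d[v]-M) > t$ is $O(n/\lambda_s^t)$. This is the in-direction analogue of the tail bound already used in the out-pertinent proof, and follows from the same coupling since it controls both upper and lower deviations of $d[v]$ about $M$. Then I would apply the same FKG argument as in the preceding lemma, exploiting that the count of light incoming edges to $v$ is monotone in the edge weights and that $d[v]-M$ can be arranged to be monotone in the same direction, yielding the factorized bound
\[
\E[|\Einper|] \;\leq\; \sum_v \E\!\left[\min\bigl(n\,(2(d[v]-M)_+)^{1/s},\,n-1\bigr)\right].
\]

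Finally, I rewrite the right-hand side by layer-cake and substitute the tail bound:
\[
\E[|\Einper|] \;\leq\; \int_0^n \E\!\bigl[\#\{v : n^s(d[v]-M) > y^s/2\}\bigr]\,dy
\;\leq\; O(n) \int_0^\infty \lambda_s^{-y^s/2}\,dy,
\]
which is $O(n)$ provided $\lambda_s = \Gamma(1+1/s)^s > 1$, i.e., $s < 1$. The boundary case $s = 1$ is the exponential distribution, already handled by the earlier theorems of Section~\ref{sec:probabilistic-analysis}.

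The main obstacle I expect is the FKG step: both the count of light incoming edges to $v$ and the gap $d[v]-M$ depend on the weights of edges incident to $v$ and on the weights along source-to-$v$ paths, so the monotonicity directions must be arranged carefully before FKG gives the factorized bound. A secondary but essential point is that the restriction $s \leq 1$ is needed in the final integral: since $\lambda_s \leq 1$ for $s \geq 1$, the tail integral diverges for $s > 1$, which is consistent with the paper's conjecture that the $s>1$ case requires substantially different techniques.
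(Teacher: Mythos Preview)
Your approach diverges from the paper's and contains a real gap. The Bhamidi--van der Hofstad branching-process coupling controls the \emph{early} exploration: it says the ball of (scaled) radius $r$ around the source has roughly $\lambda_s^{r}$ vertices, which translates into the out-pertinent tail bound ``$\#\{u:n^s(M-d[u])>t\}=O(n/\lambda_s^t)$'' because that counts vertices \emph{close} to the source. The in-pertinent case needs the opposite tail, $\#\{v:n^s(d[v]-M)>t\}$, which concerns the \emph{late} exploration where the branching process comparison breaks down (the branching process is infinite, while the actual process has only $n$ vertices). There is no symmetry to invoke here, and the paper's own remark immediately after the lemma---that for $s>1$ ``the main technical issue is to bound $\Pr[n^s(d[v]-M)>t]$ for large $t$''---confirms that this tail does not drop out of the BvdH machinery.

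The paper's proof is instead a direct argument that uses $s\le 1$ in a completely different place. Write the Weibull edge weight as the first point of a Poisson process with rate $x\mapsto \tfrac{1}{s}x^{1/s-1}$; for $s\le 1$ this rate is \emph{nondecreasing}. Condition on the SPT up to the median $M$: there are at least $n/2$ vertices $u$ already found, and for each unfound $v$ the residual process on the edge $(u,v)$ (shifted past $M-d[u]$) has rate $\tfrac{1}{s}(M-d[u]+x)^{1/s-1}\ge \tfrac{1}{s}x^{1/s-1}$ by monotonicity. Superposing over the $\ge n/2$ found vertices gives rate $\ge \tfrac{n}{2s}x^{1/s-1}$, hence $\Pr[d[v]-M\ge x]\le \exp(-(n/2)x^{1/s})$. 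The FKG step (decorrelating $d[v]-M$ from the count of light incoming edges to~$v$) is as you describe, and integrating this exponential tail against $n(2x)^{1/s}$ gives $O(1)$ per vertex. So the restriction $s\le 1$ enters through the monotone-rate inequality, not through convergence of an integral involving $\lambda_s$; in particular the case $s=1$ needs no separate treatment.
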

\begin{proof}
  A Weibull random variable $\EXP(1)^s$ has the distribution of the
  first point of a Poisson point process with
  rate \[x\mapsto\frac{1}{s} x^{1/s-1}\,.\]
  For $s\leq 1$, this rate is monotonically increasing in $x$.
  Consider the shortest path tree up to the median distance $M$.
  For any vertex $v$ for which $d[v]>M$, we have that $d[v]-M$ is
  distributed according to the first point of a Poisson point process
  with rate at least \[\frac{n}{2s} x^{1/s-1}\,,\]
  so the expected number of vertices for which $d[v]-M\geq x$ is
  \[ \leq \frac{n}{2} \exp(-(n/2) x^{1/s})\,.\] For a given vertex $w$,
  the expected number of incoming edges with weight less than $2x$ is
  $(n-1) (1-\exp(-(2x)^{1/s}))$, which is approximately and at most $n
  (2x)^{1/s}$.  As above, we can use the FKG inequality,
  since $d[v]-M$ is an increasing function of the incoming edge weights,
  while the number of
  short incoming edges is a decreasing function of these edge weights.
  Integrating, we find that the expected
  number of in-pertinent edges is $O(n)$.
\end{proof}

We believe that the expected number of in-pertinent edges is $O(n)$ even if $1<s<\infty$.
The main technical issue is to bound $\Pr[n^s(d[v]-M)>t]$ for large $t$.

\section{All-pairs shortest paths}\label{S-all-pair}

The all-pairs shortest paths problem can be solved by running the
single-source shortest paths algorithm from each vertex.  The expected
APSP run time will be $n$ times the expected SSSP run time, and the
probability of a large deviation in the run time is at most multiplied
by a factor of $n$, which is negligibly affects the exponentially
small probability of a large deviation.  In this section we discuss
the sorting of the adjacency lists of the graph by edge cost, which
for the SSSP problem we assumed had been part of the preprocessing
phase.  We show that when the edge cost distribution is known, these
sorted adjacency lists can be constructed in $\Theta(n^2)$ time, with
exponentially small probability of a large deviation in the run time.

Since the edge cost distribution is known, we can use bucket sorting
to sort the edges by cost, and then scan through the sorted edges to
build up the outgoing and incoming adjacency lists.  Let $N=\binom{n}{2}$
denote the number of edges.  We can take the number of buckets to be $N$,
and set the range of each bucket so as to make each bucket equally likely
to contain a random edge cost.  Each bucket will then contain an approximately Poisson
number of edges.  These edge costs can then be sorted using, for example, heapsort.

\begin{theorem}
  Suppose the $N=\binom{n}{2}$ edge costs are sorted using bucket sort with $N$ buckets on which heapsort is used,
  as described above.  Then the run time has expected value $O(n^2)$, and there are constants $c_1$ and $c_2$ such that
  for any $\Delta\geq 0$,
  $\Pr[\text{run time}\geq c_1 n^2 + \Delta] \leq \exp(-c_2 \Delta)$.
\end{theorem}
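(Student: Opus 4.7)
My plan is to split the running time into a deterministic $O(N)=O(n^2)$ part (bucketing the edges into their $N$ buckets and the final scan through the sorted result to build adjacency lists) and a random part $T=\sum_{i=1}^{N} H(k_i)$, where $k_i$ is the number of edges in bucket $i$ and $H(k)=O(k\log k)$ is the heapsort time on $k$ items (with $H(0)=H(1)=0$). Because each edge independently lands in each bucket with probability $1/N$, the joint law of $(k_1,\dots,k_N)$ is multinomial with parameters $N$ and $(1/N,\dots,1/N)$, and each $k_i$ is marginally $\BIN(N,1/N)$, with mean $1$ and all moments uniformly bounded in $N$ (dominated by the corresponding moments of $\Poi(1)$). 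Using $H(k)\le C k(\log k+1)$, this gives $\E[H(k_i)]=O(1)$, and hence $\E[T]=O(n^2)$, which establishes the expectation bound.

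For the tail bound I would run a Chernoff argument on $T$ built from two ingredients. The first is that the single-bucket moment generating function $\E[\exp(s_0 H(k_i))]$ is bounded uniformly in $N$ for some small constant $s_0>0$. This holds because $\Pr[k_i=k]\le\binom{N}{k}N^{-k}\le 1/k!$, and, using Stirling, $\sum_{k\ge 0} e^{s_0 H(k)}/k!$ converges whenever $s_0$ is small enough that $(e/k^{1-Cs_0})^k$ is summable, giving $\E[\exp(s_0 H(k_i))]\le 1+A s_0$ for a constant $A$. The second ingredient is that the $k_i$ are close enough to independent for the joint MGF to factor. Here I would invoke the negative association of multinomial counts (Joag-Dev--Proschan), which for the nondecreasing test functions $e^{s_0 H(\cdot)}$ yields
\[\E\Big[\exp\Big(s_0\sum_{i=1}^{N} H(k_i)\Big)\Big]\;\le\;\prod_{i=1}^{N}\E[\exp(s_0 H(k_i))]\;\le\;\exp(A s_0 N).\]
Markov's inequality then gives $\Pr[T\ge 2AN+\Delta]\le\exp(-s_0\Delta)$ for every $\Delta\ge 0$. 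Absorbing $2AN$ together with the deterministic $O(N)$ into a single $c_1 n^2$ (possible since $N=\Theta(n^2)$) and setting $c_2=s_0$ yields the desired bound.

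The main obstacle is the non-independence of the $k_i$; the Joag-Dev--Proschan theorem is the cleanest tool for disposing of it, but a self-contained alternative is the standard Poissonization coupling: analyse the easier i.i.d.\ $\Poi(1)$ case directly and recover the multinomial law by conditioning on the total being exactly $N$, paying only a polynomial-in-$N$ factor that is swallowed by the exponential tail. All other steps --- the expectation computation and the convergence of the single-bucket MGF --- are routine consequences of the growth rate $H(k)=O(k\log k)$.
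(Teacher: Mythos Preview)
Your argument is correct, but the paper's proof is quite different and worth knowing because it is shorter and entirely self-contained. The paper takes the heapsort cost on a bucket of size $k$ to be proportional to $1+\log k!$, sets $T_i=\log k_i!$ and $T=\sum_i T_i$, and simply computes $\E[\ee^{T}]$ directly over the multinomial distribution:
\[
\E[\ee^{T}]
=\sum_{k_1+\cdots+k_N=N}\binom{N}{k_1,\dots,k_N}\frac{1}{N^N}\prod_i k_i!
=\sum_{k_1+\cdots+k_N=N}\frac{N!}{N^N}
=\binom{2N-1}{N}\frac{N!}{N^N}
=(1+o(1))\frac{1}{\sqrt{2}}(4/\ee)^N.
\]
The point is that the choice $T_i=\log k_i!$ makes $\ee^{T}=\prod_i k_i!$ cancel the denominator of the multinomial coefficient exactly, so no decoupling lemma or Poissonization is needed at all; the Chernoff bound $\Pr[T\ge t]\le \ee^{-t}\E[\ee^{T}]$ then finishes. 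Your route via negative association (or the Poissonization alternative you sketch) is perfectly valid and has the advantage of working for any monotone cost function $H(k)=O(k\log k)$, not just $\log k!$; the price is that you invoke Joag-Dev--Proschan (or pay a $\Theta(\sqrt{N})$ conditioning factor) where the paper gets by with a one-line identity. Both yield the same $O(n^2)$ expectation and exponential tail.
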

\begin{proof}
  Suppose that the $i$th bucket contains $k_i$ items.  The time needed to heapsort these items is proportional to $1+\log k_i!$.
  The $1$'s of course add up to $N$, so we let $T_i=\log k_i!$, let $T=\sum_i T_i$, and bound the probability that $T$ is large.
  Of course
  \[ \Pr[T\geq t] = \Pr[\ee^T\geq \ee^t] \leq \ee^{-t} \, \E[\ee^T]\,.\]
  We can evaluate $\E[\ee^T]$ as a multinomial sum:
\begin{align*}
\E[\ee^T]
  &= \sum_{k_1,\dots,k_N} \binom{N}{k_1,\dots,k_N} \frac{1}{N^N} \ee^{\sum_i \log k_i!} \\
  &= \sum_{k_1,\dots,k_N} \frac{N!}{N^N} \\
  &= \binom{2N-1}{N} \frac{N!}{N^N} \\
  &= (1+o(1)) \frac{1}{\sqrt{2}} (4/\ee)^N\,,
\end{align*}
from which the large deviation claim follows.  The expected value claim is a consequence of the large deviation claim.
\end{proof}

\section{Efficient priority queues}\label{S-bucket}

Traditional comparison-based priority queues, such as the binary heaps of Williams \cite{Williams64}, require $O(\log n)$ time per heap operation, which is best possible given the $\Omega(n\log n)$ lower bound for comparison-based sorting.  The fastest implementations of Dijkstra's algorithm use priority queues like Fibonacci heaps \cite{FrTa87} that support \extractmin\ operations in $O(\log n)$ amortized time, and all other heap operations, including \decreasekey, in $O(1)$ amortized time. Our algorithm uses only \INSERT, \minkey\ and \extractmin\ operations --- in particular, it does not use \decreasekey\ operations.

Faster priority queues may be obtained by abandoning the comparison-based approach and assuming that keys are integers contained in single machine words. The fastest priority queues in this word RAM model require $O(\sqrt{\log\log n})$ expected time per operation (see Thorup \cite{Thorup07}).
Dial \cite{Dial69}, Ahuja \textit{et al.} \cite{AhMeOrTa90}, and Cherkassky \textit{et al.} \cite{ChGoSi99}
suggested bucket-based monotone priority queues.  We describe these in Section~\ref{sub:bucket};
they are somewhat slower asymptotically, but more practical.
We describe in Section~\ref{sub:two-level} a simple adaptation of the bucket-based priority
queues, and then in Section~\ref{sub:queue-deviation} we show that, in our probabilistic setting,
these priority queues require only $O(1)$ amortized time per operation, with high
probability, allowing us to implement algorithm \sssp\ of
Section~\ref{S-SSSP} in $O(n)$ time.

\subsection{Bucket-based monotone priority queues}\label{sub:bucket}

We briefly sketch a standard bucket-based implementation of a monotone
priority queue.  If the priority queue is comprised of $B$ buckets of
width $W$, then the $i$-th bucket, where $0\le i<B$, contains items
whose keys are in the interval $[i W,(i+1)W)$.  If there is no
\textit{a~priori\/} upper limit on the value of an item's key, then
any items with key $\geq B W$ can be placed in the last bucket.  The
items in each bucket could be stored in a linked list, a binary heap,
or another data structure which implements a monotone priority queue.
Initially all buckets are empty. One of the buckets is designated as
the \emph{active\/} bucket.  We let $a$ be the index of the active
bucket. Initially $a=0$.

The \INSERT\ and \extractmin\ operations are implemented as follows.
To insert an item~$x$ with key~$k$, we simply insert~$x$ into bucket
$\lfloor k/W \rfloor$.  To extract an item of minimum key, we check
whether the current active bucket is non-empty.  If it is, we perform
an extract-min operation on the bucket's priority queue, and return an
item of minimum key.  If the active bucket is empty, we increment~$a$
and repeat. The correctness of the implementation follows from the
monotonicity assumption.

It is not difficult to check that the total time required to perform a
sequence of operations on such a bucket-based monotone priority queue
is $O(B+\sum_i T_i)$, where~$T_i$ is the amount of work required to do
the \INSERT\ and \extractmin\ operations in bucket $i$'s sub-priority
queue.  The parameter $B$ is chosen to be large enough to make the
sub-queues small and hence the $T_i$'s small, but not so large as to
dominate the total running time.  In many applications a simple linked
list is sufficient for the sub-queues because they typically contain
so few elements.  For our purposes we need to use a more efficient
data structure for the priority queues associated with each bucket.

\subsection{Splitting active buckets into binary heaps}\label{sub:two-level}

In a bucket-based monotone priority queue, once a bucket becomes
active, it can itself be implemented by a bucket-based monotone
priority queue, and such multilevel buckets were suggested by
Cherkassky \textit{et al.} \cite{ChGoSi99}.  Elements of each
non-active top-level bucket can be stored in a simple unsorted linked
list until the bucket becomes active.

For the purposes of the forward-backward SSSP algorithm, we advocate
using two-level buckets, where the number of sub-buckets per top-level
bucket is not fixed ahead of time, but is instead determined by the
number of items in the top-level bucket at the time it becomes active.
When a top-level bucket becomes active, if it contains $b$ items, then
we split it into $b$ sub-buckets, each of which is in turn implemented
by a binary heap.
We can afford to sequentially scan these low level buckets to find the
first non-empty low-level bucket, since there are $b$ items over which
to amortize the scanning cost.  If the items in the top-level bucket
are approximately uniformly distributed, then the sub-buckets should
typically have $O(1)$ items.  In the course of running the
forward-backward algorithm, additional edges may be subsequently
inserted into the active top-level bucket.  However, there is no need
to resplit or rebalance the sub-buckets.  We rely on the fact that the
underlying binary heaps are efficient, and that few items are ever
inserted into a top-level bucket after it becomes active.
If there are any items whose key would fall outside the range of the top-level
buckets, these items are placed into the last top-level bucket.  In the event that
the last top-level bucket becomes active, its items are inserted
into a single binary heap (without any sub-bucketing).

Since binary heaps are used for the sub-buckets, the worst-case
performance of these priority queues is $\Theta(\log n)$ time per
examined edge.  As we shall see in the next section, when we take the top-level bucket width
$W$ to be $\Theta(1/(n\log n))$ and the number of buckets~$B$ to be $\Theta(n)$, when the forward-backward
SSSP algorithm is run on $\G_n(\EXP(1))$, not only is the expected
total running time $\Theta(n)$, but the probability that the total
running time exceeds $\Theta(n)$ is $\exp(-\Theta(n/\log n))$.

\subsection{Performance of the buckets of heaps}\label{sub:queue-deviation}

The priority queue $P$ is used to process the pertinent edges, and up to $n$ additional non-pertinent edges may also be inserted into $P$.  There are three main types of pertinent edges: shortest path tree edges, out-pertinent edges which are not SPT edges, and in-pertinent edges which are not SPT edges.  The keys for the SPT edges are just the distances of the vertices from the source, which have a simple characterization.  The keys of the out-pertinent edges which are not SPT edges is dominated by a Poisson process with random intensities depending on the SPT, and similarly for the in-pertinent edges which are not SPT edges.  The non-pertinent edges inserted into $P$ can also be characterized in terms of the Poisson process.

When a top-level bucket becomes active, if it contains $b$ items then it is evenly divided into $b$ sub-buckets which are each implemented by a binary heap.  Assuming the top-level buckets are small enough, the edges in them will be approximately uniformly distributed, so each sub-bucket will contain a small random number of items, and it will be unlikely for the binary heap operations to take a long time.  Also assuming the top-level buckets are small enough, there will be very few edges inserted into the active top-level bucket, and these will contribute only a small amount to the running time.

We already have exponentially tight bounds on the number of edges inserted into $P$.  Assuming that not too many edges are inserted into $P$, then their distribution into the top-level buckets is almost irrelevant to the performance of the priority queue.

Let $b_i$ denote the number of items in top-level bucket $i$ at the time that it becomes active.  Top-level bucket $i$ is then split into $b_i$ sub-buckets; let $J_{i,j}$ denote the number of items in the $j$th sub-bucket, where $0\leq j<b_i$.  If no further items are inserted into this sub-bucket, then the amount of time required to perform the $J_{i,j}$ \INSERT\ and \extractmin\ binary heap operations is proportional to $J_{i,j}+T_{i,j}$ where $T_{i,j}=\log(J_{i,j}!)=(1+o(1)) J_{i,j}\log J_{i,j}$.  Let $T=\sum_i\sum_j T_{i,j}$.  We wish to show that $\E[T]=O(n)$, and that the distribution of $T$ has exponentially decaying tails, and that the additional edges inserted into a top-level bucket after it becomes active do not contribute much to the runtime.
To obtain the exponentially decaying tails, we use the usual strategy for Chernoff-type bounds and write for $\alpha>0$
\begin{align*}
\Pr[T\geq t] &= \Pr[\ee^{\alpha T}\geq \ee^{\alpha t}] \\
&\leq \ee^{-\alpha t} \,\E[\ee^{\alpha T}]\\
&= \ee^{-\alpha t} \,\E\left[\prod_{i,j} \ee^{\alpha T_{i,j}}\right]\\
&= \ee^{-\alpha t} \,\E\left[\prod_{i,j} J_{i,j}!^\alpha \right].
\end{align*}
Of course these $J_{i,j}$'s are not independent of one another,
so bounding the expectation requires some work.
The above discussion motivates the following theorem:

\begin{theorem} \label{T-randomness-left}
Suppose that $X_1,X_2,\dots,X_\tau$ is a random-length sequence of random numbers
(where $\tau$ is a stopping time),
 such that
conditional on $t\leq \tau$,
 $X_t$ can be written
as $X_t=Y_t+U_t$ where $U_t$ is uniformly random in $[0,1)$ and independent of $X_1,\dots,X_{t-1}$ and $Y_t$.
Let
\[J_i=\#\{t\leq \tau: \lfloor X_t\rfloor=i\}\,,\]
denote the number of $X_t$'s that fall into ``bucket $i$'', and let
\[J^{(b)}_{i,j} = \#\{t\leq \tau:i+j/b\leq X_t<i+(j+1)/b\}\]
denote the number of these $X_t$'s
which are in the ``$j$th sub-bucket of the $i$th bucket'' if it is evenly split into $b$ sub-buckets.
If $\tau$ is bounded, then
 \[\E\left[5^{-\tau} \prod_i \max_{b_i\geq J_i} J^{(b_i)}_{i,0}!\cdots J^{(b_i)}_{i,b_i-1}!\right]\leq 1\,.\]
(If the $Y_t$'s are integer-valued, then $5^{-\tau}$ may be replaced with $3^{-\tau}$.)
\end{theorem}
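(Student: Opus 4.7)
The plan is to exhibit $Z_t := 5^{-t}\,\Phi_t$, where $\Phi_t := \prod_i F_i(t)$ and $F_i(t) := \max_{b\ge J_i(t)}\prod_{j=0}^{b-1} J_{i,j}^{(b)}(t)!$, as a nonnegative supermartingale with respect to the filtration generated by $(X_1,X_2,\dots)$.  Since $Z_0=1$ and $\tau$ is bounded, the optional stopping theorem will then give $\E[Z_\tau]\le 1$, which is exactly the claimed inequality.  The supermartingale property reduces to the one-step bound $\E[\Phi_{t+1}/\Phi_t\mid\mathcal F_t]\le 5$; in the integer-valued $Y_t$ setting, the target will be to sharpen this to $\le 3$.

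Only the bucket $i^*:=\lfloor X_{t+1}\rfloor$ is affected by the new point, so $\Phi_{t+1}/\Phi_t = F_{i^*}(t+1)/F_{i^*}(t)$.  Writing $Y_{t+1}=k+\alpha$ with $k\in\mathbb Z$ and $\alpha\in[0,1)$, and using that $U_{t+1}\sim U[0,1)$ is independent of $\mathcal F_t$ and $Y_{t+1}$, the new point lands in bucket $k$ with probability $1-\alpha$ (with fractional part uniform on $[\alpha,1)$) or in bucket $k+1$ with probability $\alpha$ (with fractional part uniform on $[0,\alpha)$).  This reduces the one-step bound to a per-bucket integral estimate: setting $\phi(\vec q):=\max_{b\ge|\vec q|}\prod_{j=0}^{b-1} c_j(\vec q,b)!$ with $c_j(\vec q,b):=\#\{p\in\vec q:\lfloor bp\rfloor=j\}$, the target is to show
\[
 \int_a^{a+\ell}\phi(\vec p\cup\{v\})\,dv \;\le\; (\ell+2)\,\phi(\vec p)
\]
for every finite multiset $\vec p\subset[0,1)$ and every $[a,a+\ell]\subseteq[0,1)$.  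The integer case ($\ell=1$) then gives the factor $3$; and for general $Y_t$, the contributions from the two possible target buckets sum to at most $((1-\alpha)+2)+(\alpha+2)=5$.

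The main obstacle is proving this one-bucket lemma.  For a fixed split count $b\ge|\vec p|+1$, inserting at $v$ multiplies $F(b;\vec p):=\prod_j c_j(\vec p,b)!$ by $c_{j_b(v)}(\vec p,b)+1$, and integrating against $v\in[a,a+\ell]$ yields at most $(|\vec p|/b+\ell)\,F(b;\vec p)\le(1+\ell)\,F(b;\vec p)$ since $b\ge|\vec p|+1$.  One cannot simply sum over $b$, however, because $\sum_{b\ge|\vec p|+1}F(b;\vec p)$ can diverge: if all points of $\vec p$ coincide then every split gives $|\vec p|!$.  My plan is a charging scheme that, for each $v$, identifies an optimizing $b^*(v)$ for $\phi(\vec p\cup\{v\})$, charges the excess $\phi(\vec p\cup\{v\})-\phi(\vec p)$ against the width-$1/b^*(v)\le 1/|\vec p|$ sub-bucket containing $v$ and the old point(s) it collides with, and uses that distinct points of $\vec p$ eventually occupy disjoint sub-buckets at the optimizing $b^*(v)$.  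I expect the delicate step to be obtaining the \emph{linear} dependence on $\ell$, which is essential for combining the two partial integrals into $5$ rather than the naive $3+3=6$ obtained by bounding each half by the full integer-case constant.
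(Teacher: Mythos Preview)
Your supermartingale framework is sound, but the proposal has a genuine gap: the one-bucket lemma
\[
\int_a^{a+\ell}\phi(\vec p\cup\{v\})\,dv\;\le\;(\ell+2)\,\phi(\vec p)
\]
is the entire content of the argument, and you have not proved it. The ``charging scheme'' you sketch is too vague to evaluate, and the obstacle you flag---getting linear dependence on $\ell$ rather than a constant $3$ on each piece---is real. More fundamentally, tracking $\Phi_t=\prod_i\max_{b\ge J_i(t)}\prod_j J_{i,j}^{(b)}(t)!$ directly is awkward: inserting a point into bucket~$i$ both changes the factorials \emph{and} shrinks the feasible range from $b\ge J_i$ to $b\ge J_i+1$, so the optimizing $b$ can jump and the ratio $\Phi_{t+1}/\Phi_t$ does not factor cleanly into a single sub-bucket multiplier.

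The paper avoids this entirely with a different (and much simpler) supermartingale. For each $t$ set $A_t=\#\{s\le t:\lfloor X_s\rfloor=\lfloor X_t\rfloor\}$ and
\[
C_t=\#\bigl\{s\le t:\lfloor X_s\rfloor=\lfloor X_t\rfloor\ \text{and}\ |X_s-X_t|\le 1/A_t\bigr\}.
\]
The point is that $\prod_{t\le\tau} C_t$ dominates $\prod_i\max_{b_i\ge J_i}\prod_j J_{i,j}^{(b_i)}!$ \emph{uniformly} in the choice of the $b_i$: if $b_i\ge J_i$ and a sub-bucket of width $1/b_i$ receives points at times $t_1<\cdots<t_m$, then for each $\ell$ the earlier points $X_{t_1},\dots,X_{t_{\ell-1}}$ lie within $1/b_i\le 1/J_i\le 1/A_{t_\ell}$ of $X_{t_\ell}$ and hence are counted in $C_{t_\ell}$, giving $C_{t_\ell}\ge\ell$ and $m!\le C_{t_1}\cdots C_{t_m}$. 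So it suffices that $5^{-t}\prod_{s\le t}C_s$ be a supermartingale, i.e.\ that $\E[C_t\mid X_1,\dots,X_{t-1},Y_t]<5$. This is immediate: $X_t$ can land only in the two buckets $\lfloor Y_t\rfloor$ and $\lceil Y_t\rceil$; in a bucket already holding $k$ points, each prior point is within distance $1/(k+1)$ of the uniformly placed $X_t$ with probability at most $2/(k+1)$, so the expected count of close prior points is $<2$ per bucket; add $1$ for $X_t$ itself. The integer-$Y_t$ case collapses to a single bucket and gives $3$. The idea you are missing is to replace the $b$-dependent maximum by a $b$-free pointwise upper bound whose one-step increment is trivial to control.
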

\begin{proof}
  For $t\leq\tau$, let \[A_t=\#\left\{s\leq t: \lfloor X_s\rfloor = \lfloor X_t\rfloor\right\}\]
  be the number of points $X_1,\dots,X_t$ which are ``approximately'' $X_t$ ($X_t$ is approximately itself, so $A_t\geq 1$).
  Let \[C_t=\#\left\{s\leq t:\lfloor X_s\rfloor = \lfloor X_t\rfloor\ \text{and}\ |X_s-X_t|\leq \frac{1}{A_t}\right\}\]
  be the number of points $X_1,\dots,X_t$ to which $X_t$ is ``close''
  ($X_t$ is close to itself, so $C_t\geq 1$).
  Writing $X_t=Y_t+U_t$, we see that $X_t$ could be close to the items
  in bucket $\lfloor Y_t\rfloor$ or bucket $\lceil Y_t \rceil$ (which might be the same).
  Let $k$ denote the number of items in buckets $\lfloor Y_t\rfloor$ before item $t$ is added.
  For each item in bucket $\lfloor Y_t\rfloor$, the probability that $X_t$ is close to that item is at most $2/(k+1)$, so the expected number of
  such items is less than $2$.  Similarly, the expected number of items in bucket $\lceil Y_t\rceil$ to which $X_t$ is close is also less than $2$.
  Thus $\E[C_t\mid X_1,\dots,X_{t-1},Y_t]<5$, and if $Y_t$ is an integer then $\E[C_t\mid X_1,\dots,X_{t-1},Y_t]<3$.
  Since $\tau$ is bounded, by the Optional Stopping Theorem for supermartingales, we have $\E[5^{-\tau} C_1\cdots C_\tau]\leq 1$.
  For any $b_i\geq J_i$, if two items end up in the same sub-bucket then they must be close,
  so we have $J^{(b)}_{i,0}!\cdots J^{(b)}_{i,b_i-1}! \leq \prod_{t:\lfloor X_t\rfloor = i} C_t$,
  and hence $\prod_i J^{(b)}_{i,0}!\cdots J^{(b)}_{i,b_i-1}! \leq C_1 \cdots C_\tau$.
\end{proof}

For the priority queue application, the $X_t$'s correspond to the keys rescaled by a factor of~$W$,
the top-level bucket corresponds to $\lfloor X_t\rfloor$, $J_i$ corresponds to the number of items in top-level bucket~$i$,
the intervals $[i+j/b,i+(j+1)/b)$ correspond to the ranges of the sub-buckets, and $J^{(b_i)}_{i,j}$ corresponds to the number of items in the $j$th sub-bucket of top-level bucket $i$ if it is split into $b_i$ sub-buckets.  (The keys are approximately uniformly distributed within
a top-level bucket, assuming the top-level bucket width $W$ is not too large.
We will address the non-uniformity with the following lemmas.)  Even if an adversary
looks at the keys one by one and then decides when to stop adding new
items to the top level bucket, and even if the adversary is allowed to
choose the number of sub-buckets to be an arbitrary number which is at
least the number of items, still the expected product of the factorials of the
sub-bucket sizes cannot be too large.

\begin{lemma} \label{lem:P-almost-uniform}
  Suppose the forward-backward algorithm is run on $\G_n(\EXP(1))$.  The
  keys of the items inserted into queue $P$ can be reordered and
  written as $u_1,u_2,\dots,u_g,r_1,\dots,r_h$, where $g$ and $h$ are
  random, such that
\begin{enumerate}
\item conditional on $t\leq g$ and conditional on the values of $u_1,\dots,u_{t-1}$,
  there is a random integer $y_t$ for which $u_t/W$ is
  uniformly random on $[y_t,y_t+1)$,
\item each of the keys $u_1,\dots,u_g$ is in its top-level bucket when that bucket becomes active, and
\item $h$ is stochastically dominated by
    $\Poi(2 n^2 W)+2\Poi(n^2 W)+2\Poi(n N W)+\Poi(n^2W)+1+\Poi(3n^2W)$,
  where these different Poisson random variables may be correlated, and $N$ is the number of non-SPT edges enqueued in $P$.
  In particular, $\Pr[h>\Theta(n^2 W)] \leq \exp[-\Theta(n^2 W)]+\exp[-\Theta(n)]$.
\end{enumerate}
\end{lemma}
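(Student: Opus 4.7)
My plan is to split each key inserted into $P$ into a bucket index $y_t$ and a position within that bucket, by exploiting the fresh exponential randomness that accompanies each key. By the Janson--Davis--Prieditis characterization of Section~\ref{S-analysis}, together with the memoryless property of $\EXP(1)$, every key inserted into~$P$ is of one of four types: (i) an SPT key $d_{v_k}=d_{v_{k-1}}+X_{k-1}$ with $X_{k-1}\sim\EXP(\lambda_{k-1})$ fresh, where $\lambda_{k-1}=(k-1)(n-k+1)$; (ii) a non-SPT out-pertinent key of the form $d_{v_j}+Y$ with $Y\sim\EXP(1)$ independent of everything previously seen (for non-SPT edges $(v_i,v_j)$ with $i<j$); (iii) a non-SPT in-pertinent key of the same shape; and (iv) in the directed case, a non-SPT key $d_{v_i}+Y$ for an edge $(v_i,v_j)$ with $i>j$ whose weight is an independent $\EXP(1)$. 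In addition, at most $n$ non-pertinent first-phase edges can appear; these contribute the ``$+1$'' and some of the slack.

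For each non-SPT key $K=d_{v_k}+Y$ I would condition on $d_{v_k}$ and the bucket $i=\lfloor K/W\rfloor$, and note that the density of $K$ on $[iW,(i+1)W)$ is proportional to $e^{-(x-d_{v_k})}$, varying by at most the factor $e^W$. Writing this density as the mixture $\alpha\cdot\mathrm{Unif}+(1-\alpha)\cdot\mathrm{residual}$ with $\alpha=We^{-W}/(1-e^{-W})=1-\Theta(W)$, and flipping an independent biased coin with probability $\alpha$, I would place the key into the $u_t$ pile when the uniform component is realized and into the $r_t$ pile otherwise. Because these coin flips are independent across edges, the resulting bad events are stochastically dominated by a Poisson random variable with mean $O(NW)$ for the realized insertions, plus further $O(n^2W)$-rate Poissons from the a priori bounds on non-SPT edges that could have been inserted; these produce the $2\Poi(n^2W)$, $2\Poi(nNW)$, and $\Poi(3n^2W)$ terms.

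For the SPT keys I would use a thinning coupling: embed the SPT in a homogeneous Poisson process of rate $\lambda^*=\lceil n/2\rceil\lfloor n/2\rfloor$, accepting each point with probability $\lambda_k/\lambda^*$ based on the current size $k$ of $S$. Conditional on the number of Poisson points falling in a given bucket, those points are i.i.d.\ uniform in the bucket; thus if the thinning probability were constant within a bucket, the accepted points would remain exchangeable and a uniform-on-bucket decomposition of each would be immediate. To produce a valid $u_t=(y_t+U)\cdot W$ decomposition I would list the SPT keys grouped by top-level bucket, and within each bucket further decompose each accepted point's conditional density as $\beta\cdot\mathrm{Unif}+(1-\beta)\cdot\mathrm{residual}$ for suitable $\beta$, designating the residual outcomes as bad events. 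Two further sources of bad events need to be tracked: (a) the SPT's ``first'' hit inside each top-level bucket, whose fresh exponential contribution is a truncated $\EXP(\lambda_{k-1})$ on an interval of length ${<}W$, and (b) the drift in $\lambda_k/\lambda^*$ across successive accepts within one middle bucket. Each is controlled by a Poisson with mean $O(n^2W)$, supplying the $\Poi(2n^2W)$ and $\Poi(n^2W)$ terms.

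The main obstacle is the SPT handling in the middle buckets where $\lambda_kW=\Theta(n/\log n)$ is large: the thinning probability $\lambda_k/\lambda^*$ drifts nontrivially across the $\Theta(n/\log n)$ accepts inside such a bucket, so the accepted positions are order statistics of i.i.d.\ uniform Poisson points rather than marginally uniform themselves. My plan is to resolve this by listing the accepts within each bucket in the order of their ranks among the Poisson points (rather than the order in which they were accepted), exploiting the fact that conditional on which ranks are accepted the joint positions are exchangeable as a subset of i.i.d.\ uniforms; each rank's marginal is then decomposed into a uniform-on-bucket piece plus a residual that is charged as a bad event. Summing all contributions---bad non-SPT decompositions, bad SPT within-bucket decompositions, bad SPT bucket crossings, and non-pertinent first-phase edges---then yields the stated stochastic domination, and the final tail bound $\Pr[h>\Theta(n^2W)]\leq \exp[-\Theta(n^2W)]+\exp[-\Theta(n)]$ follows from Chernoff bounds on each Poisson summand together with Theorem~\ref{T-pertinent-exp}'s control on $N$.
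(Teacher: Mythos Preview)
Your proposal has a genuine gap: it addresses only the uniformity requirement (claim~1) and says essentially nothing about claim~2, namely that each good key must already be present in its top-level bucket at the moment that bucket becomes active. This is not a side condition---it is what makes the lemma usable, since Theorem~\ref{T-randomness-left} requires the sub-bucket splitting to be done on keys already present. Your list of bad events (mixture-residual outcomes, SPT bucket crossings, first-phase non-pertinent edges) omits the algorithmic failure modes: an in-pertinent edge $(u,v)$ whose key lies in bucket~$i$ may not be requested until after bucket~$i$ becomes active (this happens when $d_v$ is within $O(W)$ of $iW$, so that the condition $c[u,v]<2(\min(P)-M)$ is triggered only once $\min(P)$ has risen into bucket~$i$); and an outgoing edge $(u,v)$ may be delayed behind another outgoing edge $(u,v')$ whose key also lies in bucket~$i$. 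The paper handles these via explicit structural conditions (its properties~4 and~5a--5d on the sets~$G_i$) and then bounds the number of edges failing each; this case analysis is exactly where the specific summands $2\Poi(nNW)$, $\Poi(n^2W)$, the ``$+1$'', and $\Poi(3n^2W)$ in the statement come from. Your proposal does not produce these terms from anything.

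Your SPT handling is also more complicated than needed, and the proposed fix for the ``main obstacle'' is not obviously sound. The paper does not treat SPT edges separately at all: it represents every edge cost as the first point of an independent unit-rate Poisson point process and observes that, conditional on there being exactly one point of that process in $[iW-d_u,(i+1)W-d_u)$, that point is uniform there---this works identically for SPT and non-SPT edges, and the ``exactly one point'' failure contributes a $\Poi(n^2W)$ term uniformly. Your thinning-and-rank-reordering scheme must contend with the fact that the acceptance probability $\lambda_k/\lambda^*$ changes at each accepted point, so the accepted positions within a bucket are neither i.i.d.\ nor exchangeable given only the number accepted; decomposing each rank's marginal into uniform-plus-residual does not yield the sequential conditional uniformity claim~1 demands, because conditioning on $u_1,\dots,u_{t-1}$ reveals information about which ranks were accepted and hence biases the remaining positions.
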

\begin{proof}
To do this partitioning of the edges in $P$, we can explore the
shortest path tree and the edge weights in a manner which respects the
top-level buckets as follows: Let \[S_i=\{u\in V:d_u<i W\}\] denote the of
vertices whose distance from the source would fall into one of the
first $i$ top-level buckets for $P$.  Once $S_i$ and the distances to
all vertices $u$ in $S_i$ have been determined, for each vertex $v$,
we determine whether or not $i W \leq d_u+c[u,v]<(i+1)W$, without
determining the fractional part of $d_u+c[u,v]$.
Not necessarily all such edges are in bucket~$i$ at the time that it
becomes active.  Let $G_i$ denote the set of edges $(u,v)$ which
satisfy the following properties:
\begin{enumerate}
\item $iW\leq d_u+c[u,v]<(i+1)W$,
\item $d_u<i W$,
\item $c[u,v]$ is the only point of edge $(u,v)$'s Poisson point process less than $(i+1)W-d_u$,
\item there is no $v'\neq v$ for which $iW\leq d_u+c[u,v']<(i+1)W$,
\item at least one of the following holds:
\begin{enumerate}
\item $|S_i|<\lceil n/2\rceil$,
\item $|S_i|\geq\lceil n/2\rceil$ and $(i+1)W\leq 2M-d_u$,
\item $|S_i|\geq\lceil n/2\rceil$ and $d_u\leq M$ and there is no $v'$ for which $M\leq d_u+c[u,v']<i W$, or
\item $|S_i|\geq\lceil n/2\rceil$
 and $(i+1)W\leq d_u+2\,(\min(d_v,i W)-M)$.
\end{enumerate}
\end{enumerate}
We will make the list $u_1,\dots,u_g$ by listing the keys of the edges in $\cup_i G_i$ so that edges in $G_i$ come before edges in $G_j$ when $i<j$, and the edges within $G_i$ appear in an arbitrary order (say lexicographic) which is independent of the edge weights.  There are several things to check to ensure that $u_1,\dots,u_g$ satisfies the claims of the lemma.

We first check claim 1 of the lemma:

Property 1 is that edge $(u,v)$'s key $d_u+c[u,v]$ falls into bucket~$i$.
Properties 1, 2, and~3 ensure that the key $d_u+c[u,v]$ is uniformly random in $[i W,(i+1)W)$.  Properties 4 and~5 can be tested knowing the graph only out to distance $i W$ from the source, together with knowledge of which keys fall into bucket~$i$; in particular, whether or not these properties hold is independent of the values of the keys falling into bucket~$i$.  Thus, conditional on the edges in $G_0,\dots,G_{i-1}$ and their keys, and on the edges in $G_i$, the keys of the edges in $G_i$ are i.i.d.\ uniform on $[i W,(i+1)W)$.
So $u_1,\dots,u_g$ satisfies claim 1 of the lemma.

To check claim 2 of the lemma, there are several cases depending on how an edge satisfies property 5:

If edge $(u,v)$ satisfies property 5a, then the algorithm is in phase 1 at the time that bucket~$i$ becomes active, so property 2 implies $P$ contains some outgoing edge of $u$ at this time.  Properties 1 and 4 imply that $(u,v)$ is the lightest outgoing edge of $u$ not yet extracted from $P$ when bucket~$i$ becomes active, so $P$ must contain edge $(u,v)$ at this time.

If edge $(u,v)$ satisfies property 5b, then $d_u+c[u,v]<(i+1)W\leq 2M-d_u$, so $c[u,v]\leq 2\,(M-d_u)$, so $(u,v)$ is out-pertinent.  Since $(u,v)$ is out-pertinent and $iW\leq d_u+c[u,v]$, queue~$P$ contains an out-pertinent edge of $u$ when bucket~$i$ becomes active; property 4 implies that $(u,v)$ is the lightest such edge with key at least $iW$, so $P$ must contain edge $(u,v)$ when bucket~$i$ becomes active.

If edge $(u,v)$ satisfies property 5c, then queue~$P$ contained an outgoing edge of $u$ at the time that the median $M$ was found, so its key is at least $M$, and this edge was not extracted before bucket~$i$ became active.  Properties 1, 4, and 5c together imply that $(u,v)$ is the lightest outgoing edge of $u$ with key $\geq M$, so $P$ must contain $(u,v)$ when bucket~$i$ becomes active.

If edge $(u,v)$ satisfies property 5d, then
$d_u+c[u,v]<(i+1)W\leq d_u+2\,(d_v-M)$, so the edge is in-pertinent.
At the time that bucket~$i$ becomes active, $\min(P)\geq i W$, so
$d_u+c[u,v]<(i+1)W\leq d_u+2\,(i W-M)\leq d_u+2\,(\min(P)-M)$, i.e.,
$c[u,v]< 2\,(\min(P)-M)$, so edge $(u,v)$ is enqueued in and extracted from $Q$ and requested by the time that bucket~$i$ becomes active.  If $(u,v)$ is not in $P$ at this time, it is because there is some other edge $(u,v')$ with smaller key in $P$.  However, property 4 ensures that any such edge is extracted from $P$ by the time that bucket~$i$ becomes active, so that $(u,v)$ is enqueued in $P$ by this time.

In each case, edge $(u,v)\in G_i$ is enqueued in $P$ by the time that bucket~$i$ becomes active.  Thus claim 1 of the lemma is true.

Next we check claim 3 of the lemma, i.e., that most edges enqueued in $P$ are among $u_1,\dots,u_g$.  There are several cases depending on which of the above properties that an edge enqueued in $P$ might fail:

For any edge $(u,v)$, property 1 holds for some $i$.

If property 2 fails, then $c[u,v]<W$.  The number of edges with such small cost is dominated by $\Poi(n^2 W)$.
Regardless of any of the edge weights, the number of edges which fail property 3 is dominated by $\Poi(n^2 W)$.
So the number of edges (whether or not they are enqueued in $P$) which fail property 2 or 3 is dominated by $\Poi(2n^2W)$.

Next we bound the number of SPT edges which fail property 4.
Recall the characterization of the distances of the vertices $0=d_{v_1}\leq d_{v_2}\leq\cdots\leq d_{v_n}$ from the source.  As we explore the graph $\G_n(\EXP(1))$ in order of increasing distance from the source, including both SPT edges and non-SPT edges, if an edge $(u,v)$ is found for which $d_u+c[u,v]$ is at most $W$ larger than the distance $d_{v'}$ to a vertex~$v'$ with parent~$u$, we can put a mark on vertex~$v'$ unless it already has a mark.  When a new vertex is discovered, the indicator random variable for it getting a mark is dominated by $\Poi(nW)$ regardless of the portion of the graph discovered so far or how many marks the previously discovered vertices get.  Since $n$ vertices are discovered, the total number of marks is dominated by $\Poi(n^2W)$.  If we double the number of these marks, we obtain an upper bound on the total number of SPT edges which fail property 4, together with some of the non-SPT edges which fail property 4.  In particular, if a bucket contains one or more SPT edges and zero or more non-SPT edges with the same parent $u$, we can cover all the SPT edges and the first (if any) of the non-SPT edges by these marks.

Next we bound the number of non-SPT edges enqueued in $P$ which fail property 4.
If we condition on the SPT and the distances to the vertices, then the remaining edge weights are independent of one another.
It is convenient to order the edges $(u,v)$ lexicographically with the endpoint $v$ being more significant, and then sample
the edge weights $c[u,v]$ in this order.
Once the edge weight $c[u,v]$ is sampled,
we can determine whether or not $(u,v)$ is out-pertinent or in-pertinent, but we may not be able to determine whether or not it is out-extraneous until subsequent edge weights are sampled.  If there was an earlier edge $(u,v')$ which was a candidate for being out-extraneous, and $(u,v)$ is a candidate for being out-extraneous, then $(u,v')$ is no longer a candidate for being out-extraneous.
For the $k$th vertex $v_k$, after all the non-SPT edges $(u,v_k)$ have had their weights sampled,
let $N_k$ denote the total number of non-SPT edges $(u,v_j)$ where $j\leq k$ which are pertinent or current candidates for being out-extraneous.
Let $X_k$ denote the number of these edges $(u,v)$ whose key falls into the same bucket as an earlier non-SPT edge $(u,v')$
that is enqueued in $P$.
Now $X_{k+1}-X_k$ is dominated by $\Poi(N_k W)$.  The $N_k$'s are non-decreasing, and $N_n$ is the total number of non-SPT edges enqueued in $P$.  From this we see that the number of non-SPT edges enqueued in $P$ which fail property 4 is dominated by
$2\Poi(n N_n W)$.

Next we bound the number of edges enqueued in $P$ which fail property 5.  There are several subcases depending on the type of edge enqueued in $P$:

Suppose that an edge $(u,v)$ is out-pertinent.  Then $c[u,v]\leq 2\,(M-d_u)$, so $iW\leq d_u+c[u,v]\leq 2M-d_u$.  If $(u,v)$ fails property 5a and 5b, then $|S_i|\geq\lceil n/2\rceil$ and $2M-d_u<(i+1)W$.  For any vertex $u$ for which $d_u\leq M$, there is only one bucket~$i$ for which $i W \leq 2M-d_u < (i+1)W$.
Regardless of the sequence of the first $\lceil n/2\rceil$ vertices whose distances are found, and the values of their distances, the number of out-pertinent edges which fail property 5 is stochastically dominated by a $\Poi(n^2 W)$ random variable.

Suppose that edge $(u,v)$ is out-extraneous.  Then $d_u\leq M$, and there is no vertex $v'$ for which $M<d_u+c[u,v']<d_u+c[u,v]$.
If $(u,v)$ is out-extraneous and fails property 5a and~5c, then $|S_i|\geq \lceil n/2\rceil$, $M<iW$, and there is some $v'$ for which $M\leq d_u+c[u,v']<iW\leq d_u+c[u,v]$.  Therefore $M=d_u+c[u,v']$.  With probability $1$ there is at most one such vertex $u$, and therefore at most one out-extraneous edge $(u,v)$ which fails property 5.

Suppose that $(u,v)$ is an in-pertinent edge.
Then $c[u,v]<2\,(d_v-M)\leq 2\,(d_u+c[u,v]-M)$ so $d_u+c[u,v]>2M-d_u$.
If 5a fails then $|S_i|\geq\lceil n/2\rceil$, so $d_u+c[u,v]>M$.
Suppose 5d also fails.
We now distinguish two subsubcases:

If $d_v\geq iW$, then
\begin{align*}
d_u+2\,(iW-M)
&<(i+1)W\\
iW &< 2M-d_u+W\\
2M-d_u<d_u+c[u,v] < iW+W &< 2M-d_u+2W\,.
\end{align*}
As the SPT and the edge weights are explored from the source, we see that for each vertex $u$,
the number of such exceptional edges (in-pertinent, $d_v\geq iW$, satisfies property 2, but fails property 5a and 5d) is dominated by a $\Poi(2nW)$ random variable which
is independent of the Poisson random variables associated with the other vertices.
So the total number of such exceptional edges is dominated by $\Poi(2n^2W)$.

If $d_v<iW$, then $(u,v)$ is not an SPT edge, and
\[
iW\leq d_u+c[u,v]<d_u+2\,(d_v-M) <(i+1)W\,.
\]
Given $d_u$, $d_v$, and $M$, there is only one choice of $i$ for which this could hold,
which imposes the following constraint on $c[u,v]$:
\[
\left\lfloor\frac{d_u+c[u,v]}{W}\right\rfloor =
\left\lfloor\frac{d_u+2\,(d_v-M)}{W}\right\rfloor\,.
\]
Even if we condition on the SPT and the distances to each vertex, and indeed, even
if we condition on all the other edge weights, for each non-SPT edge,
the indicator random variable for this event is dominated by $\Poi(W)$.
The total number of such exceptional edges (in-pertinent, $d_v<i W$, satisfies property 2, but fails property 5a and 5d) is dominated by a $\Poi(n^2 W)$ random variable, independent of the number of in-pertinent edges for which $d_v\geq iW$, satisfy property 2, but fail property 5.
Thus the total number of in-pertinent edges which satisfy property 2 but fail property 5 is at most $\Poi(3n^2W)$.
\end{proof}

\begin{lemma} \label{lem:Q-almost-uniform}
  Suppose the forward-backward algorithm is run on $\G_n(\EXP(1))$.  The
  keys of the items inserted into queue $Q$ can be reordered and
  written as $u_1,u_2,\dots,u_g,r_1,\dots,r_h$, where $g$ and $h$ are
  random, such that
\begin{enumerate}
\item conditional on $t\leq g$ and conditional on the values of $u_1,\dots,u_{t-1}$,
  there is a random number $y_t$ for which $u_t/W$ is
  uniformly random on $[y_t,y_t+1)$,
\item each of the keys $u_1,\dots,u_g$ is in its top-level bucket when that bucket becomes active, and
\item $h$ is stochastically dominated by
    $\Poi(16 n^2 W)+\Poi(4 n^2 W)+\Poi(8 n^2 W)+2\Poi(10 n^2 W)+2\Poi(10 n^2 W)+2\NB(\NB(\lfloor n/2\rfloor,1/4),5nW)+\Poi(n^2 W)+\Poi(n^2 W)+\Poi(n^2 W)+\Poi(n^2 W)$,
  where the above summands may be correlated, and $\NB$ denotes the negative binomial distribution.
  In particular, $\Pr[h>\Theta(n^2W)]\leq\exp[-\Theta(n^2 W)]$.
\end{enumerate}
\end{lemma}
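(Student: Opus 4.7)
The plan is to mirror the structure of Lemma~\ref{lem:P-almost-uniform} but now for queue~$Q$. Queue $Q$ contains incoming edges to vertices not yet in $S$, with key equal to $c[u,v]$ (not $d[u]+c[u,v]$), and an edge $(u',v)$ is enqueued in $Q$ precisely when its predecessor $(u,v)$ in $In[v]$ is extracted, or when the second stage begins (first incoming edge of each $v\notin S$). I will again reveal the graph in an order that respects $Q$'s top-level buckets, so that conditional on which edges land in bucket~$i$, the fractional positions of their keys are i.i.d.\ uniform on $[iW,(i+1)W)$.

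First I would set up the exploration process: let $T_i$ be the set of edges $(u,v)$ for which $iW\leq c[u,v]<(i+1)W$, and reveal the process in order of increasing bucket index, at each step using the memoryless property to expose the relative order of the edges whose weights fall into the current bucket without exposing their fractional positions. Then define a set $G_i\subseteq T_i$ of ``clean'' edges satisfying: (a) $c[u,v]\geq W$ (so there is no underflow issue); (b) $(u,v)$ is the unique point of its Poisson process in $[iW,(i+1)W)$; (c) no other edge $(u',v)$ with $u'\neq u$ falls into bucket~$i$ and is lighter than $(u,v)$ in the sorted list $In[v]$ up to the time bucket~$i$ becomes active; and (d) the edge $(u,v)$ is genuinely present in~$Q$ at the moment its top-level bucket becomes active. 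Each of the failure modes (a)--(c) is dominated by a $\Poi(n^2W)$ variable in the standard way, accounting for four of the $\Poi(n^2W)$ terms in claim 3. The list $u_1,\dots,u_g$ is the keys of the edges in $\cup_i G_i$, ordered by bucket; by construction these have the claimed uniform conditional distribution.

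Next I would bound the number of edges in $Q$ that fail to be in $G_i$. The $2\Poi(10n^2W)$ and $\Poi(8n^2W)$, $\Poi(4n^2W)$, $\Poi(16n^2W)$ terms should arise by partitioning edges by reason for being in $Q$: first incoming edges enqueued at the start of stage~2, predecessor-of-requested-edges, and edges whose key lies within $W$ of one of the geometric thresholds $2(\min(P)-M)$, $M$, or $d_{v_k}$. The proof for each term is parallel to the case analysis in Lemma~\ref{lem:P-almost-uniform}: use the characterization of the SPT from Section~\ref{S-analysis} to see that for fixed values of $M$ and the distances $d_{v_k}$, only a $\Poi(\text{const}\cdot n^2W)$ number of edges have weight falling into the $W$-wide ``boundary'' regions that cause bucket-i-activation timing to misalign with the requested/extracted status.

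The main obstacle will be the $2\NB(\NB(\lfloor n/2\rfloor,1/4),5nW)$ term, which does not appear in Lemma~\ref{lem:P-almost-uniform}. I expect this arises because an edge can be inserted into $Q$ long after its top-level bucket becomes active: each time a vertex $v\notin S$ has an incoming edge extracted from $Q$, a replacement enters $Q$, and these replacements can cascade. The right way to count these is to track, over the $\lfloor n/2\rfloor$ vertices added to $S$ in stage~2, the number of requested incoming edges per vertex; by Theorem~\ref{T-pertinent-exp}-style reasoning this is a negative binomial in $\lfloor n/2\rfloor$ with parameter related to the Poisson rate of in-pertinent edges (whence the outer $\NB(\lfloor n/2\rfloor,1/4)$), and for each such edge the chance that its replacement's key falls in an already-active bucket of width $W$ contributes the inner $\NB(\cdot,5nW)$ factor via the FKG/memoryless comparison. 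Assembling these bounds and applying a union bound gives the stated exponential tail $\Pr[h>\Theta(n^2W)]\leq\exp[-\Theta(n^2W)]$, completing claim~3.
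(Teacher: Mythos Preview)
Your overall skeleton---split the edges in~$Q$ into ``good'' ones that are both uniformly distributed within their bucket and present in~$Q$ when the bucket becomes active, and a small number of ``bad'' ones---matches the paper. But your attribution of the individual error terms, and in particular your explanation of the $\NB(\NB(\lfloor n/2\rfloor,1/4),5nW)$ term, is wrong, and the argument you sketch for it would not produce that bound.

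The paper does \emph{not} obtain the $\NB(\NB)$ term from ``cascading replacements'' or ``late insertions into already-active buckets.'' Instead, the paper observes that an edge $(u,v)$ is guaranteed to be in~$Q$ when its bucket becomes active provided there is no other incoming edge $(u',v)$ with $\lfloor c[u',v]/W\rfloor=\lfloor c[u,v]/W\rfloor$ (this is the paper's ``property~1'' of being \emph{preactive}). The $\NB(\NB)$ term counts \emph{close encounters}: pairs of non-SPT incoming edges to the same vertex~$v$ whose costs differ by less than~$5W$, among those costs below the in-pertinence threshold $2(d_v-M)$. The paper scans the rate-$n$ Poisson process of incoming edge costs to~$v$ from~$0$ to $2(d_v-M)$; each point ``activates'' the scan, and while active each further point within $5W$ is a close encounter (a geometric with mean $\approx 5nW$, giving the inner~$\NB$). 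The number of activations is $\Poi(2n(d_v-M))$, and since $d_v-M$ is dominated by $\EXP(1)/\lceil n/2\rceil$ this is geometric with mean~$\leq 4$; summing over the $\lfloor n/2\rfloor$ far vertices gives the outer~$\NB(\lfloor n/2\rfloor,1/4)$. Your proposed mechanism (number of requested in-pertinent edges times probability a replacement lands in an active bucket) is a different random variable and you have not shown it dominates~$h$.

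Similarly, the two $2\Poi(10n^2W)$ terms are \emph{not} about ``boundary regions'' near thresholds like $2(\min(P)-M)$ or~$M$; they also count close encounters, namely non-SPT points within~$5W$ of the SPT edge cost or of the first point beyond $2(d_v-M)$. The $\Poi(16n^2W)$, $\Poi(4n^2W)$, and $\Poi(8n^2W)$ terms bound edges whose cost lies within~$4W$ of the single threshold $2(d_v-M)$ (the paper's ``property~2''), separately for in-pertinent SPT edges, in-pertinent non-SPT edges, and in-extraneous edges. The four trailing $\Poi(n^2W)$ terms come from a separate ``flexibility'' condition (analogous to your (a) and~(b), plus a condition ensuring that perturbing $c[u,v]$ within its bucket does not change the SPT structure), which is what actually delivers claim~1. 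You will need both the preactive and flexible decompositions; your properties (a)--(d) conflate them and your condition~(d) is not something you can check without already knowing claim~2 holds.
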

\begin{proof}
The priority queue $Q$ is for processing the in-pertinent edges;
the algorithm will insert all in-pertinent edges and up to one additional incoming edge for each vertex, which we call the in-extraneous edges.
Recall that Theorem~\ref{T-PQ}(i) characterizes the in-extraneous edges.

We say that an edge $(u,v)$ is \textit{preactive\/} if it satisfies
\begin{enumerate}
\item \label{qQ-no-up} there is no $u'\neq u$ for which $\lfloor c[u,v]/W\rfloor = \lfloor c[u',v]/W\rfloor$, and
\item \label{qQ-in-Q} there is no $u'\neq u$ for which $2\,(d_v-M)\leq c[u',v] \leq c[u,v]$.
\end{enumerate}
Property~\ref{qQ-in-Q} ensures that edge $(u,v)$ is either in-pertinent ($c[u,v]<2\,(d_v-M)$) or else it is the unique lightest non-in-pertinent edge, in which case it is in-extraneous.  Either way, property~\ref{qQ-in-Q} implies that edge $(u,v)$ is enqueued in $Q$.  Property~\ref{qQ-no-up} ensures that vertex $v$ has no other incoming edge that might fall into the same bucket of $Q$ that $(u,v)$ does --- this ensures that $(u,v)$ is enqueued in $Q$ by the time that its top-level bucket becomes active.

Suppose we are given the graph structure of the SPT and the
values $\lfloor(d_u+c[u,v])/W\rfloor$ for each edge $(u,v)$, but
not the fractional part of $(d_u+c[u,v])/W$ for any edge $(u,v)$.
By induction $\lfloor d_v/W\rfloor$ is determined for each vertex $v$, so
$\lfloor M/W\rfloor$ is determined.
Consequently, each edge cost $c[u,v]$ is determined to within an open interval of width $2W$.
Given this partial information, it may be that no matter what the fractional parts
of the various edge costs are, edge $(u,v)$ is guaranteed to satisfy properties 1 and/or 2,
in which case we say that $(u,v)$ \textit{robustly\/} satisfies these properties.
We say that $(u,v)$ is \textit{robustly preactive\/} if it robustly satisfies both properties.
We now argue that most edges enqueued in $Q$ are robustly preactive.

If an in-pertinent SPT edge $(u,v)$ does not robustly satisfy
property~2, then $|2\,(d_v-M)-c[u,v]|\leq 4W$, and since
$d_v=d_u+c[u,v]$, it follows that $|2\,(M-d_u)-c[u,v]|\leq 4W$.
Recalling the SPT exploration process, we see that for each
vertex~$u$, the number of edges $(u,v)$ satisfying this constraint is
dominated by a $\Poi(8 n W)$ random variable $Y_u$, where the $Y_u$'s
are independent.  So the number of in-pertinent SPT edges which do not
robustly satisfy property~2 is dominated by $\Poi(8 n^2 W)$.

To analyze the other types of edges that may be enqueued in $Q$, we
condition on the shortest path tree and the values of the
distances to all the vertices, and take the cost of each non-SPT edge
$(u,v)$ to be $c[u,v]=d[v]-d[u]+\EXP(1)$ independent of the other edge
costs.

If $(u,v)$ is an in-pertinent non-SPT edge, then the indicator random
variable for the event that $|2\,(d_v-M)-c[u,v]|\leq 4W$ is dominated by
$\Poi(8W)$, independent of the other edge costs.  So the number of
non-SPT in-pertinent edges which do not robustly satisfy property~2 is
also dominated by $\Poi(8 n^2 W)$.  Because of the independence, the
total number of in-pertinent edges which do not robustly satisfy
property~2 is dominated by $\Poi(16 n^2 W)$.

If an in-extraneous edge $(u,v)$ does not robustly satisfy
property~2, it must be that $d_v\geq M$, $2\,(d_v-M)\leq c[u,v]$, there is
no $u'$ for which $2\,(d_v-M)\leq c[u',v]<c[u,v]$ (otherwise it would
not be in-extraneous), and there is some $u'\neq u$ for which
$c[u',v]\leq c[u,v]+4 W$.
We distinguish two cases depending on whether the
SPT edge leading to $v$ is in-pertinent or out-pertinent.  The edge
costs of those edges leading to $v$ but heavier than $2\,(d_v-M)$
is dominated by a Poisson process (in the in-pertinent case), or a Poisson
process with one extra point (in the out-pertinent case).  Whatever the
first point of the Poisson process, the number of additional points which
are at most $4W$ larger is dominated by $\Poi(4 n W)$.  In the out-pertinent
case, there are at most $\Poi(8 n W)$ Poisson process points within distance~$4W$
of the extra point.  Either way, the number of in-extraneous edges leading
to $v$ which do not robustly satisfy property~2 is dominated by a $\Poi(4 n W)+\Poi(8 n W)$
random variable~$Y_v$, where the $Y_v$'s are independent.  Thus there are at most
$\Poi(4 n^2 W)+\Poi(8 n^2 W)$ in-extraneous edges which fail to robustly satisfy property~2.

Next we bound the number of edges enqueued in $Q$ which do not
robustly satisfy property~1.

Suppose we are given just the SPT and the values
$\lfloor(d_x+c[x,y])/W\rfloor$ for each edge $(x,y)$.  If
$|c[u,v]-c[u',v]|\geq 5W$, then these edge costs are sufficiently far
apart that we can deduce that $\lfloor c[u,v]/W\rfloor \neq \lfloor
c[u',v]/W\rfloor$.  We say that two edge costs $c[u,v]$ and $c[u',v]$
are a ``close encounter'' if they differ by less than $5W$, and there
is no vertex $u''$ for which $c[u'',v]$ lies between $c[u,v]$ and
$c[u',v]$.  If edge $(u,v)$ is not in a close encounter, then it
robustly satisfies property~1, and each close encounter causes out at most
two edges to fail to robustly satisfy property~1.

As above, we condition on the SPT and the distances.

If $(u,v)$ is an SPT edge, then the number of edges $(u',v)$ for which
$u'\neq u$ and $|c[u',v]-c[u,v]|\leq 5W$ is dominated by $\Poi(10nW)$,
independent of the other SPT edges, so the number of SPT edges
enqueued in $Q$ which fail to robustly satisfy property~1 is dominated
by $\Poi(10 n^2 W)$.

Conditional on the SPT and distances, the costs of the non-SPT edges
leading to $v$ is dominated by a Poisson process independent of the
other vertices.  The number of the Poisson
points within distance $5W$ of the cost of the SPT edge leading to $v$
is dominated by $\Poi(10 n W)$.  The number of these points that are
within distance $5W$ of the first point beyond $2\,(d_v-M)$ is also at
most $\Poi(10 n W)$.  Next we bound the close encounters between the
Poisson points up to distance $2\,(d_v-M)$.  Scanning the Poisson
process left-to-right starting from $0$, when a point is encountered,
the scan becomes ``active'', but becomes ``inactive'' if no additional
point is encountered with distance $5W$.  While the scan is active,
each additional point scanned becomes a close encounter, and resets
the clock during which the scan is active.  The number of times that
the scan becomes active is dominated by $\Poi(n\times 2\,(d_v-M))$, and
for each time that the scan is active, the number of close encounters
is dominated by a geometric random variable (starting at $0$) with success probability $e^{-5W}$, where
all the Poisson and geometric random variables are independent of one
another.  Recall that the ensemble of random variables $\{d_v-M\}_{v\in V}$ is
dominated by a collection of $\lfloor n/2\rfloor$ i.i.d.\ $\EXP(1)/\lceil n/2\rceil$ random
variables, and that $\Poi(\alpha\EXP(1))$ is a geometric random
variable (with minimum value 0) with mean $\alpha$.  So the number of
non-SPT edges enqueued in $Q$ which do not robustly satisfy property~1
is dominated by $\lfloor n/2\rfloor$ geometrics of geometrics, and is unlikely to be
much larger than $\Theta(n^2 W)$.

Next we argue that the edges enqueued in $Q$ which are robustly preactive have approximately random costs modulo $W$, in a sense which we now make precise.

For each edge $(u,v)$, we can take its cost to be $-d_u$ plus the first point larger than $d_u$ of a Poisson point process on $[0,\infty)$.  The edge cost is not known until $d_u$ is known, but if we explore the SPT tree from the root, we do not need to know $(u,v)$'s cost until $d_u$ is known.  Using this realization of the edge costs, we say that the edge $(u,v)$ is \textit{flexible\/} if
\\[6pt]
\noindent\hspace*{12pt} Letting $i=\lfloor (d_u+c[u,v])/W\rfloor$, we have
\begin{enumerate}
\item $d_u<i W$,
\item $d_u+c[u,v]$ is the only point of edge $(u,v)$'s Poisson point process in the interval $[iW,(i+1)W)$, and
\item Either
\begin{enumerate}
\item
 $d_v < i W$, or
\item there is no vertex $w$ for which edge $(v,w)$'s Poisson point process has a point in the interval $[i W,(i+1)W)$.
\end{enumerate}
\end{enumerate}
Conditional upon an edge $(u,v)$ being flexible, the value of $d_u+c[u,v]$ is uniformly distributed in the interval
$[i W,(i+1)W)$.
Furthermore, changing the cost of a flexible edge $(u,v)$ while preserving
 $\lfloor (d_u+c[u,v])/W\rfloor$
 has no effect on either the graph structure of the SPT or whether or not other edges are flexible.

Edges which fail property~1 of being flexible have cost at most $W$, so the number of them is dominated by $\Poi(n^2 W)$.  The number of edges which fail property~2 is also dominated by $\Poi(n^2 W)$.  SPT edges of course fail property~3a, but the number of them that fail property~3b is dominated by $\Poi(n^2 W)$.
Some non-SPT edges are also enqueued in~$Q$.  Any edge $(u,v)$ which fails property~3a satisfies $d_v\leq d_u+c[u,v]<d_v+W$.  Conditional on the SPT and distances to each vertex, the number of non-SPT edges which fail property~3a is dominated by $\Poi(n^2 W)$, so this is also true unconditionally.

The keys of the edges which are both robustly preactive and flexible will be the keys $u_1,\dots,u_g$.  This set of keys satisfy properties 1 and 2 of the lemma.  When we tally up the edges enqueued in $Q$ which fail to be robustly preactive or flexible, so total is small, which is property 3 of the lemma.
\end{proof}

\begin{theorem}
  Suppose the forward-backward algorithm's priority queues $P$ and $Q$ are
  implemented as two-level bucket monotone priority queues with
  sub-buckets based on binary heaps, as described Section~\ref{sub:two-level}.
  If $W=1/(n\log n)$ and $B=n$,
  then when the algorithm is run on
  $\G_n(\EXP(1))$, the expected running time for priority queues $P$ and $Q$ is
  $\Theta(n)$, and the running time is $\Theta(n)$ except with probability $\exp(-\Theta(n/\log n))$.
\end{theorem}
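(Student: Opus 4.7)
The plan is to express the total priority-queue work as $O(B)+O(\text{items inserted})+T$, where $T=\sum_{i,j}\log(J_{i,j}!)$ accounts for the binary-heap operations in sub-bucket~$(i,j)$.  Since $B=n$, and since by Theorem~\ref{T-PQ} together with Theorem~\ref{T-pertinent-exp} (and the analogous bound on $|Q|$ that follows from Lemma~\ref{lem:Q-almost-uniform}), the total number of items ever enqueued in either queue is $O(n)$ except with probability $\exp(-\Theta(n))$, the first two terms are $O(n)$ with exponentially small failure probability.  The matching $\Omega(n)$ lower bound is immediate from $B=n$, so the task reduces to showing $T=O(n)$, done separately for $P$ and~$Q$.

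Next I would use Lemmas~\ref{lem:P-almost-uniform} and~\ref{lem:Q-almost-uniform} to partition the keys inserted into a queue into a ``nice'' prefix $u_1,\dots,u_g$, which is uniform modulo~$W$ within its top-level bucket and is already present when that bucket becomes active, and a ``bad'' suffix $r_1,\dots,r_h$.  With the choice $W=1/(n\log n)$, the Poisson-rate bounds on~$h$ given by those lemmas are $\Theta(n^2 W)=\Theta(n/\log n)$, so $h=O(n/\log n)$ except with probability $\exp(-\Theta(n/\log n))$.  Splitting $J_{i,j}=J^{\text{nice}}_{i,j}+J^{\text{bad}}_{i,j}$ and using the crude inequality $(a+b)!\leq a!\,(a+b)^b$, I would get
\[
T\;\leq\;T^{\text{nice}}+h\log n\,,\qquad T^{\text{nice}}\;:=\;\sum_{i,j}\log(J^{\text{nice}}_{i,j}!)\,,
\]
so on the event $h=O(n/\log n)$ the bad contribution is $O(n)$, and it suffices to show $T^{\text{nice}}=O(n)$.

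For $T^{\text{nice}}$, I would invoke Theorem~\ref{T-randomness-left} on the sequence $u_1/W,\dots,u_g/W$, with $\tau=g$ (bounded because the graph has at most $n^2$ edges); since every nice item is in its top-level bucket when that bucket becomes active, the algorithm's sub-bucket count $b_i$ satisfies $b_i\geq J^{\text{nice}}_i$, so the theorem yields $\E[5^{-g}\prod_{i,j}J^{\text{nice}}_{i,j}!]\leq 1$.  A Markov bound then gives, for any constants $G,c>0$,
\[
\Pr\bigl[\,T^{\text{nice}}\geq(\log 5)\,G+cn\ \text{and}\ g\leq G\,\bigr]\;\leq\;\exp(-cn)\,,
\]
which combined with $g=O(n)$ w.h.p.\ shows $T^{\text{nice}}=O(n)$ except with probability $\exp(-\Theta(n))$.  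For the expected-value bound, Jensen's inequality applied to the supermartingale $5^{-g}\prod J^{\text{nice}}_{i,j}!$ gives $\E[T^{\text{nice}}]\leq(\log 5)\,\E[g]=O(n)$ via Theorem~\ref{T-expected-pertinent}, while $\E[h\log n]=O(n)$.  Union-bounding the failure events across~$P$ and~$Q$ produces the claimed $\exp(-\Theta(n/\log n))$ overall tail.

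The main obstacle is the discrepancy between the algorithm's sub-bucket structure and the idealized setting of Theorem~\ref{T-randomness-left}: the actual $J_{i,j}$ include items inserted into a bucket after it became active, while the theorem only controls the product of factorials of the nice counts $J^{\text{nice}}_{i,j}$.  The inequality $(a+b)!\leq a!\,(a+b)^b$ sidesteps this, absorbing the bad contribution into the additive $h\log n$ term, which is $O(n)$ precisely because $W$ is tuned so that the Poisson rate $n^2 W$ equals $n/\log n$.  The balance between the choices $W=1/(n\log n)$ and $B=n$, which must keep both $B=O(n)$ and $h\log n=O(n)$, is what fixes these parameters and also what pins the failure probability at $\exp(-\Theta(n/\log n))$ rather than $\exp(-\Theta(n))$.
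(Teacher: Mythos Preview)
Your proposal follows the paper's approach closely: decompose the keys into ``nice'' and ``bad'' via Lemmas~\ref{lem:P-almost-uniform} and~\ref{lem:Q-almost-uniform}, control the nice part through Theorem~\ref{T-randomness-left} and a Chernoff--Markov bound, and absorb the bad keys separately.  Your inequality $(a+b)!\le a!\,(a+b)^b$ is a clean alternative to the paper's multinomial bound $(a+b)!\le a!\,b!\,2^{a+b}$, and your Jensen argument for the expectation is somewhat tidier than the paper's ``tail bound plus crude $O(n^2\log n)$ worst case''.

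There is, however, one genuine gap.  When you invoke Theorem~\ref{T-randomness-left} and assert that ``the algorithm's sub-bucket count $b_i$ satisfies $b_i\ge J^{\text{nice}}_i$'', this fails for the \emph{last} top-level bucket, which by construction is never sub-bucketed: every key $\ge(B-1)W$ is placed there into a single binary heap.  If that bucket becomes active with many items, its heap work is not controlled by your $T^{\text{nice}}$ bound (the theorem's hypothesis $b_i\ge J_i$ is violated), nor is it absorbed into the $h\log n$ term (those items can perfectly well be ``nice'' in the sense of the lemmas).  The paper closes this hole at the outset using Janson's tail estimate for $\max_v d_v$: since $BW=1/\log n$ while $\max_v d_v\approx 2(\log n)/n$, the last bucket becomes active only with probability $\exp(-\Theta(n/\log n))$, and on the complementary event it contributes only $O(1)$ insertion cost per item.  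This is in fact a second, independent source of the $\exp(-\Theta(n/\log n))$ failure probability in the theorem, alongside the bound on~$h$ that you correctly identified.  Adding this one appeal to Janson at the start fixes your argument.
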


\begin{proof}
Janson \cite[eq.~2.8]{Janson99} proved that for $\G_n(\EXP(1))$,
\[ \Pr\left[\max_v d_v \geq 2\frac{\log n + \log \log n}{n} + \frac{t}{n}\right] \leq \exp[-t(1-1/\log n) + O(1)]\,. \]
With $B=n$ and $W=1/(n\log n)$, the relevant $t$ is $\Theta(n/\log n)$, and we see that the probability that the last bucket becomes active is at most $\exp(-\Theta(n/\log n))$.
Of course if no items are extracted from the last bucket, then we only need to pay for the edge insertion costs into the last bucket's linked list.

Observe that
\begin{align*}
\binom{a_1+\cdots+a_\ell}{a_1,\ldots,a_\ell} &\leq \ell^{a_1+\cdots+a_\ell} \\
(a_1+\cdots+a_\ell)! &\leq a_1!\times\cdots\times a_\ell! \times \ell^{a_1+\cdots+a_\ell}\,.
\end{align*}
\newcommand{\Rmax}{R_{\max}}
This inequality allows us to handle the ``good keys'', i.e., those which are in their top-level bucket by the time it becomes active and whose value modulo $W$ is random (those keys satisfying conditions~1 and~2 in Lemmas~\ref{lem:P-almost-uniform} and~\ref{lem:Q-almost-uniform}), separately from the small number of other keys.  We let $J_i$ denote the number of good keys that fall into bucket~$i$, let $b_i\geq J_i$ denote the number of keys in bucket~$i$ at the time that it becomes active, and let $R_i$ denote the number of remaining keys inserted into bucket~$i$ which are not good in the above sense.  Let $J_{i,j}$ (resp.\ $R_{i,j}$) denote the number of good (resp.\ bad) keys that fall into sub-bucket~$j$ of bucket~$i$.  Let $J=\sum_i J_i$ and $R=\sum_i R_i$, and let $\Rmax=\Theta(n/\log n)$ be a value that we will discuss later.
We have
\begin{align*}
\prod_{j=0}^{b_i-1} (J_{i,j}+R_{i,j})!
 &\leq 2^{J_i+R_i}\times \prod_{j=0}^{b_i-1}J_{i,j}! \times \prod_{j=0}^{b_i-1}R_{i,j}! \\
\prod_i\prod_{j=0}^{b_i-1} (J_{i,j}+R_{i,j})! \times 1_{R\leq \Rmax}
 &\leq 2^{J+R}\times \prod_i \prod_{j=0}^{b_i-1}J_{i,j}! \times R!\times 1_{R\leq \Rmax}\\
\E\left[10^{-J} 2^{-R}\prod_i\prod_{j=0}^{b_i-1} (J_{i,j}+R_{i,j})!\times 1_{R\leq \Rmax}\right]
 &\leq \E\left[5^{-J}\times \prod_i \prod_{j=0}^{b_i-1}J_{i,j}! \times \Rmax!\right] \\
 &= \E\left[5^{-J}\times \prod_i \prod_{j=0}^{b_i-1}J_{i,j}!\right] \times \Rmax! \\
 &\leq \Rmax!\,.
\end{align*}

Recall now that the time required to perform the \extractmin\ binary heap operations for sub-bucket~$j$ of bucket~$i$ is proportional to $(J_{i,j}+R_{i,j})+\log (J_{i,j}+R_{i,j})!$.  Equivalently, the time is proportional to $(1+\log 10)J_{i,j}+(1+\log 2)R_{i,j} + Y_{i,j}$ where
\[Y_{i,j} = - J_{i,j} \log 10 - R_{i,j} \log 2 + \log (J_{i,j}+R_{i,j})!\,.\]
The total time for the priority queue operations is proportional to $\text{const}\times(J+R)+Y$, where
\[Y=\sum_i \sum_{j=0}^{b_i-1} Y_{i,j}\,.\]
We have already proved in Theorem~\ref{T-pertinent-exp} that, with
suitable constants, the number of edges $J+R$ enqueued in the queues
is $O(n)$ except with probability $\exp[-\Theta(n)]$.
We are left to bound the probability that $Y$ is large:
\begin{align*}
\Pr[Y\geq t]
 &\leq \Pr\big[\ee^Y \times 1_{R\leq\Rmax} \geq \ee^t\big] + \Pr[R>\Rmax]\\
 &\leq \ee^{-t} \,\E[\ee^Y \times 1_{R\leq\Rmax}] + \Pr[R>\Rmax]\\
 &\leq \ee^{-t} \,\E\left[10^{-J} 2^{-R} \prod_i\prod_{j=0}^{b_i-1} (J_{i,j}+R_{i,j})! \times 1_{R\leq\Rmax}\right]  + \Pr[R>\Rmax]\\
 &\leq \ee^{-t} \,\Rmax!  + \Pr[R>\Rmax]\,.
\end{align*}
By Lemma~\ref{lem:P-almost-uniform} (for queue~$P$) and~\ref{lem:Q-almost-uniform} (for queue~$Q$), we can take $\Rmax=\Theta(n/\log n)$ and have $\Pr[R>\Rmax]\leq\exp[-\Theta(n/\log n)]$.  Then $\Rmax!=\exp[\Theta(n)]$, so we can take $t$ be a sufficiently large constant times $n$ to find $\Pr[Y\geq t]\leq\exp[-\Theta(n/\log n)]$.

In the exponentially unlikely event that the running time exceeds
$O(n)$, the conditional expected running time can at worst be
$O(n^2\log n)$, so this unlikely event contributes negligibly to the
total expected running time.
\end{proof}

\section{Lower bound for forward-only algorithms}\label{S-lower}

In this section we show that any algorithm which is given only the
outgoing edges from each vertex in sorted order must take at least
$\Omega(n\log n)$ expected time to produce the shortest path tree for
the directed complete graph with random edge weights.  We actually prove a
stronger lower bound, namely we show that even if an algorithm is
presented the shortest path tree, just the task of verifying that the
shortest path tree is correct requires the examination of
$\Omega(n\log n)$ edges on average.  This lower bound holds even for
\textit{non-deterministic\/} verification procedures, i.e.,
verification procedures that are allowed to guess which queries to
ask.  These lower bounds complement our algorithmic results, by
showing that having both outgoing edges and incoming edges in sorted
order genuinely reduces the average-case complexity of finding the
shortest path tree.

The verification algorithms we consider are allowed to make two types of queries:
\begin{enumerate}
\item $query(u,v)$ -- What is the weight of the edge $(u,v)$?
\item $query_{F}(u,i)$ -- What is the $i$-th \emph{outgoing\/} edge of~$u$, according to non-decreasing order of weight, and what is its weight?
\end{enumerate}
We refer to these queries as \emph{edge queries\/} and \emph{forward queries}, respectively. Edge queries can be answered in constant time if a weighted adjacency matrix of the graph is available. Forward queries can be answered in constant time if the outgoing adjacency lists of each vertex are sorted in non-decreasing order of weight and are stored in arrays.

We say that an algorithm \emph{probed\/} an edge $(u,v)$ if it issued a direct $query(u,v)$, or if the edge $(u,v)$ and its weight were returned by $query_{F}(u,i)$, for some~$i$.

\begin{lemma} \label{lem:query}
 Any verification algorithm that only uses edge queries $query(u,v)$ must query all edges $(u,v)$ for which $d[u]<d[v]$ before accepting a SPT. In particular, if all distances are distinct, any such algorithm must query at least $\binom{n}{2}=\Omega(n^2)$ edges.
\end{lemma}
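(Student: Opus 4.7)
The plan is a standard adversary argument. Suppose toward a contradiction that some verification algorithm, using only edge queries, correctly accepts a shortest path tree $T$ of some weighted graph $G$ without having probed a particular edge $(u,v)$ with $d[u]<d[v]$. I would exhibit a companion weighted graph $G'$ that is indistinguishable from $G$ on every queried edge, in which $T$ is demonstrably not a shortest path tree. Since the algorithm's output depends only on its query/answer history, it would also accept $T$ on $G'$, contradicting its correctness on $G'$.

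The graph $G'$ will agree with $G$ on every edge except $(u,v)$, and I split into two cases. If $(u,v)$ is not a tree edge of $T$, I set $c'[u,v]$ to any value strictly less than $d[v]-d[u]$ (e.g.\ $0$, or an arbitrarily small positive value if the weights must be positive). Then $T$ and its tree-distances $d$ are unaffected by the modification, but in $G'$ the path $s\SP u\to v$ has length $d[u]+c'[u,v]<d[v]$, so $T$ is not a SPT of $G'$. If instead $(u,v)$ is a tree edge, then necessarily $u=p[v]$; here I set $c'[u,v]$ larger than the total weight of all edges of $G$. The new tree-distance $d'[v]=d[u]+c'[u,v]$ then exceeds the length of any path from $s$ to $v$ in $G'$ that avoids $(u,v)$, and such a path exists in the complete graph (for instance $s\to w\to v$ for any $w\neq u$, or the direct edge $(s,v)$ when $u\neq s$, whose weight is unchanged). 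Hence $T$ is again not a SPT of $G'$. Either way we obtain the required contradiction, so every edge $(u,v)$ with $d[u]<d[v]$ must be queried.

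For the count, when the distances $d[v]$ are all distinct, order the vertices as $v_1,\ldots,v_n$ with $d[v_1]<d[v_2]<\cdots<d[v_n]$. For each pair $1\le i<j\le n$ the edge $(v_i,v_j)$ satisfies $d[v_i]<d[v_j]$ and so must be queried, giving at least $\binom{n}{2}=\Omega(n^2)$ distinct queries.

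The only genuinely delicate point is the tree-edge case, since modifying one tree-edge weight alters the tree-distance function itself; the fix is to make the modification drastic enough that the direct edge from $s$ to $v$ (or a two-hop alternative if $u=s$) becomes shorter than the inflated tree path, which is always possible in a complete graph.
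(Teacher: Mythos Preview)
Your proof is correct and follows the same adversary approach as the paper: modify the weight of an unqueried edge $(u,v)$ to produce a cost function $c'$ under which $T$ is no longer a SPT while all query answers are unchanged. For non-tree edges your argument coincides with the paper's (set $c'[u,v]<d[v]-d[u]$). For tree edges you are actually more careful than the paper: the paper's terse proof (``it is possible that $c[u,v]<d[v]-d[u]$, in which case $T$ is not a SPT'') does not literally apply when $(u,v)$ is a tree edge, since then $c[u,v]=d[v]-d[u]$ by definition of~$d$, and \emph{decreasing} a tree-edge weight need not destroy the SPT property. Your fix---inflate $c'[u,v]$ beyond the total weight of $G$ and exhibit an alternative $s$--$v$ path in the complete graph---handles this case cleanly (for $n\ge 3$; the $n=2$ tree-edge case is degenerate). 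This extra care affects only $n-1$ edges and hence not the $\Omega(n^2)$ count, but it is needed for the ``all edges'' statement as written.
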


\begin{proof}
If the algorithm fails to query an edge $(u,v)$ for which $d[u]<d[v]$,
then it is possible that $c[u,v]<d[v]-d[u]$, in which case $T$ is not a SPT.
\end{proof}

Suppose that $T$ is a SPT corresponding to the cost function $c:E\to
\reals^+$.  To show that a given set of queries is not a sufficient to
conclude that~$T$ is indeed a SPT, we need to provide an alternative
cost function $c':E\to\reals^+$ under which the answers to all queries
made by the algorithm are the same, yet~$T$ is not a SPT under~$c'$.
In the proof of Lemma~\ref{lem:query}, the alternative cost function~$c'$
is the same as the cost function~$c$ except at some edge $(u,v)$ for
which $d[u]<d[v]$ and which the algorithm failed to query.

Next we consider verification algorithms which are allowed to make
forward queries.  Recall that our lower bounds are in fact for
\emph{non-deterministic\/} verification algorithms, i.e., algorithms
that are allowed to guess which queries to ask.  If the verification
procedure makes both edge queries and forward queries, then we can
replace each edge query $query(u,v)$ by a forward query
$query_F(u,i)$, where $i$ is the index of~$v$ in the sorted outgoing
adjacency list of~$u$, as the non-deterministic algorithm may guess
this index.  This clearly gives the verification algorithm more
information.

Let $v_1,\ldots,v_n$ be the vertices in non-decreasing order of
distance from~$s$ in~$T$, i.e., $0=d[v_1]\le d[v_2]\le \cdots \le
d[v_n]$, where $v_1=s$. We let $v_{i,j}$ be the endpoint of the $j$-th
outgoing edge of~$v_i$, in sorted order of weight, i.e.,
$c[v_i,v_{i,1}]\le c[v_i,v_{i,2}]\le\cdots\le c[v_i,v_{i,n-1}]$.
In other words, the answer to $query_F(v_i,j)$ is the pair $(v_{i,j},c[v_i,v_{i,j}])$.

\begin{lemma} \label{lem:one-of-two}
  Suppose that $T$ is a SPT with respect to a directed cost function $c:E\to\reals^+$.
  Let $v_i\in V$.  If $c[v_i,v_{i,r-1}]<d[v_{i,s}]-d[v_i]$, then any verification
  algorithm that only uses forward queries must either
  $query_F(v_i,r)$ or $query_F(v_i,s)$ before accepting~$T$.
\end{lemma}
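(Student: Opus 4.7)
The plan is to argue by contradiction. Suppose the verification algorithm accepts $T$ without ever issuing either of the queries $query_F(v_i,r)$ or $query_F(v_i,s)$; I will exhibit an alternative cost function $c'$ that returns the same answer to every forward query the algorithm actually made but under which $T$ is not a shortest-paths tree, contradicting the requirement that the algorithm never err.

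Concretely, first I would pick a real number $\beta$ in the open interval $(c[v_i,v_{i,r-1}],d[v_{i,s}]-d[v_i])$, which is nonempty by the hypothesis, and define $c'$ to agree with $c$ on every edge except
\[
c'[v_i,v_{i,s}] \;=\; \beta, \qquad c'[v_i,v_{i,r}] \;=\; c[v_i,v_{i,s}].
\]
A direct sorting check then establishes that under $c'$ the sorted outgoing adjacency list of $v_i$ differs from its list under $c$ in exactly two positions: position $r$ now holds the pair $(v_{i,s},\beta)$, since $\beta$ lies strictly between the weights at original positions $r-1$ and $r$ (using that $c[v_i,v_{i,r}] \geq d[v_{i,s}]-d[v_i] > \beta$); and position $s$ now holds the pair $(v_{i,r},c[v_i,v_{i,s}])$, since that reassigned weight lies between the weights originally at positions $s-1$ and $s+1$. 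All other positions of $v_i$'s list are unchanged, and every other vertex's outgoing list is identical under $c$ and $c'$, so every forward query actually issued by the algorithm returns the same answer under $c'$ as under $c$.

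Finally, I would prove that $T$ is not a SPT under $c'$. By construction $c'[v_i,v_{i,s}] = \beta < d[v_{i,s}]-d[v_i]$, and only outgoing edges of $v_i$ were changed, so the tentative tree distance $d'[v_i]$ under $c'$ still equals $d[v_i]$. When $(v_i,v_{i,s})$ is not a tree edge of $T$, the decrease at $(v_i,v_{i,s})$ does not affect any tree distance and the reweighting of $(v_i,v_{i,r})$ can only weakly increase distances inside the subtree of $v_{i,r}$, so $d'[v_{i,s}] \geq d[v_{i,s}]$; therefore $d'[v_i] + c'[v_i,v_{i,s}] = d[v_i] + \beta < d[v_{i,s}] \leq d'[v_{i,s}]$, exhibiting an SPT violation at the edge $(v_i,v_{i,s})$.

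The subtler case, which I expect to be the main obstacle, is when $(v_i,v_{i,s})$ is itself a tree edge of $T$: then $d'[v_{i,s}]$ also drops to $d[v_i]+\beta$, so the edge-level SPT check at $(v_i,v_{i,s})$ holds with equality under $c'$ and the violation must be produced elsewhere. I plan to handle this by exploiting the remaining freedom to perturb any single edge weight that the algorithm never constrained by a query: specifically, I would further decrease $c'[v_{i,s},y]$ for some vertex $y$ lying outside the subtree of $v_{i,s}$ in $T$, pushing the composite path from the source through $v_i$ and $v_{i,s}$ to $y$ strictly below $d[y] = d'[y]$. This gives the needed violation at the edge $(v_{i,s},y)$ while leaving every actual query answer unchanged, completing the contradiction and hence the proof.
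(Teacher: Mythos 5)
Your construction of $c'$ is exactly the paper's: swap the targets at positions $r$ and $s$, and set the weight of the edge now at position $r$ (going to $v_{i,s}$) to something $<d[v_{i,s}]-d[v_i]$. Your choice of $\beta$ strictly inside the open interval is also anticipated by the remark that follows the paper's proof. Two small slips: first, the inequality ``$c[v_i,v_{i,r}]\geq d[v_{i,s}]-d[v_i]$'' that you use to place $\beta$ is not part of the hypothesis and can fail; second, once $v_{i,r}$ has been evicted to position $s$, what must bound $\beta$ from above so that $v_{i,s}$ lands at position $r$ is $c[v_i,v_{i,r+1}]$, not $c[v_i,v_{i,r}]$. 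Both are repairable by taking $\beta$ in $(c[v_i,v_{i,r-1}],\min(c[v_i,v_{i,r+1}],d[v_{i,s}]-d[v_i]))$, and the paper simply avoids the issue by choosing $\beta=c[v_i,v_{i,r-1}]$.

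The more substantive point is that you are right that the case where $(v_i,v_{i,s})$ is a tree edge is a genuine obstacle to the asserted conclusion ``$T$ is no longer an SPT under $c'$.'' When $(v_i,v_{i,s})\in T$, lowering its weight lowers $d'[v_{i,s}]$ proportionally, the edge-level check at $(v_i,v_{i,s})$ holds with equality, and one can in fact write down concrete cost functions where $T$ remains an SPT under the swapped $c'$; the paper's one-line assertion that ``the shortest path tree for $c$ and $c'$ are different'' does not by itself settle this. However, your proposed repair -- to further decrease $c'[v_{i,s},y]$ for some $y$ outside the subtree of $v_{i,s}$ -- does not close the gap. That repair needs $(v_{i,s},y)$ to be an edge whose weight and sorted position the algorithm is not committed to, but the lemma places no restriction on queries issued from $v_{i,s}$: the adversary may well have issued $query_F(v_{i,s},j)$ for every $j$, in which case any decrease at all to $c'[v_{i,s},y]$ changes the answers to those queries (and in general reshuffles $v_{i,s}$'s entire out-list, possibly shifting the sorted index of many edges). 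So the extra perturbation is not available for free, and the ``subtler case'' you flagged is left open by your argument just as it is by the paper's.
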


\begin{proof}
 If $s\le r-1$, then $c[v_i,v_{i,s}]\le c[v_i,v_{r-1}] <
 d[v_{i,s}]-d[v_i]$, in which case $T$ would not be a shortest path
 tree.  We may assume then that $r\leq s$.  We define a new cost
 function $c':E\to\reals^+$ as follows.  For every edge $(u',v')$
 other than $(v_i,v_{i,r})$ or $(v_i,v_{i,s})$, we let
 $c'[u',v']=c[u',v']$.  Let $c'[v_i,v_{i,s}]=c[v_i,v_{r-1}]$ and
 $c'[v_i,v_r]=c[v_i,v_{i,s}]$. This amounts to swapping the roles of
 $v_{i,r}$ and $v_{i,s}$ followed by setting
 $c'[v_i,v_{i,s}]=c'[v_i,v_{i,r-1}]<d[v_{i,s}]-d[v_i]$.  (Observe
 that this construction of $c'$ would not work in the undirected
 setting.)  The shortest path tree for the cost functions $c$ and
 $c'$ are different, and the costs are the same except on the edges
 $(v_i,v_{i,r})$ and $(v_i,v_{i,s})$, so any verification algorithm
 must query at least one of these two edges.
\end{proof}

In the above proof, we did not have to let $c'[v_i,v_s]=c[u,v_{r-1}]$.  Any choice that satisfies $c'[v_i,v_{i,r-1}]\le c'[v_i,v_s] \le c'[u,v_{r+1}]$ and $c'[v_i,v_{i,s}]<d[v_{i,s}]-d[v_i]$ would do.  Thus, the lower bound holds even if the verification procedure is guaranteed that all edge weights are distinct.

\begin{theorem} \label{thm:lower-bound}
  Any verification algorithm that uses only edge and forward queries
  must perform $(1+o(1)) n \log n$ queries, with high probability,
  when given a SPT of the directed graph $\G_n(\EXP(1))$.
\end{theorem}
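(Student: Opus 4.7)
The plan is to reduce to forward-query-only verifiers, then apply Lemma~\ref{lem:one-of-two} with multiple high-distance target indices to lower bound the number of forward queries made to each vertex, and finally sum the per-vertex bounds.

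First, I would show WLOG that the algorithm uses only forward queries: each edge query $query(u,v)$ can be simulated by a forward query $query_F(u,i)$ where $i$ is the rank of $v$ in $Out[u]$, a value that a non-deterministic verifier may guess, and which reveals strictly more information. So the algorithm is described by index sets $I_i \subseteq \{1,\dots,n-1\}$ of forward queries made to each $v_i$, and we must lower bound $\sum_i|I_i|$. Let $D = d_{v_n}$, let $s^{(k)}_i$ denote the position of the $k$-th most distant vertex $v_{n-k+1}$ in $v_i$'s sorted outgoing list, and set $R^{(k)}_i := |\{r : c[v_i,v_{i,r-1}] < d_{v_{n-k+1}} - d_{v_i}\}|$ with $R^*_i := R^{(1)}_i$. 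By Lemma~\ref{lem:one-of-two}, for every $k$ and every $r \le R^{(k)}_i$, either $r \in I_i$ or $s^{(k)}_i \in I_i$.

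Letting $K$ be the least index with $s^{(K)}_i\notin I_i$ (with $K=\infty$, forcing $|I_i|\ge n-1$, if no such $k$ exists), we have $\{s^{(1)}_i,\dots,s^{(K-1)}_i\}\cup\{1,\dots,R^{(K)}_i\}\subseteq I_i$. Writing $\delta^{(K)}_i := R^*_i - R^{(K)}_i\ge 0$, this gives $|I_i|\ge R^*_i + ((K-1)-\delta^{(K)}_i) - \text{(overlap)}$, and the algorithm may choose $K$ adversarially to maximize $\delta^{(K)}_i - (K-1)$. Using the Davis--Prieditis/Janson characterization of Section~\ref{S-analysis}, $D - d_{v_{n-k+1}} = \sum_{j=1}^{k-1} X_{n-j}$ with $X_{n-j}\sim\EXP(1/(j(n-j)))$, so conditional on the distances $\delta^{(K)}_i$ is stochastically dominated by a Poisson of rate at most $H_{K-1} \le K-1$; a union bound over $K$ together with Poisson tails gives $\max_K(\delta^{(K)}_i-(K-1)) = O(\sqrt{\log n\log\log n})$ per $v_i$ with very high probability. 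The overlap is negligible because $c[v_i,v_n] = (D-d_{v_i})+\EXP(1) = \Theta(1)$ (whenever the edge is non-SPT), much larger than $c_r\sim r/n$ for small $r$, putting each $s^{(k)}_i$ near position $n$ while $R^{(K)}_i = O(\log n)$; a standard first-moment bound over $k<K$ confirms the overlap is $o(1)$ w.h.p.

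Finally, $R^*_i$ equals (up to $\pm 1$) the number of edges scanned by the forward-only verification algorithm at $v_i$, and $\E[\sum_i R^*_i]=(1+o(1))n\log n$ as in Theorem~\ref{T-forward-verify}. Concentration of this sum follows by conditioning on the Janson variables $X_1,\dots,X_{n-1}$: given them, the non-SPT edge weights are independent shifted $\EXP(1)$'s, so $\sum_i R^*_i$ becomes a conditional sum of independent Bernoullis to which a Chernoff bound applies, and the conditional mean is itself a Lipschitz function of the $X_j$'s, which concentrate via standard exponential-tail bounds. Combining, $\sum_i|I_i|\ge\sum_iR^*_i - o(n\log n) = (1+o(1))n\log n$ with high probability, which is the theorem. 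The main obstacle is making the per-$i$ control of $\max_K(\delta^{(K)}_i-(K-1))$ rigorous uniformly in $i$: $\delta^{(K)}_i$ couples the edge weights out of $v_i$ with the global distances $d_{v_{n-k+1}}$ through the Janson process, and carefully managing this coupling (rather than just using a crude union bound that loses constants) is what is needed to pin the leading constant down to exactly $1$.
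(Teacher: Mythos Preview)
Your approach is workable in spirit but takes a substantially more circuitous route than the paper, and the ``main obstacle'' you flag at the end is one the paper sidesteps entirely.

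The paper's argument is simply this: for a \emph{fixed} $k$, Lemma~\ref{lem:one-of-two} implies that either all of $\{1,\dots,r_{i,k}\}$ are queried at $v_i$, or all $k$ (or $k+1$) positions corresponding to the top-$k$ farthest vertices are queried; hence $|I_i|\ge\min(r_{i,k},k)$. One then sets $k=\lceil\log n\rceil$, so that automatically $k=(1+o(1))\log n$, and uses the Davis--Prieditis/Janson picture to show $d_{v_{n-k}}-d_{v_i}=(1+o(1))(\log n)/n$ for almost all $i$ (since $d_{v_n}-d_{v_{n-k}}=O((\log\log n)/n)$), whence $r_{i,k}=(1+o(1))\log n$ for almost all $i$ by a binomial concentration bound. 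Summing gives the theorem. There is no adversarial $K$, no $\delta^{(K)}_i$, no overlap term, and no need to concentrate $\sum_i R^*_i$ separately.

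By contrast, you let the verifier choose $K$ adversarially and then fight to control $\max_K(\delta^{(K)}_i-(K-1))$ uniformly in $i$; this is precisely what fixing $k$ upfront avoids. Two further remarks: your overlap argument is misreasoned (the edge $(v_i,v_{n-k+1})$ has weight $\EXP(1)+o(1)$, which is \emph{not} $\Theta(1)$ and whose rank is roughly uniform, not ``near position $n$''), but the overlap is in fact zero deterministically, since the SPT property forces $c[v_i,v_{n-k+1}]\ge d_{v_{n-k+1}}-d_{v_i}>d_{v_{n-K+1}}-d_{v_i}$ for $k<K$, placing $s^{(k)}_i$ strictly above $R^{(K)}_i$. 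And your concentration of $\sum_i R^*_i$, while plausible, is extra work the paper never needs.
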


\begin{proof}
  For $1\leq i\leq n$ and $1\leq k<n$, we let $r_{i,k}$ be the minimal
  index for which \[c[v_i,v_{i,r_{i,k}}]\ge d[v_{n-k}]-d[v_i]\,.\] If
  there is some value of $r\leq r_{i,k}$ for which $query_F(v_i,r)$ is
  not queried, and some value of $s$ for which $d[v_{i,s}]\geq
  d[v_{n-k}]$ and $query_F(v_i,s)$ is not queried, then by
  Lemma~\ref{lem:one-of-two} a verification procedure cannot be
  certain that it has the correct SPT.  So a verification procedure
  must either $query_F(v_i,r)$ for each $r\leq r_{i,k}$ or $query_F(v_i,s)$
  for each $s$ for which $d[v_{i,s}]\geq d[v_{n-k}]$ (or both).
  Thus there must be at least $\min(r_{i,k},k)$ forward queries from vertex $v_i$.

  By the analyses of Davis and Prieditis \cite{DaPr93} and Janson
  \cite{Janson99}, we know that with high probability
  $d[v_n]=(2+o(1))(\log n)/n$, and that with high probability, for
  almost all vertices~$v_i$, $d[v_i]=(1+o(1))(\log n)/n$.  At this
  point we set $k=\lceil\log n\rceil$.  By the characterization of the
  distances, we have $\E[d[v_n]-d[v_{n-k}]]=(1+o(1))(\log\log n)/n$,
  so $d[v_{n-k}]=(2+o(1))(\log n)/n$ with high probability.

  For any vertex $u$, the number of outgoing edges $(u,w)$ with weight
  at most $(1+o(1))(\log n)/n$ is a binomial random variable with mean
  $(1+o(1))\log n$, so the actual number will be concentrated about
  the mean.  For a suitable choice of the $o(1)$ term, with high
  probability this will be less than $d[v_{n-k}]-d[u]$.  Thus, with high probability,
  for almost all vertices~$v_i$, we have $\min(r_{i,k},k)=(1+o(1))\log n$, and so
  the total number of forward queries is with high probability at least $(1+o(1)) n \log n$.
\end{proof}

Observe that we were able to make these alternative cost functions $c'$ for verification algorithms that only use edge and forward queries
since, in some sense,
there is no interaction between the forward queries made on different vertices.
When both forward-queries and backward-queries are allowed, it becomes harder to make on alternative cost function $c'$ for which the verification algorithm fails, which should not be surprising in view of the fact that we have shown that on average $O(n)$ forward-queries and backward-queries suffice to find the shortest-path tree.

\section{Concluding remarks and open problems}\label{S-concl}

We presented a new forward-backward single-source shortest paths algorithm that demonstrates the usefulness of backward scans. We used our algorithm to solve the SSSP problem in the complete graph with exponential edges weights in $O(n)$ time, with very high probability, which is clearly optimal.
As mentioned, we hope that ideas from our forward-backward algorithm may help speed up shortest paths algorithms not only in theory but also in practice.

We presented a probabilistic analysis of the forward-backward algorithm for exponential edge weights, as well as for Weibull edges weights, i.e., edge weights of the form $\EXP(1)^s$, when $0<s\le 1$. We conjecture that the algorithm runs in $O(n)$ time also when $s>1$.

Finally, an interesting open problem is whether backward scans may also yield improved SSSP or APSP algorithms in the more general \emph{end-point independent\/} model.

\makeatletter
\def\@rst #1 #2other{#1}
\newcommand\MR[1]{\relax\ifhmode\unskip\spacefactor3000 \space\fi
  \MRhref{\expandafter\@rst #1 other}{#1}}
\newcommand\MRhref[2]{\href{http://www.ams.org.offcampus.lib.washington.edu/mathscinet-getitem?mr=#1}{MR#1}}
\makeatother

\phantomsection
\pdfbookmark[1]{References}{bib}
\bibliographystyle{hmralphaabbrv}
\bibliography{mr}

\end{document}